\newcommand{\hide}[1]{}
\newcommand{\mComment}[1]{\LComment{#1}}
\newcommand{\mlComment}[1]{\Comment{#1}}
\newcommand{\lineComment}[1]{\LComment{ #1}}
\newcommand{\rightComment}[1]{\Comment{ #1}}
\begin{document}

\newtheorem{theorem}{Theorem}
\newtheorem{observation}[theorem]{Observation}
\newtheorem{corollary}[theorem]{Corollary}
\newtheorem{lemma}[theorem]{Lemma}
\newtheorem*{lemma*}{Lemma}
\newcounter{factc}
\newtheorem{fact}[factc]{Fact}
\newtheorem*{theorem*}{Theorem}
\newtheorem{definition}{Definition}

\theoremstyle{remark}
\newtheorem{note}{Note}
\newtheorem*{problem}{Problem}

\title{Improved Approximation Bounds for Minimum Weight Cycle in the CONGEST Model\footnote{Authors' affiliation: Department of Computer Science, The University of Texas at Austin, Austin, TX, USA; email:  {\tt vigneshm@cs.utexas.edu, vlr@cs.utexas.edu}. This work was supported in part by NSF grant CCF-2008241.}
\author {Vignesh Manoharan  and Vijaya Ramachandran
}
}

\maketitle

\begin{abstract}

    Minimum Weight Cycle (MWC) is the problem of finding a simple cycle of minimum weight in a graph $G=(V,E)$. This is a fundamental graph problem with classical sequential algorithms that run in $\tilde{O}(n^3)$ and $\tilde{O}(mn)$ time\footnote[2]{\label{fn:tilde}We use $\tilde{O}$, $\tilde{\Omega}$ and $\tilde{\Theta}$  to absorb $polylog (n)$ factors.} where 
 $n=|V|$ and $m=|E|$. In recent years this problem has received significant attention in the context of fine-grained sequential complexity~\cite{williams2010subcubic,agarwal2018finegrained} as well as in the design of faster sequential approximation algorithms~\cite{kadria2023improved,kadria2022algorithmic,chechik2021mwc,harbuzova2024girth}, though not much is known in the distributed CONGEST model.
      
     We present sublinear-round approximation algorithms for computing MWC in directed graphs, and weighted graphs. Our algorithms use a variety of techniques in non-trivial ways, such as in our approximate directed unweighted MWC algorithm that efficiently computes BFS from all vertices restricted to certain implicitly computed neighborhoods in sublinear rounds, and in our weighted approximation algorithms that use unweighted MWC algorithms on scaled graphs combined with a fast and streamlined method for computing multiple source approximate SSSP. We present $\tilde{\Omega}(\sqrt{n})$ lower bounds for arbitrary constant factor approximation of MWC in directed graphs and undirected weighted graphs. 
 \end{abstract}
\newpage
\section{Introduction}

We present algorithms and lower bounds to compute a minimum weight cycle in the distributed CONGEST model.
Given a graph $G=(V,E)$ with a non-negative weight $w(e)$ on each edge $e\in E$, the minimum weight cycle problem (MWC)  asks for a cycle of minimum weight in $G$. An $\alpha$-approximation algorithm $(\alpha > 1)$ for MWC must find a cycle whose weight is within an $\alpha$ multiplicative factor of the true MWC. In the distributed setting, cycles are an important feature in network analysis with connections to deadlock detection and computing a cycle basis~\cite{peleg2013girth,Fraigniaud19distr,oliva18distr},
and a shortest cycle can model the likelihood of deadlocks in routing or in database applications~\cite{levitin10distr}. 
 
In the sequential context, MWC is a fundamental and well-studied problem on both directed and undirected graphs, both weighted and unweighted. MWC has classical sequential algorithms running in $\tilde{O}(n^3)$ time and $\tilde{O}(mn)$ time\footnotemark[2], where $|V|=n$ and $|E|=m$. There are also sequential fine-grained hardness results: MWC is in the $n^3$ class~\cite{williams2010subcubic}  and in the  $mn$ class~\cite{agarwal2018finegrained}  for hardness in graph path problems. In the distributed setting there are near-optimal results in the CONGEST model for most of the graph path problems in the sequential $n^3$ and $mn$ classes including APSP~\cite{bernsteinapsp,nanongkai2014approx}, radius and eccentricities~\cite{abboud2016radiusupper,abboud2016lower}, betweenness centrality~\cite{hoang2019round}, replacement paths and second simple shortest path~\cite{rp2024}, but very little was known for MWC prior to our work.

In {directed graphs}, exact MWC in the CONGEST model can be computed in $\tilde{O}(n)$ rounds by computing APSP~\cite{lenzen2019distributed,bernsteinapsp} and computing the minimum among cycles formed by concatenating a $v$-$u$ shortest path and a single edge $(u,v)$. 
This is matched by a nearly optimal $\tilde{\Omega}(n)$ lower bound for exact computation of MWC that we present in~\cite{rparxiv}, and the lower bound even applies to any $(2-\epsilon)$-approximation algorithm for MWC for any constant $\epsilon > 0$.
For an arbitrarily large $\alpha$-approximation (constant $\alpha \ge 2$), we show an $\tilde{\Omega}(\sqrt{n})$ lower bound. We complement this lower bound with sublinear approximation algorithms, with a non-trivial $\tilde{O}(n^{4/5}+D)$-round algorithm for computing a 2-approximation of directed unweighted MWC, and a $(2+\epsilon)$-approximation of directed weighted MWC.

In {undirected unweighted graphs}, where MWC is also known as \textit{girth}, the current best upper and lower bounds for exact computation in the CONGEST model are $O(n)$~\cite{holzer2012apsp} and $\tilde{\Omega}(\sqrt{n})$~\cite{frischknecht2012} respectively. For 2-approximation we have an algorithm announced in a previous paper~\cite{rparxiv} taking $\tilde{O}(\sqrt{n}+D)$ rounds, improving on the previous best upper bound $\tilde{O}(\sqrt{ng}+D)$~\cite{peleg2013girth} ($g$ is the weight of MWC). We show a lower bound of $\tilde{\Omega}(n^{1/3})$ for $(2.5-\epsilon)$-approximation and $\tilde{\Omega}(n^{1/4})$ for arbitrarily large constant $\alpha$-approximation.

For {undirected weighted graphs}, exact MWC can be computed in the CONGEST model using a reduction to APSP in $\tilde{O}(n)$ rounds~\cite{williams2010subcubic,agarwal2018finegrained} and we present a near-linear lower bound for $(2-\epsilon)$-approximation in~\cite{rparxiv}. In this paper, we prove a $\tilde{\Omega}(\sqrt{n})$ lower bound for $\alpha$-approximation (for any constant $\alpha \ge 2$). We complement this lower bound with an $\tilde{O}(n^{2/3}+D)$-round algorithm for $(2+\epsilon)$-approximation of MWC. 

Our approximation algorithms use a procedure to compute directed BFS or approximate SSSP from multiple sources, for which we provide a streamlined algorithm that is significantly more efficient than repeating the current best approximate SSSP algorithm.

An extended abstract of the results presented here appears in~\cite{mwc2024}.

\begin{table*}
    \caption{MWC results for CONGEST. Approximation results hold for approximation ratio $\alpha$ or $(1+\epsilon)$, where $\alpha > 1$ is an arbitrarily large constant, and $\epsilon >0$ is an arbitrarily small constant.}
    \label{tab:mwc}
    \centering
    \begin{tabular}{|c||c|c||c|c|}\hline
        \textbf{Problem} & \textbf{Lower Bound} & {\bf Ref.} & \textbf{Upper Bound} & {\bf Ref.} \\ \hline \hline
        \textit{Directed MWC} & $(2-\epsilon), \Omega\left(\frac{n}{\log n}\right)$ & \cite{rparxiv} & $1, \tilde{O}(n)$ & \cite{bernsteinapsp} \\
        \textit{unweighted or } & & & $2, \tilde{O}(n^{4/5} + D)$ (unweighted) & Thm~\ref{thm:dirub}.\ref{thm:dirub:unwt}  \\ 
        \textit{ weighted}& $\alpha, \Omega\left(\frac{\sqrt{n}}{\log n}\right)$ & Thm~\ref{thm:dirlb}.\ref{thm:dirlb:alphalb} &$(2+\epsilon), \tilde{O}( n^{4/5} + D)$ (weighted) &  Thm~\ref{thm:dirub}.\ref{thm:dirub:wt} \\ \hline \hline
        \textit{Undirected} & $(2-\epsilon), \Omega\left(\frac{n}{\log n}\right)$& \cite{rparxiv} & $1, \tilde{O}(n)$ &\cite{bernsteinapsp} \\ 
        {\it  weighted  MWC} & & & $(2+\epsilon), \tilde{O}(n^{2/3} + D)$ & Thm~\ref{thm:undirwt}.\ref{thm:undirwt:approxub} \\
           & $\alpha, \Omega\left(\frac{\sqrt{n}}{\log n}\right)$ &Thm~\ref{thm:undirwt}.\ref{thm:undirwt:alphalb} & &   \\ \hline
        \textit{Undirected } & $(2-\epsilon), \Omega\left(\frac{\sqrt{n}}{\log n}\right)$& \cite{frischknecht2012}& $1, {O}(n)$ &\cite{holzer2012apsp} \\ 
        {\it unweighted MWC}     & $(2-O(1/g)), \Omega\left(\frac{\sqrt{(n/g)}}{\log n}\right)$ & \cite{frischknecht2012}  & $(2-\frac{1}{g}), \tilde{O}(\sqrt{ng}+D)$&\cite{peleg2013girth} \\ 
        {\it (Girth)  }      & $(2.5-\epsilon), \Omega\left(\frac{n^{1/3}}{\log n}\right)$ & Thm~\ref{thm:undirunwt}.\ref{thm:undirunwt:alphalb2} & $(2-\frac{1}{g}), \tilde{O}(\sqrt{n}+D)$& Thm~\ref{thm:undirunwt}.\ref{thm:undirunwt:ub}\\ 
             & $\alpha, \Omega\left(\frac{n^{1/4}}{\log n}\right)$ & Thm~\ref{thm:undirunwt}.\ref{thm:undirunwt:alphalb} & & \\ \hline
    \end{tabular}
\end{table*}

\subsection{Preliminaries}
\label{sec:prelim}

\noindent
\textbf{The CONGEST Model.}\label{congest-model}
In the CONGEST model~\cite{peleg2000distributed}, a communication network is represented by a graph $G=(V,E)$ where nodes model processors and edges model bounded-bandwidth communication links between processors. Each node has a unique identifier in  $\{0, 1, \dots n-1\}$ where $n = |V|$, and each node only knows the identifiers of itself and its neighbors in the network. Each node has infinite computational power. The nodes perform computation in synchronous rounds, where each node can send a message of up to $\Theta(\log n)$ bits to each neighbor and can receive the messages sent to it by its neighbors. The complexity of an algorithm is measured by the number of rounds until the algorithm terminates. 

The graph $G$ can be directed or undirected but the communication links are always bi-directional (undirected) and unweighted; this follows the convention for CONGEST algorithms~\cite{chechiksssp,forster2018sssp,agarwal2019deterministic,ancona2020}. We consider algorithms on both weighted and unweighted graphs $G$ in this paper, where in weighted graphs each edge has a weight assignment $w:E(G) \rightarrow \{0,1,\dots W\}$ where $W=poly(n)$, and the weight of an edge is known to the vertices incident to it. Our algorithms readily generalize to larger edge weights in networks with bandwidth $\Theta(\log n + \log W)$, with our round complexities for weighted graphs having an additional factor of $\log (nW)$ for arbitrary integer $W$. 
The undirected diameter of the network, which we denote by $D$, is an important parameter in the CONGEST model.

In our algorithms, we frequently use the well-known broadcast and convergecast CONGEST operations~\cite{peleg2000distributed}: Broadcasting $M$ messages in total to all nodes, where each message could originate from any node, can be performed in $O(M+D)$ rounds. In the convergecast operation, each node holds an $O(\log n)$-bit value and we want to compute an associative operation (such as minimum or maximum) over all values. This can be performed in $O(D)$ rounds, after which all nodes know the result of the operation.
We now define the minimum weight cycle problem considered in this paper.
\begin{definition}\label{def:mwc}
    \textbf{Minimum Weight Cycle problem (MWC)}: Given an $n$-node graph $G=(V,E)$ ($G$ may be directed or undirected, weighted or unweighted), compute the weight of a shortest simple cycle in $G$. 
    In the case of $\alpha$-approximation algorithms, we need to compute the weight of a cycle that is within a factor $\alpha$ of the minimum (for $\alpha > 1$).
\end{definition}

In our distributed CONGEST algorithms for MWC, at the end of execution, every node in the network knows the computed weight of MWC (or approximate weight in case of approximation algorithm). Our CONGEST lower bounds for MWC apply even when only one node is required to know the weight of MWC. Our algorithms also allow us to construct the cycle by storing the next vertex on the cycle at each vertex that is part of the MWC.

\subsection{Our Results}
Table~\ref{tab:mwc} summarizes our upper and lower bound results.
All of our lower bounds hold for randomized algorithms, and the algorithms we present are also randomized -- which are correct with high probability in $n$. 

\subsubsection{ Directed Graphs.}
We have linear lower bounds for computing MWC in directed weighted or unweighted graphs, that also apply to  $(2-\epsilon)$-approximation~\cite{rparxiv}.
We address the problem of larger approximations, with an $\tilde{\Omega}(\sqrt{n})$ lower bound for $\alpha$-approximation of directed MWC, for arbitrarily large constant $\alpha \ge 2$. Our major algorithmic result is a sublinear round algorithm for computing 2-approximation of MWC in directed unweighted graphs that runs in $\tilde{O}(n^{4/5} + D)$ rounds, which we extend to a $(2+\epsilon)$-approximation algorithm in directed weighted graphs with the same round complexity. These results show that a linear lower bound is not possible for $\alpha \ge 2$ approximations. 

\begin{theorem} \label{thm:dirub} \label{thm:dirlb} 
Let $G=(V,E)$ be a directed graph.
In the CONGEST model, for any constants $\epsilon > 0, \alpha\ge 2$:
    \begin{enumerate}[label=\Alph*.,ref=\Alph*]
        \item \label{thm:dirlb:alphalb} Computing an $\alpha$-approximation of directed MWC (weighted or unweighted) requires $\Omega (\frac{\sqrt n}{\log n})$ rounds, even on graphs with diameter $D = \Theta (\log n)$.
        \item \label{thm:dirub:unwt} We can compute 2-approximation of unweighted MWC in  $\tilde{O}(n^{4/5} + D)$ rounds.
        \item \label{thm:dirub:wt} We can compute $(2+\epsilon)$-approximation of weighted MWC in $\tilde{O}(n^{4/5} + D)$ rounds.
    \end{enumerate}
\end{theorem}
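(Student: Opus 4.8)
The plan is to reduce from two-party Set Disjointness, whose randomized communication complexity on a universe of size $N$ is $\Omega(N)$ bits. I would build a directed graph on $n$ vertices, split into Alice's part $V_A$ and Bob's part $V_B$, so that the edge cut between them has only $\Theta(\sqrt n)$ edges while the inputs jointly encode a disjointness instance of size $N=\Theta(n)$. Each player's private gadget lives entirely on her side and is attached to $\Theta(\sqrt n)$ shared ``port'' vertices through which the $\Theta(\sqrt n)$ cut edges pass. The gadget is designed as a gap instance: when the two sets intersect there is a directed cycle of some fixed length $\ell$ routed through a matching pair of ports, whereas when the sets are disjoint every directed cycle is forced to have length $>\alpha\ell$. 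An $\alpha$-approximation of MWC then distinguishes the two cases, so any $R$-round algorithm yields an $O(R\sqrt n\log n)$-bit protocol, giving $R=\Omega(\sqrt n/\log n)$. To meet the $D=\Theta(\log n)$ requirement I would overlay a balanced-tree (or hub) structure of depth $O(\log n)$ that lowers the diameter without creating short directed cycles and without increasing the cut beyond $\Theta(\sqrt n)$. The weighted case uses the same construction, with edge weights only making the length gap easier to enforce.

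\textbf{Part B (unweighted upper bound).} I would first reduce MWC to computing, over all $v$, the shortest directed cycle through $v$, which equals $\min_{(v,u)\in E}\{1+d(u,v)\}$, and then split on a hop threshold $t$. For \emph{long} cycles (more than $t$ vertices), sample a set $S$ of $\tilde O(n/t)$ vertices so that w.h.p.\ every long cycle contains a sample; for each $s\in S$ compute in- and out-distances to find the shortest cycle through $s$, using the streamlined multi-source SSSP/BFS primitive developed in the paper. For \emph{short} cycles (at most $t$ vertices) the factor-$2$ slack is spent: rather than running full depth-$t$ BFS from all $n$ sources, which is prohibitive in congestion, I would run a BFS from every vertex but truncate each exploration to its implicitly defined nearest-neighborhood, stopping as soon as a cycle back to the source is detected. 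The $2$-approximation lets us terminate early and keeps the number of messages crossing each edge small. Choosing $t$ to balance the sampled long-cycle search against the all-source truncated short-cycle search yields $\tilde O(n^{4/5}+D)$, and a final convergecast in $O(D)$ rounds delivers the global minimum to all nodes.

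\textbf{Part C (weighted upper bound).} Building on Part~\ref{thm:dirub:unwt}, I would combine weight scaling with a geometric search over the cycle weight. For each guess $g\in\{(1+\epsilon)^i\}$ (there are $O(\log_{1+\epsilon}(nW))$ of them, absorbed into $\tilde O$), rescale each edge $e$ to an integer hop-length $\approx\lceil w(e)\,k/g\rceil$ with $k=\tilde\Theta(1/\epsilon)$, so that a cycle of weight about $g$ becomes a short cycle of $O(k)$ hops in the scaled unweighted (multi)graph while the rounding contributes total error at most $\epsilon g$. Running the unweighted $2$-approximation of Part~\ref{thm:dirub:unwt} on the scaled graph detects such a cycle, and the factor-$2$ of the unweighted routine composes with the $(1+\epsilon)$ scaling error to give a $(2+\epsilon)$-approximation after taking the minimum over all guesses. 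Since the scaled instances remain weighted in the long-cycle regime, I would replace exact BFS there with the paper's fast \emph{approximate} multi-source SSSP, keeping the complexity at $\tilde O(n^{4/5}+D)$.

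\textbf{Main obstacle.} I expect the crux to be the short-cycle routine in Part~B: bounding the per-edge congestion of simultaneously growing $n$ truncated BFS explorations, and proving that the implicit nearest-neighborhood cutoff still certifies a cycle within a factor $2$ of optimum. The second delicate point is the error and round-complexity bookkeeping in Part~C—ensuring the scaling keeps the relevant cycles to $O(k)$ hops so the approximate-SSSP cost is controlled, and that the rounding, guessing, and unweighted factor-$2$ errors compose to exactly $(2+\epsilon)$ rather than something larger.
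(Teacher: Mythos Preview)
Your Part~C matches the paper's approach. Your Part~A differs in a noteworthy way: the paper does not build a length-gap instance but instead directs the standard tree-plus-paths construction of Das Sarma et al.\ so that the graph contains \emph{any} directed cycle if and only if the disjointness inputs intersect. Since any $\alpha$-approximation must distinguish finite from infinite MWC, this immediately gives the bound without needing to control cycle lengths in the ``no'' case. Your cut-plus-gap idea could in principle work, but achieving diameter $\Theta(\log n)$ while keeping the Alice--Bob cut at $\Theta(\sqrt n)$ is precisely the subtlety the Das Sarma et al.\ framework is designed to handle, and you wave it away.

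The substantive gap is in Part~B. You correctly flag short-cycle congestion as the crux, but ``truncate to an implicitly defined nearest-neighborhood'' and ``the 2-approximation lets us terminate early'' are placeholders for two concrete ideas you do not have. First, the neighborhood $P(v)$ is not a nearest-neighbor ball; it is defined via a lemma of Chechik et al.\ stating that if $d(y,t)+2d(v,y)>d(t,y)+2d(v,t)$ for some sampled $t$, then the shortest cycle through $v$ and $t$ is already within a factor~2 of the shortest cycle through $v$ and $y$. This is what licenses discarding $y$ from $P(v)$ while preserving the factor-2 guarantee, and a carefully chosen set $R(v)$ of $O(\log n)$ sampled vertices shrinks $|P(v)|$ to $\tilde O(n/|S|)$. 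Second, even with $|P(v)|$ bounded, a single vertex $u$ may lie in $P(v)$ for many $v$, so congestion is not automatically controlled. The paper detects such \emph{bottleneck} (``phase-overflow'') vertices on the fly during a randomly-delayed restricted BFS, terminates propagation through them, and then runs a separate $h$-hop BFS from all of them; a counting argument from $\sum_v |P(v)|\le n\cdot\tilde O(n^{3/5})$ bounds their number by $\tilde O(n^{4/5})$. Without these two mechanisms you have neither the 2-approximation certificate nor a congestion bound, and no route to the exponent $4/5$.
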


\subsubsection{Undirected Unweighted Graphs.}

Girth (undirected unweighted MWC) can be computed in $O(n)$ rounds~\cite{holzer2012apsp} and an algorithm for $(2-\frac{1}{g})$-approximation that takes $\tilde{O}(\sqrt{ng}+D)$ rounds is given in~\cite{peleg2013girth} (where $g$ is the girth). A lower bound of $\Omega (\frac{\sqrt n}{\log n})$ for $(2-\epsilon)$-approximation of girth is given in~\cite{frischknecht2012}, and this proof also implies an $\Omega (\frac{\sqrt{(n/g)}}{\log n})$ lower bound for $(2-O(1/g))$-approximation of girth.

We improve on the result in~\cite{peleg2013girth} by presenting a faster $\tilde{O}(\sqrt{n} + D)$-round algorithm to compute $(2-\frac{1}{g})$-approximation of girth. Our algorithm was first announced in~\cite{rparxiv}, but we provide a more detailed description and analysis here.
For larger approximation ratios, we show a lower bound of $\tilde{\Omega} (n^{1/4})$ for arbitrarily large constant approximation. We also prove an intermediate result of $\tilde{\Omega} (n^{1/3})$ lower bound for $(2.5-\epsilon)$-approximation.

\begin{theorem} \label{thm:undirunwt} Consider an undirected unweighted graph $G=(V,E)$. In the CONGEST model:
    \begin{enumerate}[label=\Alph*.,ref=\Alph*]
        \item\label{thm:undirunwt:alphalb2} Computing a $(2.5-\epsilon)$-approximation of girth requires $\Omega (\frac{n^{1/3}}{\log n})$ rounds, even on graphs with constant diameter $D$.
        \item \label{thm:undirunwt:alphalb} For any constant $\alpha \ge 2.5$, computing an $\alpha$-approximation of girth requires $\Omega (\frac{n^{1/4}}{\log n})$ rounds, even on graphs with diameter $D = \Theta (\log n)$.
        \item \label{thm:undirunwt:ub} We can compute a $(2-\frac{1}{g})$-approximation of girth in $\tilde{O}(\sqrt{n} + D)$ rounds, where $g$ is the girth.
    \end{enumerate}
\end{theorem}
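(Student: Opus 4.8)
The plan is to reduce the problem to running breadth-first search (BFS) from carefully chosen sources and extracting a cycle from the ``closing'' non-tree edges of each BFS tree. The basic geometric fact I would establish first is that if a source $s$ reaches both endpoints of an edge $(x,y)$, then $d(s,x)+d(s,y)+1$ upper-bounds the length of a genuine cycle of $G$, so that setting $\beta(s)=\min\{d(s,x)+d(s,y)+1 : (x,y)\text{ a non-tree edge seen by }s\}$ always satisfies $\beta(s)\ge g$, where $g$ is the girth. The crux of the approximation guarantee is the complementary bound: if $s$ lies within distance $a$ of a vertex $v_0$ of a minimum cycle $C$, then some edge of $C$ is non-tree for the BFS tree of $s$, and since every vertex of $C$ is within $a+\lfloor g/2\rfloor$ of $s$ we get $\beta(s)\le g+2a+1$; in particular a source on $C$ gives $\beta(s)\le g+1$, and a source at distance $a<g/2$ gives $\beta(s)\le 2g-1$ after careful case analysis. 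I would have every vertex that detects a cycle record its candidate, convergecast the global minimum in $O(D)$ rounds, and output $\hat g=\min_s\beta(s)$; this lies in $[g,2g-1]$, i.e.\ is a $(2-\tfrac1g)$-approximation, provided \emph{some} source within distance $<g/2$ of $C$ performs a deep enough BFS.

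To supply such a source in sublinear rounds I would split into two regimes according to the size of the radius-$\lfloor g/2\rfloor$ ball around $v_0\in C$, using a random sample $S$ that contains each vertex independently with probability $\Theta(\log n/\sqrt n)$, so that $|S|=\tilde O(\sqrt n)$ and, by a union bound over only the $n$ neighborhoods $N_{\sqrt n}(v)$ consisting of the $\sqrt n$ nearest vertices to each $v$, with high probability every $N_{\sqrt n}(v)$ contains a sampled vertex. \textbf{Long regime:} if the radius-$\lfloor g/2\rfloor$ ball of $v_0$ has at least $\sqrt n$ vertices, then $N_{\sqrt n}(v_0)$ lies inside that ball and contains a sample $s$, which is therefore within $g/2$ of $C$; I would run full BFS from every $s\in S$, pipelining the $\tilde O(\sqrt n)$ waves with the streamlined multiple-source BFS routine so that each edge carries $O(1)$ messages per round, for a total of $\tilde O(\sqrt n+D)$ rounds, and this yields the cycle. \textbf{Short regime:} otherwise the ball is small, and I would run from \emph{every} vertex a BFS restricted to its own neighborhood $N_{\sqrt n}(v)$; since the whole radius-$\lfloor g/2\rfloor$ ball of $v_0$ then fits inside $N_{\sqrt n}(v_0)$, the restricted BFS from $v_0$ reaches depth $\ge\lfloor g/2\rfloor$ and closes $C$, giving $\beta(v_0)\le g+1$.

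The main obstacle is the round complexity of the short-regime computation: I am running a BFS from all $n$ sources simultaneously, and although the restriction to $N_{\sqrt n}(v)$ bounds the number of vertices each source explores by $\sqrt n$, this does not by itself bound the per-edge congestion, since one ``central'' vertex could lie in the explored neighborhood of $\Theta(n)$ sources. Overcoming this is exactly the role of the neighborhood-restricted all-source BFS primitive: I would schedule the restricted explorations so that each vertex relays only $\tilde O(\sqrt n)$ distinct (source, distance) tokens in total -- dropping a source's token at a vertex once that vertex has already forwarded $\sqrt n$ tokens -- and argue that, because each relevant ball has at most $\sqrt n$ vertices, this truncation never suppresses the token that closes a minimum cycle, giving $\tilde O(\sqrt n+D)$ rounds overall. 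Establishing this congestion bound, together with the parity/boundary bookkeeping that pins the guarantee to $2g-1$ rather than $2g$ (and hence to the factor $2-\tfrac1g$), is where the real work lies; the remaining ingredients -- the sampling and its hitting property, the pipelined multi-source BFS, and the final convergecast -- are standard CONGEST operations that each run in $\tilde O(\sqrt n+D)$ rounds.
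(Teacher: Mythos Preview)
Your high-level plan coincides with the paper's: sample $\tilde O(\sqrt n)$ vertices and run full BFS from each, and in parallel run the $\sqrt n$-nearest source-detection (i.e., every vertex forwards only $\sqrt n$ tokens) to catch cycles that sit inside small neighborhoods. The gap is in your correctness argument for the short regime. Your dichotomy is based on the radius-$\lfloor g/2\rfloor$ ball around a \emph{single} fixed vertex $v_0\in C$, but the truncation you propose is not controlled by $v_0$'s ball: a vertex $u$ drops $v_0$'s token when $v_0$ is not among \emph{$u$'s} $\sqrt n$ nearest sources, which is a statement about $u$'s ball, not $v_0$'s. So ``each relevant ball has at most $\sqrt n$ vertices'' does not follow from your short-regime hypothesis, and the token that would close $C$ can indeed be suppressed. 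The paper's dichotomy is instead over \emph{all} vertices of $C$: either $C\subseteq Q(v)$ for every $v\in C$ (then the source-detection messages for the far vertex $z$ provably survive at the near vertex and at its two cycle-neighbors, via a shortest-path monotonicity argument), or some $v\in C$ has $C\not\subseteq Q(v)$, whence $|Q(v)|=\sqrt n$ and a sample lands in $Q(v)$ at distance strictly below $g/2$.

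There is also real content hidden in the ``parity/boundary bookkeeping'' you defer. In the even case $g=2r$, the naive bound from a sample at distance $\le r$ only gives $\beta(s)\le 2g$, not $2g-1$; the paper closes this gap with an intermediate case (exactly one cycle vertex outside $Q(v)$) that requires each vertex to \emph{send its full $Q(v)$ table to its neighbors} and then locally stitch together cycles of the form $d(x,z)+d(y,z)+2$ across adjacent neighborhoods. Your sketch has no analogue of this neighbor-sharing step, and without it the $(2-\tfrac1g)$ guarantee degrades to a plain factor-$2$ guarantee when $g$ is even.
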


\subsubsection{Undirected Weighted Graphs.}
We have linear lower bounds for computing MWC in directed weighted or unweighted graphs, that also apply to  $(2-\epsilon)$-approximation~\cite{rparxiv}. Building on our method for the unweighted case, we present an algorithm for $(2+\epsilon)$-approximation of MWC that runs in $\tilde{O}(n^{2/3} + D)$ rounds.

\begin{theorem} \label{thm:undirwt} 
let $G=(V,E)$ be an undirected weighted graph $G=(V,E)$.
In the CONGEST model, for any constants $\epsilon > 0, \alpha \ge 2$:
    \begin{enumerate}[label=\Alph*.,ref=\Alph*]
        \item \label{thm:undirwt:alphalb} Computing an $\alpha$-approximation of MWC requires $\Omega (\frac{\sqrt n}{\log n})$ rounds, even on graphs with diameter $D = \Theta (\log n)$.
        \item \label{thm:undirwt:approxub} We can compute a $(2+\epsilon)$-approximation of MWC in $\tilde{O}(n^{2/3} + D)$ rounds.
    \end{enumerate}
\end{theorem}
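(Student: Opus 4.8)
We prove the two parts separately, and in both the governing idea is to transfer the corresponding unweighted technique to the weighted setting.

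\textbf{Part~\ref{thm:undirwt:alphalb} (lower bound).} The plan is a reduction from two-party set-disjointness $\mathrm{DISJ}_k$, which requires $\Omega(k)$ bits of communication even for randomized protocols. I would partition the $n$ vertices into Alice's side $V_A$ and Bob's side $V_B$ joined by a balanced separator of only $c = \tilde{O}(\sqrt{n})$ edges, and encode a disjointness instance of size $k = \Theta(n)$ into the edges and weights private to each side. The crux is a weighted gadget in which an index $i$ selected by both players (an intersection) creates a single light cycle of weight at most $g$, while in the disjoint case every cycle is forced to traverse a heavy backbone edge of weight exceeding $\alpha g$; using edge \emph{weights}, rather than only path lengths as in the unweighted girth lower bounds, is exactly what lets us open the yes/no gap to an arbitrarily large constant factor $\alpha$. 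To enforce diameter $\Theta(\log n)$ I would overlay a logarithmic-depth tree backbone whose edges are heavy enough never to participate in a cycle of weight at most $\alpha g$, so the backbone controls distances without creating spurious short cycles. Any $R$-round $\alpha$-approximation algorithm, simulated across the separator, then transmits at most $O(R\,c\log n)$ bits while distinguishing the two cases and hence solving $\mathrm{DISJ}_k$, forcing $R = \Omega(k/(c\log n)) = \Omega(\sqrt{n}/\log n)$. The main obstacle here is the gadget design: simultaneously guaranteeing that the disjoint instance has \emph{no} cycle of weight at most $\alpha g$ (no short cycle sneaks through the separator or the low-diameter backbone), while keeping the separator at $\tilde{O}(\sqrt{n})$ edges and the encoded instance at $\Theta(n)$ bits.

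\textbf{Part~\ref{thm:undirwt:approxub} (upper bound).} The plan combines weight scaling, a short-cycle/long-cycle dichotomy, and multi-source approximate SSSP. First I would reduce the weighted problem to $O(\epsilon^{-1}\log(nW))$ scaled subproblems: guess the MWC weight $g$ up to a $(1+\epsilon)$ factor, discard every edge whose weight exceeds the guess (such an edge cannot lie on a cycle of weight at most $g$), and rescale and round the surviving weights to integers, charging only an additive $\epsilon g$ and hence a $(1+\epsilon)$ multiplicative error; this turns each subproblem into an essentially unweighted bounded-hop cycle problem. On a scaled subproblem I would split on the \emph{hop length} of the optimal cycle. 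Sampling a set $S$ of roughly $n^{1/3}$ centers hits, with high probability, every cycle having more than $n^{2/3}$ vertices; for such long cycles the minimum cycle through a center $c\in S$ is recoverable from a single-source shortest-path computation at $c$ (the tree paths to the two endpoints of the ``antipodal'' non-tree edge form a cycle of weight at most twice the optimum), so running $(1+\epsilon)$-approximate SSSP from the $\tilde{O}(n^{1/3})$ centers via the streamlined multi-source routine yields a $(2+\epsilon)$-approximation. The remaining cycles, having at most $n^{2/3}$ hops, are detected on the scaled graph by the bounded-hop unweighted girth subroutine of Theorem~\ref{thm:undirunwt}.\ref{thm:undirunwt:ub}, which already achieves a $(2-\tfrac1g)$-approximation. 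Choosing the sampling rate and hop bound as above balances the two regimes at $\tilde{O}(n^{2/3}+D)$ rounds.

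I expect the main obstacle to be the efficient multi-source approximate SSSP: computing $(1+\epsilon)$-approximate distances from $\tilde{O}(n^{1/3})$ sources by naively repeating a single-source routine costs far more than $\tilde{O}(n^{2/3}+D)$ rounds, so the streamlined procedure, together with its interaction with the bounded-hop scaled graph (which must avoid a vertex blow-up under scaling and must certify that the detected cycles are genuinely \emph{simple}), is the real crux. A secondary technical point is verifying that the additive rounding error of the scaling step and the factor-$2$ center-based detection compose to exactly $(2+\epsilon)$ rather than, say, $2(1+\epsilon)^2$; this is routine once the scaling factor is fixed as a function of $\epsilon$, $g$, and $n$, and the $\epsilon$ is reparametrized at the end.
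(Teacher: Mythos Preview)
Your Part~\ref{thm:undirwt:approxub} plan is essentially the paper's algorithm: scale \`a la Nanongkai into $O(\log (nW))$ bounded-hop subproblems, split at hop threshold $h=n^{2/3}$, hit long cycles by sampling $\tilde\Theta(n^{1/3})$ centers and running the streamlined multi-source $(1+\epsilon)$-SSSP, and hand short cycles to the hop-limited unweighted girth routine on the stretched scaled graph. One small slip: when the sampled center $c$ actually lies \emph{on} the MWC (which is the whole point of the sampling), the shortest cycle through $c$ \emph{equals} the MWC, so with approximate SSSP the long-cycle case already gives a $(1+\epsilon)$-approximation, not a $2$-approximation; the factor $2$ comes only from the short-cycle unweighted subroutine. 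The paper also records, along with each distance $d(w,v)$, the first vertex $f(w,v)$ on the $w$--$v$ tree path and checks $f(w,v)\neq f(w,x)$ before declaring a cycle---this is what certifies simplicity, the point you flag as a ``secondary technical'' concern.

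Your Part~\ref{thm:undirwt:alphalb} plan is in the right spirit (set disjointness, heavy edges to open the factor-$\alpha$ gap, a log-depth tree for diameter) but the construction you sketch is genuinely different from the paper's and runs into the very obstacle you name. You propose a Frischknecht-style small-edge-cut with $\tilde O(\sqrt n)$ separator edges and a $\Theta(n)$-bit encoding \emph{internal} to each side; the difficulty, as you note, is preventing Alice's (or Bob's) private encoding from creating light cycles on its own in the disjoint case---and with $\Theta(n)$ private edges this is hard to rule out while keeping the gap arbitrary. The paper sidesteps this entirely by taking the Sarma--Holzer--Wattenhofer graph (a depth-$\log\ell$ binary tree over leaves $u_0,\dots,u_\ell$ together with $q$ vertex-disjoint paths of length $\ell$) and simply making it undirected with all weights $1$ except the intermediate path-to-leaf edges $(u_j,v_j^i)$, $1\le j\le \ell-1$, which get weight $\alpha n$. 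Because the paths are disjoint and the tree is a tree, the only light (all-weight-$1$) cycles are those using the $u_0$--$v_0^i$ and $v_\ell^i$--$u_\ell$ hooks together with the tree path, i.e.\ exactly when $S_a[i]=S_b[i]=1$; every other cycle must touch a weight-$\alpha n$ edge. The round lower bound then comes from Sarma et al.'s Lemma~4.1 (an information-propagation argument through the tree and the long paths), not from a single small edge-cut, and encodes only $q=\tilde\Theta(\sqrt n)$ bits rather than $\Theta(n)$. So your framing of the argument as ``$\Theta(n)$ bits across a $\tilde O(\sqrt n)$-edge cut'' is not how the paper gets there, and making that framing work would require the gadget you left open.
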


\subsubsection{\texorpdfstring{Approximate $k$-source SSSP}{Approximate k-source SSSP}}
\label{sec:intro:ksssp}

A key subroutine in our approximation algorithms for MWC computes shortest paths efficiently from $k$ sources.

\begin{definition}\label{def:kbfs}
    \textbf{$k$-source BFS, SSSP problem}: Given an $n$-node graph $G=(V,E)$ and a set of $k$ vertices $U \subseteq V$, compute at each $v \in V$ the shortest path distance $d(u,v)$ for each $u \in U$. The problem is \textbf{$k$-source BFS} in unweighted graphs and \textbf{$k$-source SSSP} in weighted graphs.
\end{definition}

Optimal algorithms to compute $k$-source BFS taking $O(k+D)$ rounds are known for undirected unweighted graphs~\cite{lenzen2019distributed,hoang2019round}. For undirected weighted graphs, an algorithm in~\cite{elkin2019hopset} computes $(1+\epsilon)$-approximate SSSP in $\tilde{O}(\sqrt{nk}+D)$ rounds. When the number of sources is large, $k \ge n^{1/3}$, we present a fast and streamlined algorithm for directed exact BFS that runs in $\tilde{O}(\sqrt{nk}+D)$ rounds. We extend the algorithm to weighted graphs, computing approximate SSSP in both directed and undirected weighted graphs in $\tilde{O}(\sqrt{nk}+D)$ rounds. Our algorithm is more efficient than adapting the result in~\cite{elkin2019hopset} to directed weighted graphs for $k \ge n^{1/3}$, and matches their complexity for undirected graphs. In our applications to MWC algorithms, we only use the streamlined algorithm for $k \ge n^{1/3}$. We present our results for the complete range of $1 \le k \le n$ in Section~\ref{sec:ksssp}. In the following theorem, $SSSP = \tilde{O}(\sqrt{n}+n^{2/5+o(1)}D^{2/5}+D)$ refers to round complexity of computing SSSP (from a single source)~\cite{cao2023sssp}.

\begin{theorem}
    \label{thm:ksssp}
    \begin{enumerate}[label=\Alph*.,ref=\Alph*]
        \item We can compute exact directed BFS from $k$ sources in directed unweighted graphs with round complexity: \label{thm:ksssp:exact}
        $$
        \begin{cases}
            \tilde{O}(\sqrt{nk} + D) & ; k \ge n^{1/3} \quad (1)\\
            \min\left( \tilde{O}\left(\frac{n}{k}+D\right), k \cdot SSSP \right) & ; k<n^{1/3} 
        \end{cases}
        $$
        \item We can compute $(1+\epsilon)$-approximate weighted SSSP from $k$ sources in directed weighted graphs for any constant $\epsilon>0$ with round complexity:  \label{thm:ksssp:approx}
        $$
        \begin{cases}
            \tilde{O}(\sqrt{nk} + D) & ; k \ge n^{1/3} \;\; (2)\\
            \tilde{O}(\sqrt{nk} + k^{2/5}n^{2/5+o(1)}D^{2/5} + D) & ; k<n^{1/3}
        \end{cases}
        $$
    \end{enumerate}
\end{theorem}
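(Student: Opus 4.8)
The plan is to use a hop-and-hub (skeleton) decomposition rather than a hopset, since directed graphs do not admit good hopsets. Fix a hop bound $h$ and sample a set $S$ of hubs by including each vertex independently with probability $\Theta((\log n)/h)$, so that $|S| = \tilde{O}(n/h)$ and, with high probability, any window of $h$ consecutive vertices on a shortest path contains a hub. The engine of the algorithm is a bounded-hop multi-source primitive: for a set of $t$ sources, the exact $h$-hop-bounded distances from every source to every vertex can be computed in $\tilde{O}(t + h)$ rounds by pipelining the $h$ frontier expansions, each vertex forwarding at most one (source, distance) message per source along its out-edges (propagation is forward along directed edges, which the bidirectional links support). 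I would set $h = \tilde{\Theta}(\sqrt{nk})$ and hence $|S| = \tilde{\Theta}(\sqrt{n/k})$, the balance that makes every step fit in $\tilde{O}(\sqrt{nk} + n/k + D)$.

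The steps, in order, are: (i) run the bounded-hop primitive from the $k$ sources $U$ and the hubs $S$ simultaneously, giving at each $v$ the near distances $d_h(u,v)$ and $d_h(s,v)$, at cost $\tilde{O}(k + |S| + h) = \tilde{O}(\sqrt{nk})$; (ii) convergecast and broadcast the $|S|^2$ hub-to-hub near distances $d_h(s,s')$ and let every node locally run APSP on this $|S|$-vertex skeleton to obtain exact $d(s,s')$, at cost $\tilde{O}(|S|^2 + D) = \tilde{O}(n/k + D)$; (iii) broadcast the $k|S|$ source-to-hub near distances so each hub $s$ can compute $\hat{d}(u,s) = \min(d_h(u,s), \min_{s'} d_h(u,s') + d(s',s))$ for every source $u$, at cost $\tilde{O}(k|S| + D) = \tilde{O}(\sqrt{nk} + D)$; (iv) seed each hub $s$ with the $k$ values $\hat{d}(u,s)$ and run a final $h$-hop-bounded propagation so each $v$ learns $\min_s \hat{d}(u,s) + d_h(s,v)$, which combined with the direct term $d_h(u,v)$ yields $d(u,v)$, at cost $\tilde{O}(k + h) = \tilde{O}(\sqrt{nk})$. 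Correctness holds because any shortest path from $u$ to $v$ either has at most $h$ hops (captured by $d_h(u,v)$) or is cut by hubs into segments of at most $h$ hops, each captured exactly by the near distances and the skeleton APSP. Summing gives $\tilde{O}(\sqrt{nk} + n/k + D)$, which is $\tilde{O}(\sqrt{nk} + D)$ for $k \ge n^{1/3}$ and $\tilde{O}(n/k + D)$ otherwise; for small $k$ one may instead run single-source BFS $k$ times, explaining the $\min$ in part~\ref{thm:ksssp:exact}.

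For part~\ref{thm:ksssp:approx} I would run the identical scheme but replace the exact bounded-hop primitive by a $(1+\epsilon)$-approximate one: round edge weights to $O(\epsilon^{-1}\log(nW))$ geometric scales and run bounded-hop Bellman--Ford, still at cost $\tilde{O}(t+h)$ with the $\epsilon$ and $\log(nW)$ factors hidden in $\tilde{O}$. The key point for error control is that the approximation composes additively: the optimal path from $u$ to $v$ decomposes at hubs into segments, each estimated within a $(1+\epsilon)$ factor of its exact length, so their sum---and hence the value returned by the skeleton APSP and the final seeded propagation---is within $(1+\epsilon)$ of $d(u,v)$, with no blow-up in the number of segments. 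The step costs are unchanged, giving $\tilde{O}(\sqrt{nk}+D)$ for $k \ge n^{1/3}$. For $k < n^{1/3}$ the $\tilde{O}(n/k)$ gather of the hub skeleton dominates; to obtain the stated $\tilde{O}(\sqrt{nk} + k^{2/5}n^{2/5+o(1)}D^{2/5} + D)$ bound I would instead compute the hub distances using the single-source routine of~\cite{cao2023sssp} rather than gathering, trading the $n/k$ term for one governed by $SSSP$.

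The main obstacle is step~(iii): computing the source-to-hub distances within budget. A naive min-plus product of the $k \times |S|$ near-distance matrix with the $|S| \times |S|$ hub-APSP matrix is far too expensive, and it is precisely the choice $h = \tilde{\Theta}(\sqrt{nk})$ (so that $k|S| = \sqrt{nk}$ and $|S|^2 = n/k$) that lets a broadcast-then-compute-locally implementation fit the budget; getting this balance right, and verifying the hub-hitting property holds with high probability for this $h$, is the crux. The secondary difficulty, in the weighted case, is designing the approximate bounded-hop subroutine and confirming that errors add rather than multiply across the hub decomposition.
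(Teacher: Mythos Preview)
Your skeleton-graph scheme for the large-$k$ regime is essentially the paper's Algorithm~\ref{alg:n3sssp}: sample $S$ with $|S|=\tilde{\Theta}(\sqrt{n/k})$, do $h=\sqrt{nk}$-hop BFS from $U\cup S$, broadcast the $|S|^2$ hub--hub near distances to get exact skeleton APSP locally, stitch source-to-hub and hub-to-target. One minor cost slip: step~(iv) is not $\tilde{O}(k+h)$ in general. A seeded propagation from $|S|$ hubs carrying $k$ values each has congestion $k|S|$, so the paper bounds this step by $\tilde{O}(h+k|S|)$ via random scheduling through the precomputed $h$-hop trees (or $\tilde{O}(k|S|+D)$ by broadcast). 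Since $k|S|=\sqrt{nk}=h$ with your parameters this does not change the total, but your justification for $\tilde{O}(k+h)$ would not survive a different balance.

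The genuine gap is the $k<n^{1/3}$ case of part~\ref{thm:ksssp:approx}. Saying ``compute the hub distances using the single-source routine of~\cite{cao2023sssp} rather than gathering'' does not yield $\tilde{O}(\sqrt{nk}+k^{2/5}n^{2/5+o(1)}D^{2/5}+D)$: running \cite{cao2023sssp} once per source gives $k\cdot SSSP$, and once per hub gives $\sqrt{n/k}\cdot SSSP$, both strictly worse. The paper does \emph{not} avoid hopsets here---your opening premise that ``directed graphs do not admit good hopsets'' is what leads you astray. The actual mechanism is: keep the skeleton graph on $S$, but instead of broadcasting all $|S|^2$ edges, build an $(h,\epsilon)$-directed hopset on the skeleton using the construction of~\cite{cao2021approximatesssp} at cost $\tilde{O}(|S|^2/h^2 + h^{1+o(1)}D)$, then run $k$-source $h$-hop approximate SSSP on the hopset-augmented skeleton via broadcast-per-BFS-layer (Lemma~\ref{lem:skeleton}) at cost $\tilde{O}(k|S|+hD)$. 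The total $\tilde{O}(n/|S| + |S|^2/h^2 + h^{1+o(1)}D + k|S|)$ is then minimized by a three-way case split on $D$ (small, intermediate, large relative to $n^{1/4}k^{3/4}$ and $n^{2/3}$), with different $|S|,h$ in each case, which is where the $k^{2/5}n^{2/5+o(1)}D^{2/5}$ term comes from. None of this is recoverable from a black-box call to single-source SSSP.
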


\subsection{Significance of our Results}\label{sec:mwc-sig} 

In the distributed setting, cycles are an important network feature, with applications to deadlock detection and cycle basis computation~\cite{peleg2013girth,Fraigniaud19distr,oliva18distr}. In the sequential context, MWC is a fundamental graph problem that is well-studied. The $\tilde{O}(n^3)$ and $\tilde{O}(mn)$ time sequential algorithms for MWC have stood the test of time. MWC is in the sequential $n^3$ time fine-grained complexity class~\cite{williams2010subcubic} and plays a central role as the starting point of hardness for the $mn$ time fine-grained complexity class~\cite{agarwal2018finegrained}. 

The $n^3$ and $mn$ time fine-grained complexity classes contain important graph problems, which have been studied in the CONGEST model and for which nearly optimal upper and lower bounds have been obtained: All Pairs Shortest Paths (APSP)~\cite{bernsteinapsp}, Radius and Eccentricities~\cite{abboud2016lower,ancona2020}, Betweenness Centrality~\cite{hoang2019round}, Replacement Paths (RP) and Second Simple Shortest Path (2-SiSP)~\cite{rp2024}. However, there is a conspicuous lack of results for MWC (except for girth~\cite{holzer2012apsp,frischknecht2012,peleg2013girth} and reductions to APSP for exact MWC algorithms). 

In this paper, we make significant progress on this problem with a variety of results. While linear lower bounds hold for $(2-\epsilon)$-approximation, we present sublinear algorithms for computing $2$ or $(2+\epsilon)$-approximation of MWC. Our algorithms use a variety of techniques in non-trivial ways, such as our directed unweighted MWC algorithm that computes BFS from all vertices restricted to certain implicitly computed neighborhoods in sublinear rounds, and our weighted algorithms that use unweighted MWC algorithms on scaled graphs combined with multiple source approximate SSSP. We also present lower bounds for larger approximation factors, with $\tilde{\Omega}(\sqrt{n})$ bounds for arbitrarily large constant factor ($\ge 2$).

\subsection{Techniques}

\paragraph*{Lower Bounds.}

We establish $\tilde{\Omega}(\sqrt{n})$ lower bounds for {\it any} constant factor approximation algorithms for MWC in directed graphs by adapting a general lower bound graph used for MST, SSSP and other graph problems~\cite{sarma2012distributed,elkin2006mst} (Theorem~\ref{thm:dirlb}.\ref{thm:dirlb:alphalb}). We also adapt these constructions to obtain a similar lower bound for undirected weighted graphs (Theorem~\ref{thm:undirwt}.\ref{thm:undirwt:alphalb}). For undirected unweighted graphs, a lower bound of $\tilde{\Omega}(\sqrt{n})$ is known for $(2-\epsilon)$-approximation~\cite{frischknecht2012}, and we obtain an $\tilde{\Omega}(n^{1/4})$ lower bound for {\it any} constant factor approximation (Theorem~\ref{thm:undirunwt}.\ref{thm:undirunwt:alphalb}). We use reductions from set disjointness inspired by~\cite{drucker2014} to obtain our intermediate lower bound of $\tilde{\Omega}(n^{1/3})$ for $(2.5-\epsilon)$-approximation of undirected unweighted MWC.

\paragraph*{Approximate MWC Upper Bounds.} Our upper bounds use a framework of computing long cycles of high hop length and short cycles separately. Computing long cycles typically involves random sampling followed by computing shortest paths through sampled vertices. The sampling probability is chosen such that there is a sampled vertex on any long cycle with high probability, and we compute minimum weight cycles passing through sampled vertices. 

Computing short cycles requires a variety of techniques for each of our algorithms, with our method for {directed unweighted} MWC being the most involved. In a directed unweighted graph, we define a specific neighborhood for each vertex $v$ which contains a minimum weight cycle through $v$ if the MWC does not pass through any sampled vertex, a method inspired by the sequential algorithm of~\cite{chechik2021mwc}. In order to explore these neighborhoods efficiently, we perform a BFS computation with random scheduling from each vertex that is hop-restricted and restricted to the neighborhood. Additionally, to address congestion, we separately handle bottleneck vertices that send or receive a large number of messages.

The idea of bottleneck vertices has been used in~\cite{huang2017apsp,agarwal2020apsp} to control congestion while communicating information. This technique is used in the context of computing exact weighted APSP, in a subroutine to send distance information from all vertices to a set of sinks (common to all vertices) through shortest path trees that have been implicitly computed. 
They identify bottleneck vertices through which too many messages are to be sent, and compute distances through the bottleneck vertex separately. 
These algorithms can afford to use a super-linear number of rounds as they involve expensive Bellman-Ford operations, and they identify bottleneck vertices one at a time.
On the other hand in our algorithm, each vertex needs to send messages to a different set of vertices, i.e., the neighborhood containing short cycles. Another key difference is that our restricted BFS is performed on the fly simultaneously with identifying bottleneck vertices in a distributed manner, as we do not know the shortest path trees beforehand.
Finally, distances through all bottleneck vertices are computed with a pipelined hop-restricted BFS to maintain our sublinear round bound.

For {undirected unweighted graphs}, we compute the $\sqrt{n}$ closest neighbors efficiently using pipelining and compute cycles contained within them. We prove that for any cycle that extends outside a $\sqrt{n}$-neighborhood, a 2-approximation of this cycle is computed with a BFS from sampled vertices.

For {weighted graphs}, both directed and undirected, we use hop-bounded versions of the unweighted algorithms to compute approximations of short cycles. We use a scaling technique from~\cite{nanongkai2014approx}, where we construct a series of graphs with scaled weights such that distance-bounded shortest paths in the original graphs are approximated by some hop-bounded shortest path in a scaled graph.

\subsection{Prior Work}

\paragraph{Sequential Minimum Weight Cycle}
The problem of computing a minimum weight cycle in a given graph has been extensively studied in the sequential setting, for directed and undirected, weighted and unweighted graphs. It can be solved by computing All Pairs Shortest Paths (APSP) in the given graph in $O(n^3)$ time and in $\tilde{O}(mn)$ time. The hardness of computing MWC in the fine-grained setting was shown by~\cite{williams2010subcubic} for the $O(n^3)$-class and by \cite{agarwal2018finegrained} for the $O(mn)$ class. Fast approximation algorithms for computing MWC have been studied: \cite{chechik2021mwc} gives an $\tilde{O}(n^2)$ time and an $\tilde{O}(m\sqrt{n})$ time algorithm for computing a 2-approximation of directed MWC. For undirected unweighted graphs, \cite{itai1978mwc} computes MWC up to additive error 1 in $O(n^2)$ time, and a more general multiplicative $\alpha$-approximation can be computed in $\tilde{O}(n^{1+1/\alpha})$ time~\cite{kadria2022algorithmic}. For undirected weighted graphs, \cite{roditty2013mwc} computes a $\frac{4}{3}$-approximation in $\tilde{O}(n^2)$ time and \cite{kadria2023improved} computes a general $\frac{4}{3}\alpha$-approximation in $\tilde{O}(n^{1+1/\alpha})$ time.

\paragraph{Distrbuted Minimum Weight Cycle}
Lower and upper bounds for exact computation of MWC in the CONGEST model were given in~\cite{rparxiv}, with near-linear in $n$ round complexity bounds for directed graphs and undirected weighted graphs. These results for exact MWC are nearly optimal up to polylog factors. The lower bounds also apply to a $(2-\epsilon)$-approximation of MWC, but no bounds were known for coarser approximation.

\paragraph{Distributed Girth}
Minimum Weight Cycle in undirected unweighted graphs or girth has been studied in the distributed CONGEST model. An $O(n)$ algorithm for computing girth was given in~\cite{holzer2012apsp}, and a $\tilde{\Omega}(\sqrt{n})$ lower bound for computing girth was given in~\cite{frischknecht2012} which even applies to any $(2-\epsilon)$-approximation algorithm. An approximation algorithm that nearly matches this lower bound was given in~\cite{rparxiv}, improving on a result of~\cite{peleg2013girth}, with a $\tilde{O}(\sqrt{n}+D)$-round algorithm for a $(2-\frac{1}{g})$-approximation (where $g$ is the girth). For exact computation of girth, the gap between lower and upper bounds has been a longstanding open problem. 

\paragraph{Fixed-Length Cycle Detection} A problem closely related to girth is undirected $q$-cycle detection, where we want to check if a graph has a cycle of a certain length $q$. Lower bounds for $q$-cycle detection with $\Omega(n)$ lower bound for odd $q \ge 5$ and sublinear lower bounds for even $q \ge 4$ are given in~\cite{drucker2014}. The case of $q=3$ or triangle detection has been studied extensively~\cite{izumi2017triangle,chang2021near}. There is a $\tilde{O}(n^{1/3})$ algorithm for  triangle enumeration~\cite{chang2021near} and this result also applies to directed graphs~\cite{Pettie2022}. For even $q \ge 4$, sublinear round algorithms are given in~\cite{eden2021sublinear}. For directed $q$-cycle detection ($q \ge 4$), a tight linear lower bound was given in~\cite{rparxiv}.

\paragraph{CONGEST results for APSP and related problems}
The CONGEST round complexity of APSP~\cite{nanongkai2014approx} has been extensively studied, with nearly optimal upper and lower bounds of $\tilde{O}(n)$~\cite{bernsteinapsp} and $\Omega(\frac{n}{\log n})$~\cite{nanongkai2014approx} respectively. Upper and lower bounds for some related problems that have sequential $O(n^3)$ and $O(mn)$ algorithms have been studied in the CONGEST model, such as diameter~\cite{abboud2016,ancona2020}, replacement paths and second simple shortest paths~\cite{rparxiv}, radius and eccentricities~\cite{abboud2016,ancona2020}, and betweenness centrality~\cite{hoang2019round}.

\paragraph{CONGEST results for SSSP}
Our algorithms use distributed SSSP as a basic building block, and the CONGEST round complexity of both exact and approximate SSSP has been extensively researched~\cite{elkin2006mst,nanongkai2014approx,forster2018sssp,chechiksssp,cao2021approximatesssp,cao2023sssp}. For exact or $(1+\epsilon)$-approximate SSSP, the best known upper and lower bounds are $\tilde{O}(n^{2/5+o(1)}D^{2/5} + \sqrt{n} + D)$~\cite{cao2023sssp} and $\Omega(\sqrt{n}+D)$~\cite{elkin2006mst,sarma2012distributed} respectively. 

\section{Lower Bounds}
\label{sec:lb}

Our lower bounds use reductions from set disjointness which has an unconditional communication lower bound. Set Disjointness is a two-party communication problem, where two players Alice and Bob are given $k$-bit strings $S_a$ and $S_b$ respectively. Alice and Bob need to communicate and decide if the sets represented by $S_a$ and $S_b$ are disjoint, i.e., whether there is no bit position $i$, $1 \le i \le k$, with $S_a[i] = 1$ and $S_b[i] = 1$. A classical result in communication complexity states that Alice and Bob must exchange $\Omega(k)$ bits even if they are allowed shared randomness~\cite{kushilevitzcomm, razborov1992, baryossef2004}. Lower bounds using such a reduction also hold against randomized algorithms.

In an earlier paper~\cite{rparxiv}, we established a $\tilde{\Omega}(n)$ lower bound for exact computation of MWC in directed weighted and unweighted graphs, and our results are also announced in~\cite{mwc2024}. In Section~\ref{sec:dirmwcconstlb} we establish $\tilde{\Omega}(\sqrt{n})$ lower bounds for {\it any} constant factor approximation algorithms for MWC in directed graphs by adapting a general lower bound graph used for MST, SSSP and other graph problems~\cite{sarma2012distributed,elkin2006mst}. In Section~\ref{app:lower}, we modify these constructions to also apply to undirected weighted graphs.

\subsection{Lower Bound for Constant Approximation of Directed MWC}\label{sec:dirmwcconstlb}

We now consider the round complexity of computing an $\alpha$-approximation for a larger constant $\alpha \ge 2$. We show a $\Omega(\sqrt{n}/\log n)$ lower bound for any constant $\alpha$-approximation of directed MWC, both weighted and unweighted. Our technique is to adapt a general lower bound construction for undirected graphs in~\cite{sarma2012distributed} to directed MWC.

\begin{proof}[Proof of Theorem~\ref{thm:dirlb}.\ref{thm:dirlb:alphalb}]
    We use the construction in Figure~\ref{fig:mwcapproxlb}. The graph $G$ is constructed as a balanced binary tree with leaves $u_0, u_1,\dots u_{\ell}$ of height $p=\log \ell$, i.e., $\ell = 2^p$. All edges in the tree are directed towards the leaves except for edges on the path from $u_\ell$ to the root. There are also $q$ paths of length $\ell$, with directed path $i$ consisting of vertices $\langle v_0^i,v_1^i,\dots v_\ell^i \rangle$. Let $S_a, S_b$ be an instance of Set Disjointness on $q$ bits. For each $1\le i\le q$, the edge $(u_0,v_0^i)$ is present if and only if bit $i$ of $S_a$ is 1. Similarly, the edge $(v_{\ell}^i,u_{\ell})$ is present if and only if bit $i$ of $S_b$ is 1. For the remaining path vertices, we always add edges $(v_j^i, u_j)$ for $0 < j < \ell, 1 \le i \le q$. Note that only the edges incident to $u_0$ are directed towards path vertices, and others are directed towards leaf vertices.

    In the underlying undirected network of $G$, when Alice (who knows $S_a$) controls vertex $u_0$ and Bob (who knows $S_b$) controls vertex $u_{\ell}$, Lemma 4.1 of \cite{sarma2012distributed} proves a lower bound of $\Omega\left(\min(2^p, \frac{q}{2p \log n})\right)$ for computing set disjointness. We will now prove that $G$ has a directed cycle if and only if $S_a,S_b$ are not disjoint, thus obtaining a lower bound for any $\alpha$-approximation of directed MWC.

    We have directed the edges such that any cycle in the graph must involve a path from the root to $u_0$ along with a path from $u_0$ to $u_\ell$. This is because the only directed outgoing paths from the root are towards leaves $u_j \ne u_{\ell}$, and all leaves except $u_0$ and $u_{\ell}$ are sink vertices with no outgoing edge. So, any directed cycle must go through $u_0$ and along one of the $v^i$ paths to reach $u_{\ell}$. Such a cycle exists if and only if both $S_a[i]$ and $S_b[i]$ are 1, which means the sets are not disjoint. 
    
    Now, we set $\ell = \sqrt{n}/{\log n}$ and hence $p=\log \ell = \frac{1}{2}\log n - \log \log n$, and we set $q = \ell \cdot 2p \log n$. This gives a lower bound of $\Omega\left(\min(2^p, \frac{q}{2p \log n})\right) = \Omega(\ell) = \Omega\left(\frac{\sqrt{n}}{\log n}\right)$ for any algorithm determining whether $G$ has a directed cycle. This lower bound applies to any $\alpha$-approximation of directed MWC, since it must distinguish between directed cycles of finite and infinite length. Note that $G$ has a total of $q\cdot \ell + 2\ell = \Theta(n)$ vertices and undirected diameter $2p+2$, since any two vertices can be connected through paths in the tree --- so the lower bound holds even for graphs of diameter $\Theta(\log n)$.
\end{proof}

\begin{figure}
    \centering
    \begin{minipage}{0.48\textwidth}
        \centering
        \tikzstyle{vertex}=[circle, draw=black,minimum size=20pt]
        \tikzstyle{vertexsm}=[circle, draw=black,minimum size=5pt]
        \scalebox{0.45}{
        \begin{tikzpicture}
        
            \node[vertexsm] (u00) at (7,0) {};
            \node[vertexsm] (u0p2) at (3,-2) {};
            \node[vertexsm] (u0p1) at (1,-3.5) {};
            \node[vertexsm] (u1p1) at (5,-3.5) {};
            \node[vertexsm] (u1d) at (11,-3.5) {};
    
            \node (ul) at (4,-1.5) {\dots};
            \node (ur) at (9,-1.5) {\dots};
            \node (V) at (8,-8) {\dots};
            \node (V) at (8,-7) {\dots};
            \node (V) at (8,-9) {\dots};
            \node (V) at (8,-10) {\dots};
    
            \node[vertex] (up0) at (0,-5) {$u_0$};
            \node[vertex] (up1) at (2,-5) {$u_1$};
            \node[vertex] (up2) at (4,-5) {$u_2$};

            \node[vertex] (upd) at (12,-5) {$u_\ell$};
    
            \node[vertex] (v10) at (0,-7) {$v_{0}^1$};
            \node[vertex] (v11) at (2,-7) {$v_{1}^1$};
            \node[vertex] (v12) at (4,-7) {$v_{2}^1$};
            \node[vertex] (v1d1) at (10,-7) {};
            \node[vertex] (v1d) at (12,-7) {$v_{\ell}^1$};
    
            \node[vertex] (v20) at (0,-8) {$v_{0}^2$};
            \node[vertex] (v21) at (2,-8) {$v_{1}^2$};
            \node[vertex] (v22) at (4,-8) {$v_{2}^2$};
            \node[vertex] (v2d1) at (10,-8) {};
            \node[vertex] (v2d) at (12,-8) {$v_{\ell}^2$};
    
            \node[vertex] (vg0) at (0,-10) {$v_{0}^q$};
            \node[vertex] (vg1) at (2,-10) {$v_{1}^q$};
            \node[vertex] (vg2) at (4,-10) {$v_{2}^q$};
            \node[vertex] (vgd1) at (10,-10) {};
            \node[vertex] (vgd) at (12,-10) {$v_{\ell}^q$};
    
            \path[draw,thick,->] (u0p2) edge (u0p1);
            \path[draw,thick,->] (u0p2) edge (u1p1);
            \path[draw,thick,->] (u0p1) edge (up0);
            \path[draw,thick,->] (u0p1) edge (up1);
            \path[draw,thick,->] (u1p1) edge (up2);
    
            % \path[draw,thick,->,dashed] (up0) edge [bend right] (v10);
            \path[draw,thick,->,dashed] (up0) edge [bend right] (v20);
            \path[draw,thick,->,dashed] (up0) edge [bend right] (vg0);
    
            \path[draw,thick,<-] (up1) edge [bend left] (v11);
            \path[draw,thick,<-] (up1) edge [bend left] (v21);
            \path[draw,thick,<-] (up1) edge [bend left] (vg1);
            \path[draw,thick,<-] (up2) edge [bend left] (v12);
            \path[draw,thick,<-] (up2) edge [bend left] (v22);
            \path[draw,thick,<-] (up2) edge [bend left] (vg2);
    
            \path[draw,thick,<-,dashed] (upd) edge [bend left] (v1d);
            \path[draw,thick,<-,dashed] (upd) edge [bend left] (v2d);
            % \path[draw,thick,<-,dashed] (upd) edge [bend left] (vgd);
    
            \path[draw,thick,->] (v10) edge (v11);
            \path[draw,thick,->] (v11) edge (v12);
            \path[draw,thick,->] (v12) edge (6,-7);
            \path[draw,thick,->] (v1d1) edge (v1d);
    
            \path[draw,thick,->] (v20) edge (v21);
            \path[draw,thick,->] (v21) edge (v22);
            \path[draw,thick,->] (v22) edge (6,-8);
            \path[draw,thick,->] (v2d1) edge (v2d);
    
            \path[draw,thick,->] (vg0) edge (vg1);
            \path[draw,thick,->] (vg1) edge (vg2);
            \path[draw,thick,->] (vg2) edge (6,-10);
            \path[draw,thick,->] (vgd1) edge (vgd);
    
            \path[draw,thick,->] (u00) edge  (5.5,-1);
            \path[draw,thick,<-] (u00) edge  (8.5,-1);
            \path[draw,thick,<-] (u00) edge  (8.5,-1);
            \path[draw,thick,->] (upd) edge  (u1d);
            \path[draw,thick,->] (u1d) edge  (10,-5);
            \path[draw,thick,->] (u1d) edge  (10,-2);
        \end{tikzpicture}
        }
        \caption{Directed MWC Lower Bound for $\alpha$-approximation }
        \label{fig:mwcapproxlb}
    \end{minipage}
    \begin{minipage}{0.48\textwidth}
        \centering
        \tikzstyle{vertex}=[circle, draw=black,minimum size=20pt]
        \tikzstyle{vertexsm}=[circle, draw=black,minimum size=5pt]
        \scalebox{0.45}{
        \begin{tikzpicture}
        
            \node[vertexsm] (u00) at (7,0) {};
            \node[vertexsm] (u0p2) at (3,-2) {};
            \node[vertexsm] (u0p1) at (1,-3.5) {};
            \node[vertexsm] (u1p1) at (5,-3.5) {};
            \node[vertexsm] (u1d) at (11,-3.5) {};
    
            \node (ul) at (4,-1.5) {\dots};
            \node (ur) at (9,-1.5) {\dots};
            \node (V) at (8,-8) {\dots};
            \node (V) at (8,-7) {\dots};
            \node (V) at (8,-9) {\dots};
            \node (V) at (8,-10) {\dots};
    
            \node[vertex] (up0) at (0,-5) {$u_0$};
            \node[vertex] (up1) at (2,-5) {$u_1$};
            \node[vertex] (up2) at (4,-5) {$u_2$};

            \node[vertex] (upd) at (12,-5) {$u_L$};
    
            \node[vertex] (v10) at (0,-7) {$v_{0}^1$};
            \node[vertex] (v11) at (2,-7) {$v_{1}^1$};
            \node[vertex] (v12) at (4,-7) {$v_{2}^1$};
            \node[vertex] (v1d1) at (10,-7) {};
            \node[vertex] (v1d) at (12,-7) {$v_{L}^1$};
    
            \node[vertex] (v20) at (0,-8) {$v_{0}^2$};
            \node[vertex] (v21) at (2,-8) {$v_{1}^2$};
            \node[vertex] (v22) at (4,-8) {$v_{2}^2$};
            \node[vertex] (v2d1) at (10,-8) {};
            \node[vertex] (v2d) at (12,-8) {$v_{L}^2$};
    
            \node[vertex] (vg0) at (0,-10) {$v_{0}^g$};
            \node[vertex] (vg1) at (2,-10) {$v_{1}^g$};
            \node[vertex] (vg2) at (4,-10) {$v_{2}^g$};
            \node[vertex] (vgd1) at (10,-10) {};
            \node[vertex] (vgd) at (12,-10) {$v_{L}^g$};
    
            \path[draw,thick,-] (u0p2) edge (u0p1);
            \path[draw,thick,-] (u0p2) edge (u1p1);
            \path[draw,thick,-] (u0p1) edge (up0);
            \path[draw,thick,-] (u0p1) edge (up1);
            \path[draw,thick,-] (u1p1) edge (up2);
    
            % \path[draw,thick,-,dashed] (up0) edge [bend right] node[right] {$1$} (v10);
            \path[draw,thick,-,dashed] (up0) edge [bend right] node[left] {$1$} (v20);
            \path[draw,thick,-,dashed] (up0) edge [bend right] node[left] {$1$} (vg0);
    
            \path[draw,very thick,-] (up1) edge [bend left] node[left] {$\alpha n$} (v11);
            \path[draw,very thick,-] (up1) edge [bend left] node[right] {$\alpha n$} (v21);
            \path[draw,very thick,-] (up1) edge [bend left] node[right] {$\alpha n$} (vg1);
            \path[draw,very thick,-] (up2) edge [bend left] node[left] {$\alpha n$} (v12);
            \path[draw,very thick,-] (up2) edge [bend left] node[right] {$\alpha n$}(v22);
            \path[draw,very thick,-] (up2) edge [bend left] node[right] {$\alpha n$} (vg2);
    
            \path[draw,thick,-,dashed] (upd) edge [bend left] node[left] {$1$} (v1d);
            \path[draw,thick,-,dashed] (upd) edge [bend left] node[right] {$1$} (v2d);
            % \path[draw,thick,-,dashed] (upd) edge [bend left] node[right] {$1$} (vgd);
    
            \path[draw,thick,-] (v10) edge (v11);
            \path[draw,thick,-] (v11) edge (v12);
            \path[draw,thick,-] (v12) edge (6,-7);
            \path[draw,thick,-] (v1d1) edge (v1d);
    
            \path[draw,thick,-] (v20) edge (v21);
            \path[draw,thick,-] (v21) edge (v22);
            \path[draw,thick,-] (v22) edge (6,-8);
            \path[draw,thick,-] (v2d1) edge (v2d);
    
            \path[draw,thick,-] (vg0) edge (vg1);
            \path[draw,thick,-] (vg1) edge (vg2);
            \path[draw,thick,-] (vg2) edge (6,-10);
            \path[draw,thick,-] (vgd1) edge (vgd);
    
            \path[draw,thick,-] (u00) edge  (5.5,-1);
            \path[draw,thick,-] (u00) edge  (8.5,-1);
            \path[draw,thick,-] (u00) edge  (8.5,-1);
            \path[draw,thick,-] (upd) edge  (u1d);
            \path[draw,thick,-] (u1d) edge  (10,-5);
            \path[draw,thick,-] (u1d) edge  (10,-2);
        \end{tikzpicture}
        }
        \caption{Undirected Weighted MWC Lower Bound for $\alpha$-approximation}
        \label{fig:mwcapproxunwlb}
    \end{minipage}
    \vspace{-0.1in}
    \caption*{Dotted lines refer to edges customized to set disjointness instance}
\end{figure}
\subsection{Undirected MWC Lower Bounds}
\label{app:lower}

\subsubsection{ Undirected Weighted MWC}
\label{sec:undirmwclb}

For exact computation of undirected weighted MWC, we established a $\tilde{\Omega}(n)$ lower bound in a previous paper~\cite{rparxiv}, and our results are announced in~\cite{mwc2024}. For $\alpha$-approximation, we adapt our directed MWC lower bounds in Section~\ref{sec:lb} to undirected weighted graphs to establish the $\tilde{\Omega}(\sqrt n)$ lower bound for any constant $\alpha$ (Theorem~\ref{thm:undirwt}.\ref{thm:undirwt:alphalb}).

\begin{proof}[Proof of Theorem~\ref{thm:undirwt}.\ref{thm:undirwt:alphalb}]
    We modify the directed MWC construction and make all edges undirected, shown in Figure~\ref{fig:mwcapproxunwlb}. We use weights to force the minimum weight cycle to use the edges corresponding to $S_a$ and $S_b$. Set all edge weights to one, except the edges $(u_1,v_1^i), (u_2,v_2^i), \cdots ,$ $(u_{L-1},v_{L-1}^i)$ for all $1\le i\le g$ which are set to weight $\alpha n$. This means that any cycle involving one of these high-weight edges has weight at least $\alpha n +1$. The only possible cycles using exclusively weight 1 edges use the path from root to $u_0$ and $u_L$ along with one of the paths $v^i$ -- such a cycle exists if and only if $S_a[i]$ and $S_b[i]$ are both 1. So, if the sets of not disjoint, there is a weighted cycle of weight $< n$. Otherwise, any cycle has weight at least $\alpha n+1$. This means any  $\alpha$-approximation algorithm for weighted undirected MWC can determine if $S_a,S_b$ are disjoint and we obtain an $\Omega\left(\frac{\sqrt{n}}{\log n}\right)$ round lower bound, using the approach in~\cite{sarma2012distributed}.
\end{proof}

\subsubsection{ Undirected Unweighted MWC}
In order to adapt our construction to undirected unweighted MWC, we replace the weighted edges in Figure~\ref{fig:mwcapproxunwlb} by unweighted paths. This increases the number of vertices in our lower bound construction, leading to a lower bound of $\tilde{\Omega}(n^{1/4})$ proving Theorem~~\ref{thm:undirunwt}.\ref{thm:undirunwt:alphalb}.

\begin{proof}[Proof of Theorem~\ref{thm:undirunwt}.\ref{thm:undirunwt:alphalb}]
    In Figure~\ref{fig:mwcapproxunwlb}, we replace each edge of weight $\alpha n$ by a path of length $\alpha n$, i.e., containing $\alpha n -1$ additional vertices. Thus, we obtain an undirected unweighted graph with $n' = n + \ell \cdot 2^p \cdot (\alpha n -1) = \Theta(\alpha n^2)$ vertices. Our reduction from set disjointness still holds as in the weighted case, giving a lower bound for $\Omega\left(\frac{\sqrt{n}}{\log n}\right)$. Since the new graph size if $n'$ with $n = \Theta(\sqrt{n})$ (since $\alpha$ is constant), we obtain a lower bound of $\Omega\left(\frac{n^{1/4}}{\log n}\right)$, proving our result.
\end{proof}

\begin{figure}
        \centering
        \tikzstyle{vertex}=[circle, draw=black,minimum size=20pt]
        \tikzstyle{vertexsm}=[circle, draw=black, fill=black, minimum size = 0.1cm, inner sep=0pt]
        \scalebox{0.8}{
            \begin{tikzpicture}
    
                \draw[blue] (1.5,-1.8) circle (2cm);
                \draw[blue] (6.5,-1.8) circle (2cm);

                \node (circle) at (0,0) {$G_k^a$};
                \node (circle) at (5,0) {$G_k^b$};

                % \node[vertex] (l1) at (0,0) {$\ell_1$};
                % \node[vertex] (l1p) at (2,0) {$\ell_1'$};
                % \node[vertex] (r1) at (6,0) {$r_1$};
                % \node[vertex] (r1p) at (8,0) {$r_1'$};
            
                \node[vertex] (ua) at (0.2,-1) {$u^a$};
                \node[vertex] (ub) at (5.2,-1) {$u^b$};

                \node[vertex] (va) at (1.2,-2.4) {$v^a$};
                \node[vertex] (vb) at (6.2,-2.4) {$v^b$};
                % \node[vertex] (l2p) at (2,-1) {$\ell_2'$};
                % \node[vertex] (r2) at (6,-1) {$r_2$};
                % \node[vertex] (r2p) at (8,-1) {$r_2'$};
            
                % \node[vertex] (lip) at (2,-2.5) {$\ell_i'$};
                % \node[vertex] (rip) at (8,-2.5) {$r_i'$};
            
                % \node[vertex] (lt) at (0,-4) {$\ell_k$};
                % \node[vertex] (ltp) at (2,-4) {$\ell_k'$};
                % \node[vertex] (rt) at (6,-4) {$r_k$};
                % \node[vertex] (rtp) at (8,-4) {$r_k'$};
        
                % \node[vertexsm] (l1tot) at (1,-1.25) {};
                % \node[vertexsm] (r1tot) at (7,-1.25) {};
            
                \node (topmid) at (4,1.5) {};
                \node (botmid) at (4,-5) {};
                
                \path[draw,thick,-] (ua) edge[bend left] (ub);
                \path[draw,thick,-] (va) edge[bend left] (vb);

                \path[draw,very thick,-] (ua) edge[dashed] node[pos=0.3,vertexsm] {}  node[pos=0.7,vertexsm] {} node[pos=0.5,right] {$t$} (va) ;
                \path[draw,very thick,-,dashed] (ub) edge  node[pos=0.3,vertexsm] {}  node[pos=0.7,vertexsm] {} node[pos=0.5,right] {$t$} (vb) ;
                % \path[draw,thick,-] (l1p) edge[bend left] (r1p);
            
                % \path[draw,thick,-] (l2) edge[bend left] (r2);
                % \path[draw,thick,-] (l2p) edge[bend left] (r2p);
            
                % \path[draw,thick,-] (lip) edge[bend left] (rip);
        
                % \path[draw,thick,-] (l1) edge (l1tot) ;
                % \path[draw,thick,-] (l1tot) edge (lip) ;
                % \path[draw,thick,-] (r1) edge (r1tot) ;
                % \path[draw,thick,-] (r1tot) edge (rip) ;
        
                % \path[draw,very thick,-,dashed] (l1) edge node[left] {$2$} (lip) ;
                % \path[draw,very thick,-,dashed] (r1) edge node[left] {$2$} (rip) ;
            
                % \path[draw,thick,-] (lt) edge[bend left] (rt);
                % \path[draw,thick,-] (ltp) edge[bend left] (rtp);
            
                \path[draw=blue,dashed] (topmid) edge (botmid);
            \end{tikzpicture}
        }
        \caption{Undirected Unweighted MWC Lower Bound for $2.5-\epsilon$-approximation}
        \label{fig:mwcapproxunwlb2}
    \vspace{-0.1in}
    \caption*{Dotted lines refer to edges customized to set disjointness instance}
\end{figure}

Additionally, we present a $\tilde{\Omega}(n^{1/3})$ lower bound for $(2.5-\epsilon)$-approximation,proving Theorem~\ref{thm:undirunwt}.\ref{thm:undirunwt:alphalb2} using a different technique inspired by constructions of~\cite{drucker2014}. Our construction is based on the Erdos conjecture: There exists a (undirected unweighted) graph $G_k=(V,E)$ such that $m = |E| = O(n^{1+1/k})$ and $G_k$ has girth $\ge 2k+1$. We have explicit constructions for $k=1,2,3,5$. 
%TODO citations.

We construct graph $G'$ made of two copies of $G_k$, denoted $G_k^a$ and $G_k^b$, and we denote $u^a \in V(G_k^a), u^b \in V(G_k^b)$ to be copies of the vertex $u \in V(G_k)$. We number the $m$ edges of $G_k$ from $1$ to $m$. Each edge in $G_k^a$, $G_k^b$ is replaced by a path of $t$ edges (constant $t$ to be fixed later) to simulate $t$ weight edges in an unweighted graph construction.
We add edges $(u^a,u^b)$ for each $u \in V(G_k)$ (weight 1). Note that $G' = (V',E')$ has $n' = m\cdot t + n = O(n^{1+1/k})$ vertices.

We use a reduction from $m$-bit set disjointness. Given set disjointness instance $S_a, S_b \in \{0,1\}^m$, for each $1 \le i \le m$, edge $i$ is present (as $t$ edge path) in $G_k^a$ iff $S_a[i]=1$ and edge $i$ is present (as $t$ edge path) in $G_k^b$ iff $S_b[i]=1$. Alice owns $G_k^a$ while Bob owns $G_k^b$. Let $g$ be the girth of $G'$. We prove the following lemma which is the basis of our reduction.

\begin{lemma}
    If $\exists i, S_a[i]=S_b[i]=1$, $g \le 2t + 2$. Otherwise, $g \ge (2k+1)t$
\end{lemma}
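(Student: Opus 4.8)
The plan is to prove the two cases separately, analyzing the structure of cycles in $G'$ based on how the edges $(u^a,u^b)$ bridging the two copies are used. The key observation is that $G'$ is built from two copies $G_k^a, G_k^b$ where each original edge is subdivided into a path of $t$ edges, connected by unit-weight ``bridge'' edges $(u^a,u^b)$. A cycle in $G'$ either stays entirely within one copy, or crosses between copies using at least two bridge edges. First I would handle the easy direction: if there exists an index $i$ with $S_a[i]=S_b[i]=1$, then edge $i$ (say $i=(u,v)$ in $G_k$) is present as a $t$-edge path in \emph{both} copies. This yields a cycle using the $t$-path $u^a \to v^a$ in $G_k^a$, the bridge edge $(v^a,v^b)$, the $t$-path $v^b \to u^b$ in $G_k^b$, and the bridge edge $(u^b,u^a)$. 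This cycle has length exactly $2t+2$, so $g \le 2t+2$.

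For the lower bound direction, I would assume the sets are disjoint, so no index $i$ has both $S_a[i]=S_b[i]=1$; equivalently, no original edge of $G_k$ appears as a $t$-path in \emph{both} copies simultaneously. I would argue that any cycle in $G'$ must have length at least $(2k+1)t$. The main case analysis is by the number of bridge edges the cycle uses. If a cycle uses zero bridge edges, it lies entirely within one copy $G_k^a$ or $G_k^b$; since each copy is a subdivision of a subgraph of $G_k$, and $G_k$ has girth $\ge 2k+1$, any cycle there corresponds to a cycle of $\ge 2k+1$ original edges, each subdivided into $t$ edges, giving length $\ge (2k+1)t$. If a cycle uses bridge edges, it must use an even number $\ge 2$ of them (since the bridges form the only connections between the two sides, and a cycle crossing over must cross back). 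I would show that even in the crossing case, disjointness forces the cycle to traverse enough subdivided paths to reach length $\ge (2k+1)t$.

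The hard part will be the crossing case: bounding the length of a cycle that uses two or more bridge edges under the disjointness hypothesis. The crucial point is that a cycle using exactly two bridges enters copy $a$ at some vertex $u^a$ and exits at $v^a$ (traversing an $a$-path between them of some hop length), crosses to $b$, traverses a $b$-path from $v^b$ back to $u^b$, and crosses back. If the $a$-portion and $b$-portion projected down to $G_k$ were to both use the single original edge $(u,v)$, this would require that edge to be present in both copies, contradicting disjointness. So at least one of the two portions must project to a \emph{path of length $\ge 2$} in $G_k$ when the cycle is short, and more carefully, the projections of the $a$-side and $b$-side portions to $G_k$ together must form a closed walk; using the girth bound on $G_k$ I would argue the total number of original edges traversed is at least $2k+1$ unless both sides collapse onto a single shared original edge, which disjointness forbids. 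Since each original edge contributes $t$ edges in $G'$, and the two unit-weight bridges add only a constant, the cycle length is $\ge (2k+1)t$ for $t$ chosen large enough relative to the constant bridge contribution. For cycles using four or more bridges the count only increases, so these are subsumed. I would make the projection-to-$G_k$ argument precise by mapping each subdivided $t$-path back to its original edge and observing that the union of $a$-side and $b$-side original edges, with the disjointness constraint ruling out a length-one closed walk, forms a nontrivial closed structure in $G_k$ of girth $\ge 2k+1$.
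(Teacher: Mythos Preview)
Your approach is essentially the same as the paper's: project the cycle $C$ to a closed walk $C'$ in $G_k$ by identifying the two copies and dropping bridge edges, observe that disjointness (together with simplicity of $C$) forces $C'$ to have no repeated edges, and then apply the girth bound of $G_k$ to conclude $|C'|\ge 2k+1$, whence $|C|=|C'|\cdot t + s \ge (2k+1)t$.

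One minor confusion worth correcting: you write that the bound holds ``for $t$ chosen large enough relative to the constant bridge contribution,'' but this caveat is unnecessary. The bridge edges only \emph{add} to the length of $C$, so once you have $|C'|\ge 2k+1$ the inequality $|C|\ge (2k+1)t$ holds for every $t\ge 1$. Also, disjointness does more work than just ruling out the single-edge degenerate case you mention for two bridges: it ensures that \emph{no} edge of $G_k$ is traversed by both an $a$-portion and a $b$-portion, which combined with simplicity of $C$ in $G'$ is exactly what gives $C'$ no repeated edges in the general multi-bridge case (so your ``these are subsumed'' remark is justified, but for this reason rather than a mere count increase).
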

\begin{proof}
    First, assume $\exists i S_a[i]=S_b[i]=1$. Let edge $i$ have endpoints $(u,v)$ in $G_k$. The cycle $u^a$-$v^a$-$v^b$-$u^a$ has length $2t+2$.

    Now, assume otherwise that $\not\exists i S_a[i]=S_b[i]=1$. Let $C$ be a cycle of length $g$ in $G'$. If $C$ is completely contained within $G_k^a$ or $G_k^b$, then by construction of $G_k$, $C$ has at least $2k+1$ edges of $G_k$ and hence has length $(2k+1)t$ in $G'$.

    Assume $C$ contains vertices of both $G_k^a$ and $G_k^b$. $C$ can be decomposed into paths contained within $G_k^a$ and $G_k^b$ along with crossing edges between $G_k^a, G_k^b$. Say $C = v_1^a, \dots v_2^a, v_2^b \dots v_3^b,$ $v_3^a \dots v_4^a, \dots v_1^a$. Now consider mapping all paths contained within a copy to $G_k$, i.e. the path $C' = v_1,\dots v_2, \dots v_3, \dots v_4 \dots v_1$. Although all these edges may not be present in $G'$ since $S_a[i]$ or $S_b[i]$ is 0, all edges in $C'$ are valid edges in $G_k$ as either $S_a[i]$ or $S_b[i]$ was 1. This gives us a cycle of length $|C'|$ in $G_k$ hence $|C'| \ge 2k+1$. If cycle $C$ uses $s \ge 1$ crossing edges between $G_k^a, G_k^b$, then the length of $C$ in $G'$ is $|C'|t + s \ge (2k+1)t + 1$. Hence $g \ge (2k+1)t$.    
\end{proof}

Now, we complete the reduction. If there is an algorithm $\mathcal{A}$ for $(k+\frac{1}{2}-\epsilon)$-approximation of girth taking $R(n)$ rounds, Alice and Bob can simulate it in $O(R(n) \cdot n \log n)$ rounds as there are $2n$ crossing edges. If the sets $S_a$, $S_b$ are disjoint, the algorithm outputs value $\ge g = (2k+1)t$ as proven in the Lemma. If sets are not disjoint, the algorithm outputs value $\le (k+\frac{1}{2}-\epsilon)g = (k+\frac{1}{2}-\epsilon) \cdot (2t+2) = (2k+1)t + 2k+1-\epsilon\cdot (2t+2)$. We choose parameter $t$ such that $\epsilon\cdot (2t+2) > 2k+1$ ,i.e, $t = \Theta(1/\epsilon)$. Thus we can verify set disjointness in $O(R(n) \cdot n \log n)$ rounds by Alice and Bob simulating $\mathcal{A}$.

By set disjointness communication lower bound, $R(n) \cdot n\log n = \Omega(m) = \Omega(n^{1+1/k}) \Rightarrow R(n) = \Omega(\frac{n^{1/k}}{\log n})$. Note that $G'$ has $N = n^{(k+1)/k}$ vertices as we add a constant ($t$-1) number of edges to each of the $m$ edges ,i.e., $n = N^{k/(k+1)}$. So, in terms of graph size $N$, the lower bound is $R(N) = \Omega(\frac{N^{1/(k+1)}}{\log n})$.

Concretely, for $k=2$, we have lower bound of $\tilde{\Omega}(n^{1/3})$ for $(2.5-\epsilon)$-approximation. For $k=3$, we have lower bound of $\tilde{\Omega}(n^{1/4})$ for $(3.5-\epsilon)$-approximation but this is superseded by our earlier bound of  $\tilde{\Omega}(n^{1/4})$ lower bound for any $\alpha$-approximation. So, this technique does not provide non-trivial lower bounds for $k > 2$.

\section{\texorpdfstring{Multiple Source SSSP from $k \ge n^{1/3}$ sources}{Multiple Source SSSP from k>n1/3 sources}}
\label{sec:multiplesssp}

\begin{algorithm}[t]
    \caption{Exact $n^{1/3}$-source Directed BFS algorithm}
    \begin{algorithmic}[1]
        \Require Directed unweighted graph $G=(V,E)$, set of sources $U \subseteq V$ with $|U| = k = n^{1/3}$.
        \Ensure Every vertex $v$ computes $d(u,v)$ for each source $u \in U$.
        \State Let $h = n^{2/3}$. \label{alg:n3sssp:param} Construct set $S \subseteq V$ by sampling each vertex $v \in V$ with probability $\Theta(\frac{\log n}{h})$. W.h.p. in $n$, $|S| = \tilde{\Theta}(n^{1/3})$.\label{alg:n3sssp:samp}
        \State Compute $h$-hop directed BFS from each vertex in $S$. Repeat this computation in the reversed graph. This takes $O(|S| + h) = \tilde{O}(n^{2/3})$ rounds.\label{alg:n3sssp:sampbfs} \rightComment{Computes shortest path distances of $\le h$ hops.}
        \lineComment{The following lines compute ($>h$)-hop shortest path distances.}
        \State Construct a skeleton graph on vertex set $S$: For each directed $h$-hop shortest path in the underlying graph $G$ between sampled vertices found in line~\ref{alg:n3sssp:sampbfs}, add a directed edge with weight equal to shortest path distance.  \label{alg:n3sssp:skeleton} \rightComment{Internal computation}
        \State Share all edges of the skeleton graph by broadcast, node $v$ broadcasts all its outgoing edges. We broadcast up to $|S|^2$ values in total, which takes $O(|S|^2+D) = \tilde{O}(n^{2/3}+D)$ rounds. \label{alg:n3sssp:skeletonbroad}
        \State Each sampled vertex internally computes all pairs shortest paths in the skeleton graph using the broadcast values. \label{alg:n3sssp:skeletonapsp}
        \State Perform $h$-hop directed BFS from each source $u \in U$, in $O(h+k) = \tilde{O}(n^{2/3})$ rounds. If any sampled vertex $s \in S$ is visited during this BFS, $s$ broadcasts distance $d(u,s)$. We broadcast up to $k\cdot |S| = \tilde{\Theta}(n^{2/3})$ values, taking $\tilde{O}(n^{2/3}+D)$ rounds. \label{alg:n3sssp:kbfs}
        \State Using the broadcast information, sampled vertices determine their shortest path distance to sources in $U$: if distance $d(u,t)$ was broadcast for some $u \in U, t \in S$, each sampled vertex $s \in S$ locally sets $d(u,s) \gets \min(d(u,s), d(u,t)+d(t,s))$. \label{alg:n3sssp:sampsource}
        \State Each sampled vertex $s \in S$ propagates distance $d(u,s)$ for each $u \in U$ through $h$-hop BFS trees computed in line~\ref{alg:n3sssp:sampbfs}. Using random scheduling~\cite{ghaffarischeduling}, this takes $\tilde{O}(h+k|S|) = \tilde{O}(n^{2/3})$ rounds.\label{alg:n3sssp:propbfs}
        \State Each vertex $v$ receives distance $d(u,s)$ for source $u$ from a sampled vertex $s$ that contains $v$ in its $h$-hop BFS tree, and $v$ computes $d(u,v) \gets \min_{s \in S}( d(u,s) + d(s,v))$.
    \end{algorithmic}
    \label{alg:n3sssp}
\end{algorithm}

We present algorithms to compute directed $k$-source BFS in unweighted graphs and $k$-source SSSP in weighted graphs. We present an exact algorithm for directed unweighted BFS, which uses techniques of sampling and constructing a skeleton graph on sampled vertices, which are methods used in CONGEST single source reachability and SSSP algorithms~\cite{ghaffari2015reach,nanongkai2014approx}. For approximate $k$-source SSSP in undirected graphs, an algorithm taking $\tilde{O}(\sqrt{nk}+D)$ rounds was given in~\cite{elkin2019hopset}. We present an algorithm for $k$-source directed SSSP, utilizing a recent directed hopset construction from~\cite{cao2021approximatesssp}.

We start by presenting Algorithm~\ref{alg:n3sssp} for a directed unweighted graph $G=(V,E)$, computing $n^{1/3}$-source \textit{exact} directed BFS in $\tilde{O}(n^{2/3}+D)$ rounds, and later generalize our result to $k \ge n^{1/3}$ sources. We then extend our algorithm to weighted graphs to compute $k$-source $(1+\epsilon)$-approximate SSSP. 

Let $U \subseteq V$ be the set of sources. Algorithm~\ref{alg:n3sssp} first randomly samples a vertex set $S \subseteq V$ of size $\tilde{\Theta}(n^{1/3})$ in line~\ref{alg:n3sssp:samp}. We define a (virtual) skeleton graph on this vertex set $S$, where for vertices $u,v \in S$, an edge $(u,v)$ is added iff there is a directed path from $u$ to $v$ of at most $h=n^{2/3}$ hops in $G$. The skeleton graph is directed and weighted, with the weight of each skeleton graph edge being the $h$-hop bounded shortest path distance in $G$. The skeleton graph edges are determined using an $h$-hop directed BFS from each sampled vertex in line~\ref{alg:n3sssp:sampbfs}, and each sampled vertex uses this information to internally determine its outgoing edges in line~\ref{alg:n3sssp:skeleton}. These skeleton graph edges are then broadcast to all vertices in line~\ref{alg:n3sssp:skeletonbroad}. All pairs shortest path distances in the skeleton graph can be computed locally at each vertex in line~\ref{alg:n3sssp:skeletonapsp} using these broadcast distances, due to the chosen sampling probability.

In line~\ref{alg:n3sssp:kbfs}, an $h$-hop BFS is performed from each source, and each vertex that is at most $h$ hops from a source can compute its distance from that source. We now compute distances from each source to sampled vertices (regardless of hop-length) in line~\ref{alg:n3sssp:sampsource} using the $h$-hop bounded distances from line~\ref{alg:n3sssp:kbfs} along with skeleton graph distances from line~\ref{alg:n3sssp:sampbfs}. Finally distances from each source to all vertices are computed in line~\ref{alg:n3sssp:propbfs}, by propagating the distances computed in line~\ref{alg:n3sssp:sampsource} through the $h$-hop BFS trees rooted at each sampled vertex. This allows all vertices to locally compute their distance from all sources.

\begin{lemma}
    \label{lem:n3bfs}
    Algorithm~\ref{alg:n3sssp} computes exact directed BFS from $k=n^{1/3}$ sources in $\tilde{O}(n^{2/3} + D)$ rounds.
\end{lemma}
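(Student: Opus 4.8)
The plan is to establish correctness and round complexity separately, tracking both through the eight lines of Algorithm~\ref{alg:n3sssp}. The central structural fact I would prove first is that the chosen sampling probability $\Theta(\frac{\log n}{h})$ with $h = n^{2/3}$ guarantees, with high probability, that \emph{every} shortest path in $G$ of more than $h$ hops contains a sampled vertex within any window of $h$ consecutive hops. The standard argument: fix a shortest path $P$ and any $h$ consecutive vertices on it; the probability that none is sampled is $(1 - \Theta(\frac{\log n}{h}))^h = n^{-\Theta(1)}$, and a union bound over the $O(n^2)$ relevant (source, target) pairs and their paths makes this hold simultaneously. This hitting property is what makes the skeleton graph on $S$ a faithful sparsifier: for any two sampled vertices $s, t$, the true shortest path distance equals the shortest path in the skeleton graph (whose edges are $h$-hop distances between consecutive sampled vertices along $P$), since we can decompose $P$ into segments of at most $h$ hops, each with sampled endpoints.

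With the hitting property in hand, correctness follows by case analysis on the hop-length of the true shortest path from a source $u \in U$ to a target $v$. If $d_{\mathrm{hop}}(u,v) \le h$, then line~\ref{alg:n3sssp:kbfs}'s $h$-hop BFS from $u$ directly reaches $v$ with the correct distance. If $d_{\mathrm{hop}}(u,v) > h$, I would argue that the path passes through some sampled vertex $s$: line~\ref{alg:n3sssp:sampsource} correctly computes $d(u,s)$ for all $s \in S$ (combining the $h$-hop distance from $u$ to the first sampled vertex $t$ on the path, broadcast in line~\ref{alg:n3sssp:kbfs}, with the skeleton APSP distance $d(t,s)$ from line~\ref{alg:n3sssp:skeletonapsp}), and then line~\ref{alg:n3sssp:propbfs} propagates $d(u,s)$ outward through the $h$-hop BFS trees so that $v$ recovers $d(u,v) = \min_{s \in S}(d(u,s) + d(s,v))$ using the reverse-BFS distances from line~\ref{alg:n3sssp:sampbfs}. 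The reversed-graph BFS in line~\ref{alg:n3sssp:sampbfs} is what lets a sampled vertex $s$ know the distance $d(s,v)$ to vertices $v$ in its forward $h$-hop reachability, so the final minimization is well-defined locally.

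For the round complexity I would simply collect the per-line bounds, each of which the algorithm already annotates: line~\ref{alg:n3sssp:sampbfs} runs $|S| = \tilde\Theta(n^{1/3})$ many $h$-hop BFS computations in $\tilde{O}(|S| + h) = \tilde{O}(n^{2/3})$ rounds (pipelining the BFS fronts); lines~\ref{alg:n3sssp:skeletonbroad} and~\ref{alg:n3sssp:kbfs} broadcast $O(|S|^2) = \tilde O(n^{2/3})$ and $O(k|S|) = \tilde O(n^{2/3})$ values respectively in $\tilde{O}(n^{2/3} + D)$ rounds; lines~\ref{alg:n3sssp:skeletonapsp} and~\ref{alg:n3sssp:sampsource} are purely local after the broadcasts; and line~\ref{alg:n3sssp:propbfs} propagates $k|S| = \tilde\Theta(n^{2/3})$ distance values through the overlapping BFS trees using random scheduling~\cite{ghaffarischeduling} in $\tilde{O}(h + k|S|) = \tilde{O}(n^{2/3})$ rounds. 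Summing gives $\tilde{O}(n^{2/3} + D)$.

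I expect the main obstacle to lie in line~\ref{alg:n3sssp:propbfs}: the distance propagation happens through $|S|$ distinct $h$-hop BFS trees that overlap arbitrarily at individual edges, so a naive schedule could force a single edge to carry $\Theta(k|S|)$ messages serialized across trees of depth $h$, giving a product $h \cdot k|S|$ that is too large. The key is that random scheduling of the $|S|$ broadcast-like propagations, each of dilation $h$ and total congestion $k|S|$, completes in $\tilde O(h + k|S|)$ rounds rather than the product; I would need to verify that the congestion-plus-dilation framework of~\cite{ghaffarischeduling} applies here, i.e.\ that each vertex forwards at most $k$ distance values per tree and the trees have depth at most $h$. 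The correctness subtlety paired with this is ensuring that propagating $d(u,s)$ down $s$'s forward $h$-hop tree reaches exactly those $v$ for which $s$ lies within $h$ hops on the $u$-$v$ shortest path — which is guaranteed by the hitting property establishing that such an $s$ always exists within the final $h$-hop window before $v$.
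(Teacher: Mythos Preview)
Your proposal is correct and follows essentially the same route as the paper: you establish the hitting property of the sampled set $S$ on $h$-hop windows, do the same two-case correctness analysis (short paths handled by line~\ref{alg:n3sssp:kbfs}, long paths stitched through a sampled $t$ near $u$ and a sampled $s$ near $v$ via the skeleton APSP), and bound the round complexity line by line with the congestion-plus-dilation argument for line~\ref{alg:n3sssp:propbfs}. The paper's proof is structured identically, and your identification of line~\ref{alg:n3sssp:propbfs} as the only place needing care (congestion $k|S|$, dilation $h$, hence $\tilde O(h+k|S|)$ via~\cite{ghaffarischeduling}) matches the paper exactly.
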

\begin{proof}
    \noindent 
    \textbf{\textit{Correctness:}} We show that our algorithm correctly computes distances so that each vertex $v \in V$ knows $d(u,v)$ for each source $u \in U$. In line~\ref{alg:n3sssp:sampbfs}, $h$-hop distances between sampled vertices are computed. Due to our sampling probability, any path of $h$-hops contains a sampled vertex w.h.p. in $n$. Thus, any shortest path between sampled vertices can be decomposed into $h$-hop bounded shortest paths that were computed in line~\ref{alg:n3sssp:sampbfs}. These $h$-hop bounded shortest paths make up the edges of the skeleton graph constructed in line~\ref{alg:n3sssp:skeleton}, and we broadcast all these distances in line~\ref{alg:n3sssp:skeletonbroad}. After line~\ref{alg:n3sssp:skeletonapsp}, all vertices have locally determined the shortest path distance between all pairs of sampled vertices.

    In line~\ref{alg:n3sssp:kbfs}, we compute $h$-hop directed BFS, $h=n^{2/3}$, from each of the $k=n^{1/3}$ sources, so all vertices within $h$ hops from a source have the correct distance. We now consider vertices further away --- consider a vertex $v$ such that a shortest path to source $u$ has more than $h$ hops. By our choice of sampling probability, there is a sampled vertex $s \in S$ on this shortest path at most $h$ hops from $v$ w.h.p. in $n$. If the shortest path from $u$ to $s$ has less than $h$ hops, then $s$ knows the distance $d(u,s)$ through the $h$-hop directed BFS in line~\ref{alg:n3sssp:kbfs} and propagates this distance to $v$ in line~\ref{alg:n3sssp:propbfs}. Otherwise, if the $u$-$s$ shortest path has more than $h$ hops, then w.h.p. in $n$ there is another sampled vertex $s' \in S$ that is on this shortest path at most $h$ hops from $u$. Thus, $s'$ knows the distance $d(u,s')$ in line~\ref{alg:n3sssp:kbfs} and broadcasts this distance. Now, vertex $s$ can compute distance $d(u,s) = \min_{s' \in S} d(u,s') + d(s',s)$. Thus, after line \ref{alg:n3sssp:sampsource} each $s\in S$ knows $d(u,s)$ for each $u \in U$. Now, since we had chosen $s$ such that $s$-$v$ shortest path has at most $h$ hops, the distance $d(u,s)$ is propagated to $v$ in line~\ref{alg:n3sssp:propbfs} and the distance $d(u,v) = \min_{s\in S} d(u,s) + d(s,v)$ is correctly computed.
    
    \noindent
    \textbf{\textit{Round Complexity:}} We perform BFS from $k$ sources restricted to $h$ hops in $O(k+h)$ rounds~\cite{lenzen2019distributed,hoang2019round} in lines~\ref{alg:n3sssp:sampbfs},\ref{alg:n3sssp:kbfs}. We also broadcast $M$ messages in $O(M+D)$ rounds~\cite{peleg2000distributed} in lines~\ref{alg:n3sssp:skeletonbroad},\ref{alg:n3sssp:kbfs}. For line~\ref{alg:n3sssp:propbfs} we used randomized scheduling~\cite{ghaffarischeduling} to pipeline the computation from all $|S|$ sampled vertices. We have total congestion $O(k|S|)$ (maximum number of messages through a single edge) since for each $s \in S$, propagating $k$ distances contributes $O(k)$ congestion. Our dilation is $O(h)$ as we propagate up to $h$ hops. So we can perform line~\ref{alg:n3sssp:propbfs} in $\tilde{O}(h+k|S|) = \tilde{O}(n^{2/3})$ rounds. The total round complexity of our algorithm is $\tilde{O}(n^{2/3} +D)$ rounds.
\end{proof}

We repeat the round complexity argument above with the following change in parameter to obtain our $k$-source directed BFS result.

\begin{lemma}
    We can compute exact directed BFS from $k\ge n^{1/3}$ sources in $\tilde{O}(\sqrt{nk} + D)$ rounds.
\end{lemma}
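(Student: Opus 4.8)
The plan is to re-run Algorithm~\ref{alg:n3sssp} essentially verbatim, changing only the hop parameter $h$ and, through the $\Theta(\log n/h)$ sampling rate that the algorithm already uses, the skeleton size $|S|$. Concretely, I would set $h = \sqrt{nk}$ and sample each vertex independently with probability $\Theta(\log n/h) = \Theta(\log n/\sqrt{nk})$, which yields a skeleton set of size $|S| = \tilde{\Theta}(n/h) = \tilde{\Theta}(\sqrt{n/k})$ w.h.p. in $n$. Note that $h = \sqrt{nk} \le n$ for $k \le n$, so the hop bound is meaningful; in the extreme $k=n$ the $h$-hop BFS from each source already reaches every vertex, consistent with the bound degenerating to $\tilde{O}(n+D)$.

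Correctness carries over unchanged from Lemma~\ref{lem:n3bfs}. The only probabilistic fact that proof invokes is that every directed path of $h$ hops contains a sampled vertex w.h.p., and this holds for \emph{any} $h$ under the rate $\Theta(\log n/h)$, since a fixed $h$-hop path misses $S$ with probability at most $(1-c\log n/h)^h = n^{-\Omega(c)}$. Given this property, the decomposition of shortest paths of more than $h$ hops into skeleton edges (line~\ref{alg:n3sssp:skeleton}), the computation of source-to-skeleton distances (line~\ref{alg:n3sssp:sampsource}), and the propagation through the $h$-hop BFS trees (line~\ref{alg:n3sssp:propbfs}) are all identical. So no new correctness argument is needed.

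It remains to recheck the round complexity with the new parameters. The nontrivial contributions are: the $h$-hop BFS costs $O(|S|+h)$ and $O(h+k)$ in lines~\ref{alg:n3sssp:sampbfs} and~\ref{alg:n3sssp:kbfs}; the skeleton broadcast $O(|S|^2+D)$ in line~\ref{alg:n3sssp:skeletonbroad}; the source-to-skeleton broadcast $O(k|S|+D)$ in line~\ref{alg:n3sssp:kbfs}; and the pipelined propagation $\tilde{O}(h+k|S|)$ in line~\ref{alg:n3sssp:propbfs}. Substituting $h=\sqrt{nk}$ and $|S|=\tilde{\Theta}(\sqrt{n/k})$, we get $|S| = \sqrt{n/k}\le\sqrt{nk}$, $k\le\sqrt{nk}$ (since $k\le n$), and $k|S| = \sqrt{nk}$, so each of these is $\tilde{O}(\sqrt{nk}+D)$. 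The one remaining term is the skeleton broadcast, where $|S|^2 = \tilde{\Theta}(n/k)$.

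The step I expect to be the crux — and the one that pins down the hypothesis $k\ge n^{1/3}$ — is bounding this $|S|^2 = n/k$ term against the budget. We need $n/k \le \sqrt{nk}$, which on squaring becomes $n^2 \le nk^3$, i.e.\ exactly $k\ge n^{1/3}$. Thus it is the quadratic skeleton-broadcast cost, rather than any BFS or propagation cost, that determines the low end of the admissible range: for $k < n^{1/3}$ the $|S|^2$ term would dominate and a different approach is required, which is why Theorem~\ref{thm:ksssp}.\ref{thm:ksssp:exact} states a separate bound in that regime. For $k\ge n^{1/3}$ every term is $\tilde{O}(\sqrt{nk}+D)$, giving the claimed round complexity.
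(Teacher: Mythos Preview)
Your proposal is correct and follows essentially the same approach as the paper: set $h=\sqrt{nk}$ so that $|S|=\tilde{\Theta}(\sqrt{n/k})$, reuse the correctness argument of Lemma~\ref{lem:n3bfs} verbatim, and re-examine the round complexity terms, with the $|S|^2=n/k$ skeleton-broadcast cost being the one that forces the hypothesis $k\ge n^{1/3}$. The paper's own proof is terser but identical in substance, simply summarizing the total as $\tilde{O}(h+k|S|+|S|^2+D)=\tilde{O}(\sqrt{nk}+n/k+D)$ and then invoking $n/k\le\sqrt{nk}$ for $k\ge n^{1/3}$.
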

\begin{proof}
    We set parameter $h = \sqrt{nk}$ in line~\ref{alg:n3sssp:param} of Algorithm~\ref{alg:n3sssp}, and hence in line~\ref{alg:n3sssp:samp} we have $|S|=\tilde{\Theta}(\sqrt{\frac{n}{k}})$. From the round complexity analysis of Lemma~\ref{lem:n3bfs}, we get a bound of $\tilde{O}(h+k|S|+|S|^2+D)$, which is $\tilde{O}(\sqrt{nk}+\frac{n}{k}+D)$. Since we assume $k \ge n^{1/3}$, we have $\sqrt{nk} \ge \frac{n}{k}$ and hence the round complexity is $\tilde{O}(\sqrt{nk}+D)$.
\end{proof}

\paragraph*{\textbf{Weighted Graphs}}
\label{sec:n3ssspwt}
We extend our results to weighted graphs to compute $k$-source $(1+\epsilon)$-approximate SSSP. We use the following result from~\cite{nanongkai2014approx} that uses scaling to compute hop-bounded multiple source approximate SSSP. 

\begin{fact}[Theorem~3.6 of~\cite{nanongkai2014approx}]
    \label{fact:approxhopsssp}
    There is an algorithm that computes $(1 + \epsilon)$-approximate $h$-hop $k$-source shortest paths in a weighted (directed or undirected) graph $G=(V,E)$ in $O(\frac{\log n}{\epsilon} \cdot (k+h+D))$ rounds, where $D$ is the undirected diameter of $G$.
\end{fact}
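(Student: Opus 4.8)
The plan is to reduce the weighted $h$-hop problem to a logarithmic number of \emph{integer-weighted, distance-bounded} multi-source BFS computations by weight scaling, a technique that works identically for directed and undirected $G$ since it relies only on hop-bounded BFS. Because edge weights lie in $\{0,1,\dots,W\}$ with $W=\mathrm{poly}(n)$, I would guess the magnitude of the target distance using $O(\log(nW))=O(\log n)$ geometric scales $2^i$, and for each scale compute an estimate that is accurate for pairs whose true $h$-hop distance lies in $[2^i,2^{i+1})$. The final estimate $\hat d(u,v)$ is obtained by taking, over all scales, the minimum of the distances recovered by unscaling: the scale matching the true distance yields a $(1+\epsilon)$-approximation, while every other scale contributes only a valid overestimate (or $\infty$ once the search is truncated), so the minimum is a correct $(1+\epsilon)$-approximation of the $h$-hop distance.

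For a fixed scale $i$, I would set a rounding granularity $\delta = \epsilon 2^i / h$ and replace each weight $w(e)$ by the integer $\tilde w(e)=\lceil w(e)/\delta\rceil$. The key approximation estimate is that unscaling satisfies $\delta\tilde w(e)\in[w(e),w(e)+\delta)$, so any path of at most $h$ hops accumulates additive overestimate at most $h\delta=\epsilon 2^i$; for a pair with true $h$-hop distance in $[2^i,2^{i+1})$ this is at most $\epsilon$ times the true length, giving a $(1+\epsilon)$-approximation upon unscaling, and since the estimate corresponds to an actual path it never drops below $\mathrm{dist}_G(u,v)$. Simultaneously, any relevant path has at most $h$ hops and true weight below $2^{i+1}$, so its scaled length is at most $\tfrac{2^{i+1}}{\delta}+h=O(h/\epsilon)$; hence it suffices to compute scaled distances only up to a bound $\Delta_i=O(h/\epsilon)$.

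The per-scale computation is then an $h$-hop, distance-$\Delta_i$-bounded multi-source BFS from the $k$ sources on the integer-weighted graph, which I would implement by a pipelined layered relaxation over the $O(h/\epsilon)$ integer distance values: a vertex that first attains scaled distance $\ell$ from a source relaxes each out-edge $e$ into layer $\ell+\tilde w(e)$ and forwards the corresponding distance label along the network. Treating this as a BFS of dilation $O(h/\epsilon)$ in which each of the $k$ sources injects one finalized label per frontier vertex, the same $k$-source BFS / random-scheduling machinery used for Algorithm~\ref{alg:n3sssp} bounds the congestion by $O(k)$ and gives $O(k+h/\epsilon+D)$ rounds per scale. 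Multiplying by the $O(\log n)$ scales yields $O\!\left(\log n\,(k+h/\epsilon+D)\right)=O\!\left(\tfrac{\log n}{\epsilon}(k+h+D)\right)$ as claimed.

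The main obstacle is the simultaneous control of hops, scaled distance, and congestion in the per-scale BFS. Rounding \emph{up} forces every edge to have scaled weight at least $1$, which keeps the monotone layered relaxation well defined and lets the bound $\Delta_i$ truncate the search, but one must still verify that pipelining the $k$ sources never exceeds $O(k)$ congestion per edge; this holds because distances are discovered in nondecreasing order, so each vertex forwards at most one finalized label per source. The remaining care is in the combination step: confirming that taking the minimum across scales never underestimates the true distance (each unscaled value witnesses an actual path, hence is at least $\mathrm{dist}_G(u,v)$) while the correct scale supplies the matching $(1+\epsilon)$ upper bound.
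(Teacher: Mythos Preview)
The paper does not prove this statement at all: it is stated as a \emph{Fact} quoted verbatim from \cite{nanongkai2014approx} (Theorem~3.6 there) and used as a black box. So there is no ``paper's own proof'' to compare against. That said, your proposal is essentially a correct reconstruction of Nanongkai's original argument, and indeed the paper itself exposes the underlying scaling construction separately as Fact~\ref{fact:nanongkaiscaling} (Theorem~3.3 of \cite{nanongkai2014approx}), which matches your rounding $\tilde w(e)=\lceil w(e)/\delta\rceil$ with $\delta=\Theta(\epsilon 2^i/h)$ and your distance cutoff $\Delta_i=O(h/\epsilon)$.

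One small technical point: you assert that ``rounding up forces every edge to have scaled weight at least $1$,'' but the paper allows edge weights in $\{0,1,\dots,W\}$, and $\lceil 0/\delta\rceil=0$, so zero-weight edges stay zero and your monotone layered relaxation as stated could stall. This is easily patched (e.g., contract zero-weight components, or add $1$ to every rounded weight at the cost of another $h$ in the additive error, which is already absorbed), but it is worth flagging since you rely on positivity explicitly.
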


\begin{lemma}
    \label{lem:n3ssspwt}
    We can compute $(1+\epsilon)$-approximate SSSP from $k\ge n^{1/3}$ sources in $\tilde{O}(\sqrt{nk} + D)$ rounds in directed and undirected weighted graphs.
\end{lemma}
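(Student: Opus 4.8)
The plan is to run Algorithm~\ref{alg:n3sssp} almost verbatim, replacing every exact hop-bounded BFS by a $(1+\epsilon)$-approximate hop-bounded multiple-source SSSP obtained from Fact~\ref{fact:approxhopsssp}, and then to check that the resulting multiplicative errors compose without blowing up. Concretely, in line~\ref{alg:n3sssp:sampbfs} I replace the $h$-hop BFS from the sampled set $S$ (and its reverse) by a $(1+\epsilon)$-approximate $h$-hop $|S|$-source SSSP, and in line~\ref{alg:n3sssp:kbfs} I replace the $h$-hop BFS from the $k$ sources by a $(1+\epsilon)$-approximate $h$-hop $k$-source SSSP. The skeleton graph of line~\ref{alg:n3sssp:skeleton} now carries, as the weight of an edge $(s,s')$, the approximate $h$-hop distance $\tilde d_h(s,s')$ rather than the exact one; the remaining steps (broadcast, local skeleton APSP, the source-to-$S$ combination of line~\ref{alg:n3sssp:sampsource}, and the tree propagation of line~\ref{alg:n3sssp:propbfs}) are unchanged. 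I keep the same parameter settings $h=\sqrt{nk}$ and $|S|=\tilde{\Theta}(\sqrt{n/k})$ used in the exact $k$-source lemma.

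The core of the argument is a non-compounding error analysis. With sampling probability $\Theta(\tfrac{\log n}{h})$, whp every $h$-hop subpath of any shortest path contains a vertex of $S$, so a true shortest $u$--$v$ path can be cut at sampled vertices into consecutive segments of at most $h$ hops each. Since the hop-bounded primitive satisfies $d_h \le \tilde d_h \le (1+\epsilon)d_h$, every skeleton edge weight over-estimates the corresponding true distance, so no distance is ever underestimated, and summing the approximate segment weights gives $\sum_i \tilde d_h(\cdot) \le (1+\epsilon)\sum_i d_h(\cdot) = (1+\epsilon)\,d(u,v)$, because the uniform factor $(1+\epsilon)$ factors out of the sum. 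The same observation applies to the three-level composition the algorithm actually performs: source to an intermediate sampled vertex (line~\ref{alg:n3sssp:kbfs}), intermediate to target sampled vertex through the skeleton (lines~\ref{alg:n3sssp:skeletonapsp},~\ref{alg:n3sssp:sampsource}), and sampled vertex to $v$ along its $h$-hop tree (line~\ref{alg:n3sssp:propbfs}). Hence the final value $\tilde d(u,s)+\tilde d(s,v)$ lies between $d(u,v)$ and $(1+\epsilon)d(u,v)$, establishing the $(1+\epsilon)$-approximation.

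For the round complexity I reuse the bound $\tilde{O}(h+k|S|+|S|^2+D)$ from the proof of Lemma~\ref{lem:n3bfs}, adding the overhead of Fact~\ref{fact:approxhopsssp}: each approximate hop-bounded SSSP call costs $O(\tfrac{\log n}{\epsilon}(k+h+D))$, which for constant $\epsilon$ is $\tilde{O}(k+h+D)$ and is dominated by $\tilde{O}(\sqrt{nk}+D)$ once $h=\sqrt{nk}$ and $k\ge n^{1/3}$ (so that $k\le\sqrt{nk}$ and $|S|^2=\tilde{\Theta}(n/k)\le n^{2/3}\le\sqrt{nk}$). The propagation of line~\ref{alg:n3sssp:propbfs} is unaffected by weights: it still has congestion $O(k|S|)=O(\sqrt{nk})$ and dilation $O(h)=O(\sqrt{nk})$, so random scheduling runs in $\tilde{O}(\sqrt{nk})$ rounds. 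Summing all contributions yields the claimed $\tilde{O}(\sqrt{nk}+D)$ bound, and since Fact~\ref{fact:approxhopsssp} holds for both directed and undirected weighted graphs, the result applies in both settings.

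The step I expect to be the main obstacle is precisely this error analysis: one must confirm that the approximation error does not accumulate multiplicatively across the (up to $\tilde{\Theta}(n/h)$ many) segments of a decomposed shortest path, and that the skeleton APSP, which now operates on approximate edge weights, still returns inter-sample distances within $(1+\epsilon)$ of the truth. Both facts follow from the uniform-factor observation above, but the directed case requires care that the hop-bounded SSSP is computed in both $G$ and its reverse (as already done in line~\ref{alg:n3sssp:sampbfs}), so that each segment needed for the decomposition and for the tree propagation is in fact available.
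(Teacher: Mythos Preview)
Your proposal is correct and follows essentially the same approach as the paper: replace the hop-bounded BFS calls in Algorithm~\ref{alg:n3sssp} by the approximate hop-bounded SSSP of Fact~\ref{fact:approxhopsssp}, keep the same parameters $h=\sqrt{nk}$ and $|S|=\tilde\Theta(\sqrt{n/k})$, and observe that the round complexity grows only by an $O(\tfrac{\log n}{\epsilon})$ factor. Your error analysis (decomposing the true shortest path at sampled vertices and noting that the uniform $(1+\epsilon)$ factor distributes over the sum of segment lengths, so the approximation does not compound) is in fact more explicit than the paper's proof, which simply asserts that the resulting distances are $(1+\epsilon)$-approximations without spelling this out.
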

\begin{proof}
    We replace the $\frac{n}{|S|}$-hop directed BFS computations in lines~\ref{alg:n3sssp:sampbfs},\ref{alg:n3sssp:kbfs},\ref{alg:n3sssp:propbfs} of Algorithm~\ref{alg:n3sssp} by approximate $\frac{n}{|S|}$-hop SSSP, using the algorithm stated in Fact~\ref{fact:approxhopsssp}. With this change, the distances computed in lines~\ref{alg:n3sssp:skeleton},\ref{alg:n3sssp:skeletonapsp},\ref{alg:n3sssp:sampsource},\ref{alg:n3sssp:propbfs} are $(1+\epsilon)$-approximations of the exact distances, and thus our final output is $(1+\epsilon)$-approximate shortest path distances. These changes to Algorithm~\ref{alg:n3sssp} only increase the round complexity by a factor $O(\frac{\log n}{\epsilon})$, so the round complexity of our $n^{1/3}$-source $(1+\epsilon)$-approximate SSSP is $\tilde{O}(\sqrt{nk} + D)$ in both directed and undirected weighted graphs.
\end{proof}

\paragraph*{\textbf{Multiple Source SSSP from $k < n^{1/3}$ sources}}
In unweighted graphs for $k < n^{1/3}$, Algorithm~\ref{alg:n3sssp} computes exact $k$-source BFS in $\tilde{O}(\frac{n}{k} + D)$ rounds by choosing parameter $h = \sqrt{nk}$. For small $k$, the simple algorithm of repeating SSSP computation in sequence from each source taking $k \cdot SSSP$ rounds is efficient (threshold for $k$ depends on value of $D$), giving the result in Theorem~\ref{thm:ksssp}.\ref{thm:ksssp:exact}. 

In Section~\ref{sec:ksssp}, we give an algorithm for approximate SSSP and BFS from $k < n^{1/3}$ sources with round complexity $\tilde{O}(\sqrt{nk} + k^{2/5}n^{2/5+o(1)}D^{2/5} + D)$, proving Theorem~\ref{thm:ksssp}.\ref{thm:ksssp:approx}. Our algorithm improves on the simple method of repeating the current best SSSP algorithm $k$ times for the entire range of $1 < k \le n$.

\section{Approximate Directed MWC}
\label{sec:dirmwcub}

We present a CONGEST algorithm for 2-approximation of directed unweighted MWC in Algorithm~\ref{alg:dirunwmwcub}. Our algorithm uses sampling combined with multiple source exact directed BFS (result (1) of Theorem~\ref{thm:ksssp}.\ref{thm:ksssp:exact}) to exactly compute the weight of MWC among long cycles of hop length $\ge h = n^{3/5}$. For the case when MWC is short with hop length $<h$, we use the subroutine detailed below in Section~\ref{sec:dirmwcshort} which efficiently performs a BFS from all vertices restricted to $h$ hops and to a specific size-bounded neighborhood for each vertex. 

In line \ref{alg:dirmwc:sample} of Algorithm~\ref{alg:dirunwmwcub}, we sample $\tilde{\Theta}(n^{2/5})$ vertices uniformly at random and in line~\ref{alg:dirmwc:ksssp}, we perform a directed BFS computation from each of them in $\tilde{O}(n^{7/10}+D)$ rounds (Theorem~\ref{thm:ksssp}.\ref{thm:ksssp:exact}). Using these computed distances, each sampled vertex locally computes a minimum weight cycle through itself in line~\ref{alg:dirmwc:longcyc}, thus computing MWC weight among long cycles. We use Algorithm~\ref{alg:dirunwmwcshort} (see Section~\ref{sec:dirmwcshort}) in line~\ref{alg:dirmwc:shortcyc} to handle short cycles. Computing short MWC requires the distances between all pairs of sampled vertices as input: so in line~\ref{alg:dirmwc:sampbroad} of Algorithm~\ref{alg:dirunwmwcub} we broadcast the $h$-hop shortest path distances between sampled vertices found during the BFS of line~\ref{alg:dirmwc:ksssp} and use these distances to locally compute shortest paths between all pairs of sampled vertices at each vertex. Thus, we exactly compute MWC weight if a minimum weight cycle passes through a sampled vertex in line~\ref{alg:dirmwc:longcyc}, and a 2-approximation of MWC weight in line~\ref{alg:dirmwc:shortcyc} otherwise. We now address the short cycle subroutine used in line~\ref{alg:dirmwc:shortcyc}.

\begin{algorithm}[t]
    \caption{$2$-Approximation Algorithm for Directed Unweighted MWC}
    \begin{algorithmic}[1]
        \Require Directed unweighted graph $G=(V,E)$
        \Ensure $\mu$, $2$-approximation of weight of a MWC in $G$ 
        \State Let $h=n^{3/5}$. Set $\mu_v \gets \infty$ for all $v \in V$. \rightComment{$\mu_v$ will track the minimum weight cycle through $v$ found so far.}
        \State Construct set $S$ by sampling each vertex $v \in G$ with probability $\Theta(\frac{1}{h} \cdot \log^3 n)$. W.h.p. in $n$, $|S| = \Theta( n^{2/5} \cdot \log^2 n )$. \label{alg:dirmwc:sample}
        \State Compute $d(s,v)$ for $s\in S,v\in V$ using multiple source exact directed BFS (Theorem~\ref{thm:ksssp}.\ref{thm:ksssp:exact}, Algorithm~\ref{alg:n3sssp}) from $S$. This takes $\tilde{O}(\sqrt{n|S|} + D)$ rounds as $|S|>n^{1/3}$. \label{alg:dirmwc:ksssp} 
        \State Compute cycles through $s \in S$:  For each edge $(v,s)$, $\mu_v \gets min(\mu_v,  w(v,s)+d(s,v))$. \label{alg:dirmwc:longcyc} \mlComment{Locally compute lengths of long cycles and all cycles passing through some sampled vertex.}
        \State Broadcast all pairs distances between sampled vertices: Each $t \in S$ broadcasts $d(s,t)$ for all $s \in S$. There are at most $|S|^2$ such distances, which takes ${O}(|S|^2 + D)$ rounds. \label{alg:dirmwc:sampbroad}
        \State Run Algorithm~\ref{alg:dirunwmwcshort} to compute approximate short MWC if it does not contain a sampled vertex, updating $\mu_v$ for each $v \in V$. \label{alg:dirmwc:shortcyc} 
        \rightComment{See Section~\ref{sec:dirmwcshort}.}
        % \mlComment{The distances that are input to Algorithm~\ref{alg:dirunwmwcshort} are computed in line~\ref{alg:dirmwc:ksssp} or broadcast in line~\ref{alg:dirmwc:sampbroad}.}
        \State Return $\mu \gets \min_{v \in V} \mu_v$, computed by a convergecast operation~\cite{peleg2000distributed} in $O(D)$ rounds. \label{alg:dirmwc:min}
    \end{algorithmic}
    \label{alg:dirunwmwcub}
\end{algorithm}

\subsection{Computing Approximate Short MWC}
\label{sec:dirmwcshort}

We present a method to compute 2-approximation of weight of minimum weight cycle among cycles of at most $h=n^{3/5}$ hops that do not pass through any sampled vertex in $S$. Our method is detailed in Algorithm~\ref{alg:dirunwmwcshort} and runs in $\tilde{O}(n^{4/5})$ rounds. As mentioned in the previous section, each vertex $v$ knows the distances $d(v,s), d(s,v)$ for each vertex $s \in S$, and distances $d(s,t)$ for $s,t \in S$.

\paragraph*{\textbf{Definition of $P(v)$ and $R(v)$}}
For each vertex $v \in V$, we define a neighborhood $P(v) \subseteq V$ such that $P(v)$ contains (w.h.p. in $n$) a cycle whose length is at most a 2-approximation of a minimum weight cycle $C$ through $v$ if $C$ does not pass through any sampled vertex. The construction of $P(v)$ is inspired by a sequential algorithm for directed MWC in~\cite{chechik2021mwc}, which uses the following lemma.

\begin{fact}[Lemma 5.1 of \cite{chechik2021mwc}]
    \label{fact:dirmwc}
    Let $C$ be a minimum weight cycle that goes through vertices $v$, $y$ in a directed weighted graph $G$. For any vertex $t$, if  $d(y,t) + 2d(v,y) \ge d(t,y) + 2d(v,t)$, then a minimum weight cycle containing $t$ and $v$ has weight at most $2w(C)$.
\end{fact}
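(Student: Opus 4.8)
The plan is to exhibit an explicit closed walk through $t$ and $v$ whose weight is at most $2w(C)$, using the round trip $v \to t \to v$ as the candidate. First I would record two elementary observations. Since the directed cycle $C$ passes through both $v$ and $y$, traversing it once decomposes it into a $v \to y$ arc and a $y \to v$ arc, so $w(C) \ge d(v,y) + d(y,v)$. And since concatenating a shortest $v \to t$ path with a shortest $t \to v$ path gives a closed walk through $t$ and $v$ of weight $d(v,t) + d(t,v)$, the minimum weight cycle through $t$ and $v$ has weight at most $d(v,t) + d(t,v)$. It therefore suffices to prove $d(v,t) + d(t,v) \le 2\big(d(v,y) + d(y,v)\big)$.

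The core of the argument is a pair of triangle inequalities combined with the hypothesis. I would first route the return trip through $y$ via $d(t,v) \le d(t,y) + d(y,v)$, obtaining $d(v,t) + d(t,v) \le d(v,t) + d(t,y) + d(y,v)$. It then remains to control $d(v,t) + d(t,y)$. Here I would feed the triangle inequality $d(y,t) \le d(y,v) + d(v,t)$ (the path $y \to v \to t$) into the hypothesis $d(y,t) + 2d(v,y) \ge d(t,y) + 2d(v,t)$: replacing $d(y,t)$ on the larger side yields $d(y,v) + d(v,t) + 2d(v,y) \ge d(t,y) + 2d(v,t)$, and cancelling one $d(v,t)$ gives $d(t,y) + d(v,t) \le 2d(v,y) + d(y,v)$. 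Substituting this into the previous bound gives $d(v,t) + d(t,v) \le \big(2d(v,y) + d(y,v)\big) + d(y,v) = 2d(v,y) + 2d(y,v)$, which is at most $2w(C)$ by the first observation.

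The step I expect to be the main obstacle --- and the only genuinely non-routine move --- is the cancellation of $d(y,t)$. The hypothesis bounds $d(t,y) + 2d(v,t)$ only in terms of $d(y,t) + 2d(v,y)$, and $d(y,t)$ is the ``wrong-direction'' distance with no a priori upper bound in terms of $w(C)$; every naive attempt to bound the round trip stalls precisely because this term resurfaces. The key realization is that $d(y,t)$ can be re-expressed through $v$ using $d(y,t) \le d(y,v) + d(v,t)$, after which it collapses into quantities already controlled by $d(v,y)$ and $d(y,v)$. I would also flag one technical point to verify: if the shortest $v \to t$ and $t \to v$ paths intersect, the round-trip walk need not be a simple cycle, so $d(v,t) + d(t,v)$ formally upper-bounds a closed walk rather than a simple cycle through $t$ and $v$. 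This is exactly the quantity Algorithm~\ref{alg:dirunwmwcub} computes and records as its cycle estimate, so the factor-$2$ guarantee carries through regardless.
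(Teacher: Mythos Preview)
The paper does not prove this statement itself; it is quoted verbatim as a fact from \cite{chechik2021mwc}, so there is no in-paper proof to compare against. Your argument is correct and is essentially the standard proof: the two triangle inequalities $d(t,v)\le d(t,y)+d(y,v)$ and $d(y,t)\le d(y,v)+d(v,t)$, combined with the hypothesis, give $d(v,t)+d(t,v)\le 2(d(v,y)+d(y,v))\le 2w(C)$.

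Your closing caveat is well taken and worth keeping: the quantity $d(v,t)+d(t,v)$ is the weight of a closed walk through $t$ and $v$, not necessarily a simple cycle through both. Since the global MWC $C$ is assumed to pass through $v$ and $y$, any closed walk of weight at most $2w(C)$ is at least as large as the global MWC, and the walk does contain some simple cycle of weight at most $2w(C)$; moreover, as you note, $d(v,t)+d(t,v)$ is exactly what the algorithm records for sampled vertices in line~\ref{alg:dirmwc:longcyc}, so the approximation guarantee for the algorithm goes through regardless of whether the recorded value corresponds to a simple cycle through both $t$ and $v$.
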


Using Fact~\ref{fact:dirmwc}, we can eliminate vertices $y$ from $P(v)$ for which $d(y,t) + 2d(v,y) > d(t,y) + 2d(v,t)$ for some sampled vertex $t \in S$, as the minimum weight cycle through $v$ and $t$ is at most twice the minimum weight cycle through $v$ and $y$, and we compute MWC through all sampled vertices. 
We carefully choose a subset $R(v) \subseteq S$ of sampled vertices to eliminate vertices from $P(v)$ using Fact~\ref{fact:dirmwc}, such that $R(v)$ has size $\log n$ and the size of $P(v)$ is reduced to at most $\frac{n}{|S|} = \tilde{O}(n^{3/5})$. 

\begin{definition} \label{def:dirmwcpv}
    Given a subset $R(v) \subseteq S$, define 
    $$P(v) = \{ y \in V \mid \forall t \in R(v), d(y,t) + 2d(v,y) \le d(t,y) + 2d(v,t)\}$$
\end{definition}

Our algorithm for computing short MWC first partitions the sampled vertices into $\beta = \log n$ sets $S_1,\dots S_\beta$. In lines~\ref{alg:dirmwcshort:rstart}-\ref{alg:dirmwcshort:tvadd} of Algorithm~\ref{alg:dirunwmwcshort}, we construct $R(v)$ iteratively by adding at most one vertex from each $S_i$, so that $R(v)$ has size $\le \log n$. In the $i$'th iteration, we identify the vertices in $S_i$ that have not been eliminated from $P(v)$ by any of the $(i-1)$ previously chosen vertices in $R(v)$ and choose one of these vertices at random to add to $R(v)$. This computation is done locally at $v$ using the distances that have previously been shared to $v$.
We prove below in Lemma~\ref{lem:dirmwcshortub} that with this choice of $R(v)$, the size of $P(v)$ is at most $\frac{n}{|S|} = \tilde{\Theta}(n^{3/5})$, w.h.p. in $n$.

\paragraph*{\textbf{Restricted BFS}}
Our algorithm now computes directed BFS from each vertex $v$ restricted to $P(v)$. Let the MWC of $G$ have weight $\mu$, going through vertices $v$, $y$. If $y \in P(v)$, then the algorithm computes BFS from $v$ and computes the distance to $y$, thus outputting $\mu$ exactly. Otherwise, $y \not \in P(v)$ which means there exists a $t \in R(v)$ such that $d(y,t) + 2d(v,y) > d(t,y) + 2d(v,t)$. Then by Fact~\ref{fact:dirmwc}, the MWC through $t$ and $v$ has weight at most $2\mu$. Since the algorithm computes MWC through each sampled $t$, the output of the algorithm is at most $2\mu$. In order for our BFS to work, we need the following claim that each vertex $y \in P(v)$ can be reached by restricted BFS from $v$.

\begin{lemma}
    $P(v)$ induces a connected subgraph in the shortest path out-tree rooted at $v$. 
\end{lemma}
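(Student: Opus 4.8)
The plan is to show that $P(v)$ contains the root $v$ and is closed under taking ancestors in the shortest path out-tree rooted at $v$; together these immediately yield the lemma, since then for every $y \in P(v)$ the entire tree path from $v$ to $y$ lies inside $P(v)$, so $P(v)$ induces a connected subtree.

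First I would dispose of the root. Membership of $v$ requires $d(v,t) + 2d(v,v) \le d(t,v) + 2d(v,t)$ for every $t \in R(v)$, which (using $d(v,v)=0$) reduces to $0 \le d(t,v) + d(v,t)$ and holds by non-negativity of distances. Hence $v \in P(v)$.

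The core step is ancestor-closure. I would fix $y \in P(v)$ and let $x$ be any vertex on the shortest $v$-to-$y$ path in the out-tree, so subpath optimality gives $d(v,y) = d(v,x) + d(x,y)$. The goal is to verify the defining inequality of $P(v)$ for $x$ against each $t \in R(v)$, namely $d(x,t) + 2d(v,x) \le d(t,x) + 2d(v,t)$. Starting from the directed triangle inequality $d(x,t) \le d(x,y) + d(y,t)$ and substituting $2d(v,x) = 2d(v,y) - 2d(x,y)$, the left-hand side is at most $d(y,t) + 2d(v,y) - d(x,y)$. Since $y \in P(v)$, I may replace $d(y,t) + 2d(v,y)$ by the larger quantity $d(t,y) + 2d(v,t)$, giving a bound of $d(t,y) + 2d(v,t) - d(x,y)$. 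Finally the second directed triangle inequality $d(t,y) \le d(t,x) + d(x,y)$ gives $d(t,y) - d(x,y) \le d(t,x)$, so the whole expression is at most $d(t,x) + 2d(v,t)$, which is exactly the condition placing $x$ in $P(v)$.

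I expect the only real obstacle to be orienting the two triangle inequalities correctly so that the $d(x,y)$ terms telescope; once $d(v,y) = d(v,x) + d(x,y)$ is inserted, the algebra collapses. Because the graph is directed, I would double-check that both inequalities are the right directed versions: $d(x,t) \le d(x,y) + d(y,t)$ corresponds to the path $x \to y \to t$, and $d(t,y) \le d(t,x) + d(x,y)$ to the path $t \to x \to y$, both of which are legitimate. Using a wrong orientation would break the cancellation, so this is the step warranting the most care.
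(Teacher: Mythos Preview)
Your proof is correct and is essentially identical to the paper's own argument: the paper also fixes $y \in P(v)$, takes an intermediate vertex on the $v$-to-$y$ shortest path, and uses exactly the same two directed triangle inequalities together with $d(v,y) = d(v,x) + d(x,y)$ and the membership inequality for $y$ to derive the membership inequality for the intermediate vertex. The only addition in your write-up is the explicit check that $v \in P(v)$, which the paper leaves implicit.
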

\begin{proof}
    Let vertex $y \in P(v)$. Then, we will prove that for any vertex $z$ on a shortest path from $v$ to $y$, $z \in P(v)$ thus proving our claim.

    We will use the fact that $d(v,y) = d(v,z) + d(z,y)$ by our assumption. Assume $y \in P(v)$, that is $\forall t \in R(v), d(y,t) + 2d(v,y) \le d(t,y) + 2d(v,t)$. Fix any $t \in R(v)$, we need to prove that $d(z,t) + 2d(v,z) \le d(t,z) + 2d(v,t)$. We will use the triangle inequalities $d(z,t) \le d(z,y)+ d(y,t)$ and $d(t,y) \le d(t,z) + d(z,y) \Rightarrow d(t,y) - d(z,y) \le d(t,z)$. 
    \begin{align*}
        d(z,t) + 2d(v,z) & \le d(z,y) + d(y,t) + 2d(v,z) \\
        & = d(y,t) + (2d(v,z) + 2d(z,y)) - d(z,y) \\
        & = d(y,t) + 2d(v,y) - d(z,y) \\
        & \le d(t,y) + 2d(v,t) - d(z,y) \; \left( \text{since } y \in P(v) \right)\\
        & = (d(t,y) - d(z,y)) + 2d(v,t) \\
        & \le d(t,z) + 2d(v,t)
    \end{align*}
\end{proof}

In lines~\ref{alg:dirmwcshort:bfsstart}-\ref{alg:dirmwcshort:bfsend} of Algorithm~\ref{alg:dirunwmwcshort}, we compute $h$-hop BFS from all vertices $v$ restricted to neighborhood $P(v)$: the BFS proceeds for $h$ steps, and at each step the BFS message is forwarded only to neighbors in $P(v)$. To test membership in $P(v)$ using Definition~\ref{def:dirmwcpv} at an intermediate vertex before propagating, we use distances between sampled vertices that is part of the input along with information about $R(v)$ that is included in the BFS message. 
% The membership test also requires distances from neighbors to sampled vertices, which is shared in line~\ref{alg:dirmwcshort:neighbbroad}.

The restricted BFS from every vertex needs to be carefully scheduled to obtain our sublinear round bound. We first implement random delays using ideas in~\cite{leighton1994packet,ghaffarischeduling}, where the start of BFS for a source $v$ is delayed by an offset $\delta_v$ chosen uniformly from range $[1,\rho = n^{4/5}]$ at random by $v$. Here, the parameter $\rho = n^{4/5}$ is chosen based on the maximum number of messages allowed throughout the BFS for a single vertex, as we shall see later. With this scheduling, all BFS messages for a particular source $v$ are synchronous even though messages from different sources may not be. We organize the BFS into phases, each phase involving at most $\Theta(\log n)$ messages.
However, the graph may contain bottleneck vertices $u$ that are in the neighborhood $P(v)$ for many $v$: such a vertex $u$ has to process up to $n$ messages, requiring $\Omega(n)$ phases even with random scheduling. 

We define $u$ to be a \textit{phase-overflow vertex} if it has to send or receive more than $\Theta(\log n)$ messages in a single phase of the restricted BFS. During the BFS, we identify any such phase-overflow vertex and terminate BFS computation through it. The BFS runs for $O(n^{4/5})$ phases and each BFS message contains $O(\log n)$ words, and hence the BFS takes a total of $\tilde{O}(n^{4/5})$ rounds. After this restricted BFS is completed, each vertex $u$ knows its shortest path distance from all vertices $v$ such that $u \in P(v)$ and there is a $v$-$u$ shortest path that has at most $h$ hops and contains no phase-overflow vertex.

Each phase takes $O(\log^2 n)$ CONGEST rounds, as a single BFS message contains $O(\log n)$ words (see line~\ref{alg:dirmwcshort:qv}). 
Thus, after the computation in lines~\ref{alg:dirmwcshort:bfsstart}-\ref{alg:dirmwcshort:bfsend}, each vertex $u$ knows its shortest path distance from all vertices $v$ such that $u \in P(v)$ and there is a $v$-$u$ shortest path that has $h$ hops and contains no phase-overflow vertex. It takes a total of  $\tilde{O}(n^{4/5})$ rounds to complete the $(h+\rho)=O(n^{4/5})$ phases in the computation.

Now, it remains to compute $h$-hop shortest path distances for paths that contain phase-overflow vertices.
We prove a bound of $\tilde{O}(n^{4/5})$ on the number of phase-overflow vertices as follows. We define $u$ to be a \textit{bottleneck vertex} if $u \in P(v)$ for more than $\rho=n^{4/5}$ vertices $v \in V$, i.e., $u$ may have to handle messages for more than $n^{4/5}$ sources across all phases of the BFS. We prove the following claims in Lemma~\ref{lem:dirmwcshortub}: (i) Vertex $u$ can be a phase-overflow vertex only if $u$ is a bottleneck vertex, and (ii) the number of bottleneck vertices is at most $\tilde{O}(n^{4/5})$. The bound on the number of bottleneck vertices is obtained using the bound of $\tilde{O}(n^{3/5})$ on the size of each $P(v)$. We compute $h$-hop directed BFS from the $\tilde{O}(n^{4/5})$ phase-overflow vertices in $O(h+n^{4/5})$ rounds.
% Thus, the $(h=n^{3/5})$-hop BFS from $\tilde{O}(n^{4/5})$ sources in line~\ref{alg:dirmwcshort:zsssp} takes $O(n^{4/5})$ rounds.

After computing all distances from each $v \in V$ to vertices $y$ in $P(v)$, we locally compute the minimum among discovered cycles through $v$: at vertex $v$, a discovered cycle is formed by concatenating a $v$-$y$ shortest path and an incoming edge $(v,y)$.

\begin{algorithm}
    \caption{Approximate Short Cycle Subroutine}
    \begin{algorithmic}[1]
        \Require Directed unweighted graph $G=(V,E)$, set of sampled vertices $S \subseteq V$. Each vertex $v$ knows distances $d(v,s),d(s,v)$ for $s \in S$ and distances $d(s,t)$ for $s,t \in S$.
        \Ensure For each $v$, return $\mu_v$ which is a 2-approximation of minimum weight of cycles through $v$ among cycles that are short($<n^{3/5}$ hops) and do not pass through any vertex in $S$.  
        \State $h=n^{3/5}, \rho=n^{4/5}$. \label{alg:dirmwcshort:param}
        \State Partition $S$ into $\beta = \log n$ sets $S_1,\dots S_\beta$ of size $\Theta(n^{2/5} \cdot \log n)$. \label{alg:dirmwcshort:partition}
        \ForAll{vertex $v \in G$} \label{alg:dirmwcshort:rstart}
            \LComment{\textbf{Initial Setup}: Compute set $R(v)\subseteq S$, which is used to restrict BFS to neighborhood $P(v)\subseteq V$ (defined in Section~\ref{sec:dirmwcshort}).}
            \State $R(v) \gets \phi$
            \For{$i = 1 \dots \beta$} \mlComment{Local computation at $v$.}
                \State Let $T(v) = \{s \in S_i \mid \forall t \in R(v), d(s,t) + 2d(v,s) \le d(t,s)+2d(v,t)\}$. \label{alg:dirmwcshort:tv}
                \State If $T(v)$ is not empty, select a random vertex $s^* \in T(v)$ and add it to $R(v)$. \label{alg:dirmwcshort:tvadd}
            \EndFor
            \State $\delta_v$ is chosen uniformly at random from $\{1,\dots ,\rho\}$. \label{alg:dirmwcshort:delay} \mlComment{Choose BFS delay.}
            \State $Z(v) \gets 0$ \mlComment{$Z(v)$ is a flag that determines whether $v$ is a phase-overflow vertex.} \label{alg:dirmwcshort:zdef}
            \State Send $\{(d(v,s), d(s,v)) \mid s \in S\}$ to each neighbor $u$ in $O(|S|)$ rounds.  \label{alg:dirmwcshort:neighbbroad}
        \EndFor
        \LComment{\textbf{Restricted BFS from all vertices}: Computation is organized into phases where each vertex receives and sends at most $\log n$ BFS messages along its edges. Each BFS message contains $O(\log n)$ words and hence each phase takes $O(\log^2 n)$ CONGEST rounds.}
        \For{phase $r = 1 \dots (h+\rho)$} \label{alg:dirmwcshort:bfsstart}
            \ForAll{vertex $v \in G$}
                \If{$r = \delta_v$} \mlComment{This is the first phase for the BFS rooted at $v$.} \label{alg:dirmwcshort:initialbfs}
                    \State Construct message $Q(v) = (R(v), \{ d(v,t) \mid \forall t \in R(v)\})$ to be sent along the BFS rooted at $v$. $Q(v)$ contains $O(\log n)$ words ($|R(v)| \le \beta = \log n$) and can be sent in $O(\log n)$ rounds. \label{alg:dirmwcshort:qv}
                    \State Send BFS message $(Q(v), d(v,v)=0)$ to each out-neighbor of $v$. \label{alg:dirmwcshort:initialbfsend}
                \EndIf
                \mComment{Process and propagate messages from other sources $y$. We restrict the number of messages sent/received by a vertex by $\Theta(\log n)$ and identify phase-overflow vertices$(Z(v) \gets 1$) exceeding this congestion. Phase-overflow vertices are processed separately in line~\ref{alg:dirmwcshort:zsssp}.}
                \State Receive at most $\log n$ messages $(Q(y), d^*(y,v))$ from each in-neighbor. If more than $\Theta(\log n)$ messages are received from an edge, set $Z(v)\gets 1$ and terminate. \label{alg:dirmwcshort:zterminate}
                \State If message $(Q(y), d^*(y,v))$ is not the first message received for source $y$, discard it. Let $Y^r(v)$ denote the remaining set of sources $y$ with first time messages, and set $d(y,v) \gets d^*(y,v)$ for $y \in Y^r(v)$ . \label{alg:dirmwcshort:update}
                \State If $|Y^r(v)| > \Theta(\log n)$, set $Z(v) \gets 1$ and terminate. \label{alg:dirmwcshort:zterminateout}
                \State For each $y \in Y^r(v)$, and for each outgoing neighbor $u$, set estimate $d^*(y,u)\gets d(y,v)+1$. If $\forall t \in R(y), d(u,t) + 2d^*(y,u) \le d(t,u) + 2d(y,t)$, send message $(Q(y),d^*(y,u))$ to $u$. \rightComment{Note that $R(y), d(y,t)$ are known to $v$ from $Q(y)$ and $d(u,t),d(t,u)$ from line~\ref{alg:dirmwcshort:neighbbroad}.} \label{alg:dirmwcshort:transmit}
            \EndFor
        \EndFor \label{alg:dirmwcshort:bfsend}
%         \algstore{dirunwmwc}
%     \end{algorithmic}
    
% \end{algorithm}

% \begin{algorithm}
%     \begin{algorithmic}
%         \algrestore{dirunwmwc}
        \mComment{Process phase-overflow vertices.}
        \State Let $Z = \{v \in V \mid Z(v) =  1\}$. Perform directed $h$-hop BFS with sources $Z$ in $O(|Z|+h)$ rounds. For each $v \in Z$ and edge $(x,v)$, set $\mu_x \gets min (\mu_x, d(v,x) + w(x,v))$. \label{alg:dirmwcshort:zsssp}
        \For{vertex $v \in V$}
            \State $\mu_v \gets \min(\mu_v, d(v,y)+1)$, for each $y \in V$ such that edge $(y,v)$ exists and $d(v,y)$ was computed during this algorithm. \label{alg:dirmwcshort:mincyc}
        \EndFor
    \end{algorithmic}
    \label{alg:dirunwmwcshort}
\end{algorithm}

\begin{lemma} \label{lem:dirmwcshortub}
    \begin{enumerate}[label=(\roman*)]
        \item Vertex $u \in V$ is a phase-overflow vertex only if $u$ is a bottleneck vertex. 
        \item There are at most $\tilde{O}(n^{4/5})$ bottleneck vertices w.h.p. in $n$.
        \item There are at most $\tilde{O}(n^{4/5})$ phase-overflow vertices w.h.p. in $n$. 
    \end{enumerate}
\end{lemma}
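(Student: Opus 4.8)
The plan is to first prove the structural bound $|P(v)| \le n/|S| = \tilde{O}(n^{3/5})$ w.h.p.\ for every $v$, and then derive (i)--(iii) from it. The key observation is that, for a fixed $v$, the elimination relation behind Definition~\ref{def:dirmwcpv} is a \emph{tournament} on $V$: for any pair $y,t$ the inequality $d(y,t)+2d(v,y) > d(t,y)+2d(v,t)$ is exactly the negation of $d(t,y)+2d(v,t) > d(y,t)+2d(v,y)$, so exactly one of $y,t$ eliminates the other (breaking ties by identifier). Writing $y \to t$ when $y$ is \emph{not} eliminated by $t$, we have $P(v) = \{y : y \to t \text{ for all } t \in R(v)\}$, the common in-neighborhood of $R(v)$ in this tournament. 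Because $S_i$ is a uniform sample of $V$ and $s^*$ is drawn uniformly from the surviving sampled vertices $T(v) = S_i \cap \Sigma_{i-1}$ (where $\Sigma_{i-1}$ is the survivor set before round $i$), $s^*$ is a uniform random survivor, and $\Sigma_i = \{y \in \Sigma_{i-1} : y \to s^*\}$ is the in-neighborhood of $s^*$ within $\Sigma_{i-1}$. Since the expected in-degree of a uniform vertex of an $m$-node tournament is $(m-1)/2$, we get $E[\,|\Sigma_i| \mid \Sigma_{i-1}\,] \le |\Sigma_{i-1}|/2$ whenever $T(v) \neq \emptyset$.

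Next I would convert this geometric decrease into the bound. While $|\Sigma_{i-1}|$ exceeds roughly $n/|S|$, the sample $S_i$ of size $\tilde{\Theta}(n^{2/5})$ hits $\Sigma_{i-1}$ w.h.p., so $T(v) \ne \emptyset$ and the halving step applies; after $\beta = \Theta(\log n)$ rounds this yields $E[\,|\Sigma_\beta|\,] \le n/2^{\beta}$. A Chernoff-type concentration for the product of per-round shrink factors (or, more crudely, Markov applied to $|\Sigma_\beta|$ with the constant in $\beta$ chosen large enough) gives $\Pr[\,|P(v)| > n/|S|\,] \le n^{-c}$ for any desired constant $c$, and a union bound over the $n$ choices of $v$ then gives $|P(v)| \le n/|S| = \tilde{O}(n^{3/5})$ for all $v$ simultaneously. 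I expect this concentration-and-union-bound step to be the main obstacle, because the per-round reduction holds only in expectation -- a single random pick may fail to shrink the survivor set at all -- so one must exploit the large gap between the typical survivor count ($n^{o(1)}$) and the comparatively loose target $n/|S|$ to make the tail bound strong enough to survive the union over all $v$.

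Given this bound, part (ii) follows by double counting. Writing $B_u = \{v : u \in P(v)\}$, we have $\sum_u |B_u| = \sum_v |P(v)| \le n \cdot \tilde{O}(n^{3/5}) = \tilde{O}(n^{8/5})$. Since $u$ is a bottleneck vertex precisely when $|B_u| > \rho = n^{4/5}$, a Markov/averaging argument on these counts bounds the number of bottleneck vertices by $\sum_u |B_u|/\rho \le \tilde{O}(n^{8/5})/n^{4/5} = \tilde{O}(n^{4/5})$, proving (ii).

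For part (i) I would analyze the randomized delay schedule. A vertex $u$ receives a first-time BFS message for a source $v$ only if $u \in P(v)$, i.e.\ only for $v \in B_u$, since line~\ref{alg:dirmwcshort:transmit} forwards a source-$v$ message to $u$ only when $u$ passes the $P(v)$ membership test. If $u$ is \emph{not} a bottleneck vertex then $|B_u| \le \rho$. To avoid the circular interaction between terminations at phase-overflow vertices and message arrival times, I would analyze the idealized restricted BFS in which \emph{no} vertex ever terminates: there each $v \in B_u$ delivers its unique first-time message to $u$ in phase $\delta_v + \mathrm{hop}(v,u)$, a deterministic shift of the independent uniform delay $\delta_v \in \{1,\dots,\rho\}$. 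Hence for a fixed phase $r$ the number of first-time sources at $u$ has expectation at most $|B_u|/\rho \le 1$, and by a Chernoff bound it is $O(\log n)$ w.h.p.; a union bound over all $u$ and all $O(h+\rho) = O(n^{4/5})$ phases shows that in the idealized process no non-bottleneck vertex ever exceeds $\Theta(\log n)$ messages in a phase. Since terminations in the actual process only remove messages (and hence cannot create an overflow that the idealized process avoids), no non-bottleneck vertex triggers the overflow condition; this is exactly the contrapositive of (i). Finally, (iii) is immediate: by (i) the phase-overflow vertices form a subset of the bottleneck vertices, and by (ii) there are at most $\tilde{O}(n^{4/5})$ of the latter.
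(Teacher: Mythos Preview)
Your proposal is correct and follows essentially the same approach as the paper: bound $|P(v)|$ via the antisymmetric (tournament) structure of the elimination relation together with iterative halving, then double-count for (ii) and apply a Chernoff bound over the random delays for (i), with (iii) immediate. Your tournament framing makes the paper's brief ``the condition we check is symmetric'' remark explicit, and your idealized-process analysis for (i) is in fact slightly more careful than the paper's, which simply asserts arrival at phase $h_i+\delta_{v_i}$ without discussing upstream terminations.
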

\begin{proof}
    We define $P^{-1}(u) = \{ v \in V \mid u \in P(v)\}$ to be the set of vertices for which $u$ is a part of their neighborhood. By definition, $u$ is a bottleneck vertex if $|P^{-1}(u)| \ge \rho = n^{4/5}$.

    \noindent
    \textit{Proof of (i)}: A message is sent to $u$ from source $v$ by some neighbor $x$ only if $u \in P(v)$, so $P^{-1}(u)$ is the set of sources for which $u$ has to send and receive BFS messages. Assume that $u$ is not a bottleneck vertex, $|P^{-1}(u)| < \rho$, we will prove that $u$ is not a phase-overflow vertex w.h.p. in $n$ , i.e., $u$ sends or receives at most $\Theta(\log n)$ messages in a single phase of BFS.
 
    By assumption, $u$ receives messages from at most $|P^{-1}(u)| < \rho$ sources throughout all phases of the restricted BFS. Additionally, each incoming edge to $u$ receives at most $\rho$ messages throughout the BFS since a single BFS sends at most one message through a single edge. Fix one such edge, that receives messages from sources $v_1,v_2, \dots v_{\gamma}$ for $\gamma < \rho$, and let the distance from $v_i$ to $u$ be $h_i$. The BFS messages from source $v$ are offset by a random delay $\delta_v \in \{1,2,\dots \rho\}$ and thus the BFS message from $v_i$ is received at $u$ at phase $h_i + \delta_{v_i}$. For a fixed phase $r$, the message from $v_i$ is sent to $u$ at phase $r$ iff $r = h_i +\delta_{v_i}$ which happens with probability $\frac{1}{\rho}$ since $\delta_{v_i}$ is chosen uniformly at random. Using a Chernoff bound, we can show that w.h.p. in $n$, there are at most $\Theta(\log n)$ of the $\gamma$ messages that are sent at phase $r$ through the chosen edge.
    
    Vertex $u$ sends an outgoing message for the BFS rooted at $v$ only if $u \in P(v)$ and it received a message from source $v$. So, $u$ sends at most $|P^{-1}(u)| < \rho$ outgoing messages through a single outgoing edge, and we can repeat the same argument above to argue that at most $\Theta(\log n)$ messages are sent out at a single phase. So, $u$ is not a phase-overflow vertex.

    \noindent
    \textit{Proof of (ii)}: We first argue that $P(v)$ has size at most $\frac{n}{|S|} = \tilde{\Theta}(n^{3/5})$ w.h.p in $n$ (adapting Lemma 6.2 of~\cite{chechik2021mwc}). When we add a vertex $t$ to $R(v)$, we expect $t$ to cover cycles through half the remaining uncovered vertices, since the condition we check (as in Definition~\ref{def:dirmwcpv}) is symmetric. At any iteration $i$, if the number of uncovered vertices is larger than $\tilde{\Theta}(n^{3/5})$, then with high probability there is some vertex in $A_i$ (which has size $\Theta(n^{2/5}\log n)$) that is also not covered and is added to $R(v)$, reducing the remaining number of uncovered vertices by half. So, the probability that the number of uncovered vertices $P(v)$ remains larger than $\tilde{\Theta}(n^{3/5})$ after $\log n$ such steps is polynomially small. 
    
    By definition of $P^{-1}(u)$, we have $\sum_{u\in V} |P^{-1}(u)| = \sum_{v \in V} |P(v)|$ (counting pairs of vertices $v, u \in P(v)$). Using the bound $|P(v)| \le \tilde{O}(n^{3/5})$, we get $\sum_{u\in V} |P^{-1}(u)| \le n \cdot \tilde{O}(n^{3/5})$. 
    
    Let $B$ denote the set of bottleneck vertices. Then, $\sum_{u\in V} |P^{-1}(u)| \ge \sum_{u \in B} |P^{-1}(u)| \ge |B| \cdot \rho$ and hence $|B| \le (n/\rho)\cdot \tilde{O}(n^{3/5}) = \tilde{O}(n^{4/5})$. 
    
    \noindent
    \textit{Proof of (iii)}: By \textit{(i)}, the number of phase-overflow vertices is $\le |B|$ and $|B| \le \tilde{O}(n^{4/5})$ by \textit{(ii)}.
\end{proof}

We now present details of the proof of correctness and round complexity of Algorithm~\ref{alg:dirunwmwcub}.

\begin{lemma}
    \label{lem:dirmwcub} Algorithm~\ref{alg:dirunwmwcub} correctly computes a 2-approximation of MWC weight in a given directed unweighted graph $G=(V,E)$ in $\tilde{O}(n^{4/5} + D)$ rounds.
\end{lemma}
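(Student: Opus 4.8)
The plan is to establish correctness and round complexity separately, in both cases treating the short-cycle subroutine (line~\ref{alg:dirmwc:shortcyc}, Algorithm~\ref{alg:dirunwmwcshort}) as a black box whose guarantees and $\tilde{O}(n^{4/5})$ cost were already argued. For correctness I would first record the easy direction: $\mu = \min_v \mu_v \ge \mu^*$ always holds, since every update to a $\mu_v$ is the weight of a genuine cycle in $G$ (a cycle through a sampled vertex in line~\ref{alg:dirmwc:longcyc}, or a cycle discovered by the short-cycle subroutine). So the work is to show $\mu \le 2\mu^*$ with high probability, where $\mu^*$ is the true MWC weight.

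The driving fact for the upper bound is the sampling guarantee: sampling each vertex with probability $\Theta(\log^3 n / h)$ ensures that any cycle of hop length $\ge h$ contains a vertex of $S$ w.h.p., since $(1-\Theta(\log^3 n/h))^h = n^{-\omega(1)}$. I would then fix a minimum weight cycle $C^*$ and analyze three cases. If $C^*$ has $\ge h$ hops, then w.h.p. some $s \in S$ lies on $C^*$; because the minimum weight cycle through $s$ equals $\min_{(v,s)\in E}(d(s,v)+w(v,s))$ in an unweighted digraph and $d(s,\cdot)$ is computed exactly in line~\ref{alg:dirmwc:ksssp}, line~\ref{alg:dirmwc:longcyc} sets some $\mu_v \le w(C^*)=\mu^*$, so $\mu\le\mu^*$. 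If $C^*$ is short ($<h$ hops) but still meets $S$, the same argument gives $\mu\le\mu^*$. The interesting case is a short $C^*$ avoiding $S$: writing $C^*$ as an edge $(y,v)$ together with the shortest $v$-to-$y$ path it traverses, either $y\in P(v)$, in which case the restricted BFS of Algorithm~\ref{alg:dirunwmwcshort} records $d(v,y)$ and hence a cycle of weight $\mu^*$ (when the $v$-$y$ path avoids phase-overflow vertices) or the phase-overflow BFS of line~\ref{alg:dirmwcshort:zsssp} records such a cycle through the offending vertex; or $y\notin P(v)$, in which case Definition~\ref{def:dirmwcpv} supplies a $t\in R(v)\subseteq S$ violating the inequality of Fact~\ref{fact:dirmwc}, so the minimum weight cycle through $t$ and $v$ has weight $\le 2\mu^*$ and is computed exactly by the sampled-vertex step. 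In every subcase $\mu\le 2\mu^*$, and a union bound over the polynomially small failure events (the sampling hit for long cycles, and the $|P(v)|$ size bound and Chernoff bounds internal to Algorithm~\ref{alg:dirunwmwcshort}, all from Lemma~\ref{lem:dirmwcshortub}) yields correctness w.h.p.

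For the round complexity I would simply add up the line costs. With $|S|=\tilde{\Theta}(n^{2/5}) \ge n^{1/3}$, line~\ref{alg:dirmwc:ksssp} costs $\tilde{O}(\sqrt{n|S|}+D)=\tilde{O}(n^{7/10}+D)$ by Theorem~\ref{thm:ksssp}.\ref{thm:ksssp:exact}; lines~\ref{alg:dirmwc:longcyc} and~\ref{alg:dirmwc:min} are local computation plus an $O(D)$ convergecast; line~\ref{alg:dirmwc:sampbroad} broadcasts $|S|^2=\tilde{\Theta}(n^{4/5})$ values in $\tilde{O}(n^{4/5}+D)$; and line~\ref{alg:dirmwc:shortcyc} runs in $\tilde{O}(n^{4/5})$ as established for Algorithm~\ref{alg:dirunwmwcshort}. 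Summing, the dominant term is $\tilde{O}(n^{4/5}+D)$.

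The step I expect to be the main obstacle is the short, $S$-avoiding case: I must confirm that $C^*$ is still recovered when its shortest $v$-to-$y$ path passes through a phase-overflow vertex, which means separately verifying that the BFS from phase-overflow sources in line~\ref{alg:dirmwcshort:zsssp} captures a cycle of weight $\le\mu^*$ through that vertex (using that $C^*$ is short, so the relevant subpath stays within $h$ hops), and then correctly assembling the several high-probability guarantees of Lemma~\ref{lem:dirmwcshortub} together with the sampling argument into a single clean union bound.
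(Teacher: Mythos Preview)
Your proposal is correct and follows essentially the same approach as the paper: the paper also splits into the long-cycle case (handled by sampling plus exact multi-source BFS), the case where $C$ leaves $P(v)$ (handled via Fact~\ref{fact:dirmwc} and a sampled $t\in R(v)$), and the case where $C\subseteq P(v)$ (handled by the restricted BFS, with a further split on whether a phase-overflow vertex lies on $C$). The only notable differences are organizational: the paper proves the $\tilde O(n^{4/5})$ cost of Algorithm~\ref{alg:dirunwmwcshort} inline within this lemma rather than citing it as a black box, and your dichotomy ``$y\in P(v)$ versus $y\notin P(v)$'' is equivalent to the paper's ``$C\subseteq P(v)$ versus $C$ extends outside $P(v)$'' precisely because of the connectedness lemma for $P(v)$, which you are implicitly using when you assert the restricted BFS reaches $y$.
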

\begin{proof}
    \noindent
    \textbf{{Correctness:}} Whenever we update $\mu_v$ for any $v \in V$, we use a shortest path from $x$ to $v$ along with an edge $(v,x)$, which means we only record weights of valid directed cycles. Let $C$ be a MWC of $G$ with weight $w(C)$ and let $v$ refer to an arbitrary vertex on $C$. In the following cases for $C$, Cases 1 and 2 are handled in Algorithm~\ref{alg:dirunwmwcub}, and Cases 3, 4 are handled by the subroutine in line~\ref{alg:dirmwc:shortcyc} using Algorithm~\ref{alg:dirunwmwcshort}. 

    \noindent
    {\boldmath\textbf{\textit{Case 1: $w(C) \ge h$:}}} $C$ contains at least $h$ vertices, and hence w.h.p. in $n$, $C$ contains a sampled vertex in $S$ by our choice of sampling probability. If $s \in S$ is on $C$, then the computation in line~\ref{alg:dirmwc:longcyc} exactly computes $w(C)$.

    \noindent
    {\boldmath\textbf{\textit{Case 2: $w(C) < h$ and $C$ extends outside $P(v)$:}}} Let $u$ be a vertex on $C$ such that $u \not\in P(v)$, then we have $d(u,t) + 2d(v,u) > d(t,u) + 2d(v,t)$ for some $t \in R(v)$. By Fact~\ref{fact:dirmwc}, this means that a minimum weight cycle containing $t$ and $v$ has weight at most $2w(C)$ since $C$ is a minimum weight cycle containing $v$ and $u$. Since $R(v) \subseteq S$, $t$ is a sampled vertex and hence $\mu_t \le 2w(C)$ by the computation in line~\ref{alg:dirmwc:longcyc}. Thus, we compute a 2-approximation of the weight of $C$.

    \noindent
    {\boldmath\textbf{\textit{Case 3: $w(C) < h$, $C$ is contained in $P(v)$, $\exists u \in C, Z(u) = 1 $:}}} In this case, $u$ is in the set of phase-overflow vertices $Z$ constructed in line~\ref{alg:dirmwcshort:zsssp} of Algorithm~\ref{alg:dirunwmwcshort}. After the BFS computation through vertices in $Z$, a minimum weight cycle through $u$ is computed in line~\ref{alg:dirmwcshort:zsssp}. Thus, $w(C)$ is computed exactly.

    \noindent
    {\boldmath\textbf{\textit{Case 4: $w(C) < h$, $C$ is contained in $P(v)$, $\forall u \in C, Z(u) = 0$:}}} In Algorithm~\ref{alg:dirunwmwcshort}, $u$ is not a phase-overflow vertex as $Z(u) = 0$ and $u$ never terminates its execution in line~\ref{alg:dirmwcshort:zterminate}. In fact, none of the vertices on $C$ terminate their execution and forward messages from all sources, including $v$. Thus, the vertex $z$ furthest from $v$ receives message $d(v,z)$ and records a cycle of weight $w(C)$ in line~\ref{alg:dirmwcshort:mincyc}.
    
    \noindent
    \textbf{{Round complexity:}} 
    We first address the running time of Algorithm~\ref{alg:dirunwmwcub} apart from line~\ref{alg:dirmwc:shortcyc} which invokes the subroutine Algorithm~\ref{alg:dirunwmwcshort}.

    We choose our sampling probability such that $|S| = \tilde{\Theta}(n/h) = \tilde{\Theta}(n^{2/5})$, so the multiple source SSSP in line~\ref{alg:dirmwc:ksssp} of Algorithm~\ref{alg:dirunwmwcub} takes time $\tilde{O}(\sqrt{n|S|}+D) = \tilde{O}(n^{7/10}+D)$ using Theorem~\ref{thm:ksssp}.\ref{thm:ksssp:exact} since we have $\tilde{\Theta}(n^{2/5}) > n^{1/3}$ sources. In line~\ref{alg:dirmwc:sampbroad}, we broadcast $|S|^2$ values taking $O(|S|^2+D) = \tilde{O}(n^{4/5}+D)$ rounds. Line~\ref{alg:dirmwc:min} involves a convergecast operation among all vertices, which takes $O(D)$ rounds~\cite{peleg2000distributed}.
    
    \noindent
    \textbf{\textit{Round complexity of Algorithm~\ref{alg:dirunwmwcshort}:}} We now show that Algorithm~\ref{alg:dirunwmwcshort} takes $\tilde{O}(n^{4/5})$ rounds. 
    
    The computation in lines~\ref{alg:dirmwcshort:param}-\ref{alg:dirmwcshort:zdef} is done locally at each vertex $v$. The local computation of $R(v)$ (lines~\ref{alg:dirmwcshort:rstart}-\ref{alg:dirmwcshort:tvadd}) only uses distances $d(v,t)$ and distances $d(s,t)$ for $s,t \in S$ that are part of the input. In line~\ref{alg:dirmwcshort:neighbbroad}, vertex $v$ sends $O(|S|)$ words of information to each neighbor, which takes $O(|S|)=\tilde{O}(n^{2/5})$ rounds. 
    
    We now address the round complexity of the restricted BFS of Lines~\ref{alg:dirmwcshort:bfsstart}-\ref{alg:dirmwcshort:bfsend}. The restricted BFS computation is organized into $(h+\rho)$ phases (recall $h=n^{3/5},\rho=n^{4/5}$). Each phase runs for $O(\log^2 n)$ rounds in which each vertex receives and sends up to $\Theta(\log n)$ BFS messages. Each message of the BFS is of the form $(Q(v),d(v,w))$ as in line~\ref{alg:dirmwcshort:qv}. Since $Q(v)$ has at most $\beta = \log n$ words, the BFS message can be sent across an edge in $O(\log n)$ rounds. The round bound for each phase is enforced in lines~\ref{alg:dirmwcshort:zterminate},\ref{alg:dirmwcshort:zterminateout} where propagation through a vertex is terminated if it has to send or receive more than $\Theta(\log n)$ messages in a single round, i.e., it is a phase-overflow vertex. The membership test in line~\ref{alg:dirmwcshort:transmit} is done using distances known to $v$ along with information from the BFS message, without additional communication. Thus, lines~\ref{alg:dirmwcshort:bfsstart}-\ref{alg:dirmwcshort:bfsend} take a total of $O\left((h+\rho) \cdot \log^2 n\right) = \tilde{O}(n^{4/5})$ rounds.
    
    We bound the round complexity of line~\ref{alg:dirmwcshort:zsssp} using Lemma~\ref{lem:dirmwcshortub} to bound the number of phase-overflow vertices by $\tilde{O}(n^{4/5})$, i.e., $|Z| \le \tilde{O}(n^{4/5})$. Now, the $h$-hop directed BFS in line~\ref{alg:dirmwcshort:zsssp} from $|Z|$ sources takes $O(|Z|+h) = \tilde{O}(n^{4/5})$ rounds~\cite{lenzen2019distributed}. Finally in line~\ref{alg:dirmwcshort:mincyc}, after all BFS computations are completed, we locally compute the minimum discovered cycle through each vertex $v$ formed by a $w$-$v$ shortest path along with edge $(v,w)$.
\end{proof}

\section{Undirected Unweighted MWC}
\label{sec:undirunwtmwcub}

In this section, we present an algorithm for computing $(2-\frac{1}{g})$-approximation of girth (undirected unweighted MWC) in $\tilde{O}(\sqrt{n}+D)$ rounds, where $g$ is the girth. We outline our method here, and present pseudocode in Appendix~\ref{app:undirmwc}. 

We first sample a set of $\tilde{O}(\sqrt{n})$ vertices and perform a BFS with each sampled vertex as source. For each non-tree edge $(x,y)$ in $T$, the BFS tree from sampled vertex $w$, we record a candidate cycle of weight $d(w,x)+d(w,y)+1$. Note that this may overestimate the size of the simple cycle $C$ containing edge $(x,y)$ and BFS paths by at most $2d(w,v)$ where $v$ is the closest vertex to $w$ on $C$ (i.e., $v$ is $lca(x,y)$ in $T$). For cycles where $d(w,v)$ is small relative to the size of $C$, we get a good approximation of the weight of the cycle. We prove that the only cycles for which a good approximation has not been computed are cycles entirely contained within the $\sqrt{n}$ neighborhood of each vertex in the cycle. We efficiently compute shortest path distances within each neighborhood using a source detection algorithm~\cite{peleg2000distributed}, and then compute MWC within the neighborhood.

If a minimum weight cycle extends outside the $\sqrt{n}$ neighborhood of even one of the vertices in the cycle, we show that a sampled vertex $w$ exists in this neighborhood. Thus, when we compute distances from each sampled vertex, we compute a 2-approximation of the weight of such a cycle. We use a more precise approach to obtain our $(2-\frac{1}{g})$-approximation, by computing lengths of cycles such that exactly one vertex is outside the neighborhood. The source detection procedure for $\sqrt{n}$-neighborhood takes $O(\sqrt{n}+D)$ rounds and BFS from $\tilde{O}(\sqrt{n})$ sampled vertices takes $\tilde{O}(\sqrt{n}+D)$ rounds giving us our total round complexity of $\tilde{O}(\sqrt{n}+D)$.

\paragraph*{\textbf{Computing $h$-hop limited MWC}} (used in Section~\ref{sec:undirwtmwcub}). If we are only required to compute approximate $h$-hop limited MWC, i.e. compute 2-approximation of minimum weight among cycles of $\le h$ hops, we can restrict our BFS computations to $h$ hops to obtain an $\tilde{O}(\sqrt{n}+h+D)$ round algorithm. This does not improve the running time for unweighted graphs as $h \le D$, but in Section~\ref{sec:undirwtmwcub} we will apply this procedure to weighted graphs using the following notion of stretched graph: given a network $G=(V,E)$ with weights on edges, a \textit{stretched} unweighted graph $G^s$ is obtained by mapping each edge of $G$ with weight $w$ to a unweighted path of $w$ edges. If $G$ is directed, the path is directed as well.

Given edge-weighted network $G=(V,E)$, we can efficiently simulate the corresponding stretched graph $G^s$ on the network by simulating all but the last edge of the path corresponding to a weighted edge at one of the endpoints. The diameter of the stretched graph may be much larger than that of $G$ but convergecast operations cost only $R_{cast} = O(D)$ rounds where $D$ is the undirected diameter of $G$. Thus, we can compute $h$-hop limited unweighted MWC in $G^s$ in $\tilde{O}(\sqrt{n}+h+R_{cast})$ rounds.
Note that a cycle of hop length $h$ in $G^s$ corresponds to a cycle of weight $h$ in $G$. See Appendix~\ref{app:undirmwc} for details. We use this idea in the next section with scaled-down weights so that even a cycle of large weight in $G$ can be approximated by a cycle of low hops in an appropriate stretched graph. 

\begin{corollary}
    \label{thm:hoplimapproxundirmwcub}
    Given a network $G=(V,E)$ with edge weights, we can compute a $(2-1/g)$-approximation of $h$-hop limited MWC of $G^s$ (stretched unweighted graph of $G$) in $\tilde{O}(\sqrt{n}+h+R_{cast})$ rounds, where $g$ is the $h$-hop limited MWC value and $R_{cast}$ is the round complexity of convergecast.
\end{corollary}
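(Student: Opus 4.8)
The plan is to obtain the corollary by running the $h$-hop-restricted version of the $(2-\frac{1}{g})$-girth algorithm of Theorem~\ref{thm:undirunwt}.\ref{thm:undirunwt:ub} on the stretched graph $G^s$, which I would not materialize but instead simulate on the physical network $G$. Since that $h$-hop-limited girth algorithm already runs in $\tilde{O}(\sqrt{n}+h+D)$ rounds on a genuine unweighted graph, the entire task reduces to two things: realizing each of its primitives over the simulated $G^s$, and replacing every diameter-dependent global step by a convergecast that costs only $R_{cast}$ rather than a quantity depending on the (possibly huge) diameter of $G^s$.

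First I would fix the simulation: each endpoint of a weighted edge $e=(a,b)$ with $w(e)=w$ locally simulates the $w-1$ internal vertices and the first $w-1$ edges of the corresponding path of $G^s$, leaving only the last edge to be traversed by physical communication across $e$. With this convention, advancing an $h$-hop BFS front in $G^s$ costs $O(h)$ rounds on $G$, because traversals of internal path vertices are purely local (generating no messages) while each real edge of $G$ carries only the bounded traffic of the front; and a convergecast or broadcast over $G^s$ costs only $R_{cast}=O(D)$ rounds, since the aggregation is funneled through the real vertices of $G$ rather than through the inflated path structure of $G^s$. I would then instantiate the three ingredients of the girth algorithm, each hop-restricted to $h$: sample vertices and run an $h$-hop BFS from each source, recording for every non-tree edge $(x,y)$ a candidate cycle of weight $d(w,x)+d(w,y)+1$; compute shortest-path distances inside the $\sqrt{n}$-neighborhood of each vertex via source detection; and take the minimum candidate by convergecast. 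Under the simulation bounds above these cost $\tilde{O}(\sqrt{n}+h)$, $\tilde{O}(\sqrt{n}+R_{cast})$, and $O(R_{cast})$ rounds respectively, summing to the claimed $\tilde{O}(\sqrt{n}+h+R_{cast})$.

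Correctness is inherited from Theorem~\ref{thm:undirunwt}.\ref{thm:undirunwt:ub} restricted to cycles of at most $h$ hops: every such cycle is either captured within a source-detected $\sqrt{n}$-neighborhood or extends beyond it, in which case a sampled vertex lies on it and the BFS yields a $(2-\frac{1}{g})$-approximation. Finally, since a cycle of hop length $\ell$ in $G^s$ corresponds exactly to a cycle of weight $\ell$ in $G$, the returned value $g$ is precisely the $h$-hop-limited MWC value, as required.

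The hard part will be the bookkeeping of the simulation overhead, namely ensuring that neither the inflated vertex count nor the inflated diameter of $G^s$ leaks into the round bound, which must stay expressed in terms of $n=|V(G)|$ and $R_{cast}$. Concretely, the two points to nail down are: that the $h$-hop BFS and the source-detection executed over the simulated $G^s$ remain at $\tilde{O}(\sqrt{n}+h)$, which relies on the internal path traversals being local and on the per-edge congestion being controlled by the number of sources together with the $h$-hop dilation; and that each otherwise diameter-bounded global operation of the girth algorithm is reorganized into an $R_{cast}$-round convergecast over $G$ rather than over $G^s$. Once these simulation guarantees are in place, the approximation factor and the hop-versus-weight correspondence follow immediately from the already-established girth algorithm.
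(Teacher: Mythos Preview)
Your proposal is correct and follows essentially the same approach as the paper: simulate $G^s$ on $G$ by having endpoints locally handle the internal path vertices, run the $h$-hop-restricted version of Algorithm~\ref{alg:undirunwmwcub} (sampled BFS plus $\sqrt{n}$-neighborhood source detection), and replace the diameter-dependent global minimum by an $R_{cast}$-round convergecast over $G$. One minor wording slip: in your correctness summary the sampled vertex is guaranteed to lie in the $\sqrt{n}$-neighborhood, not on the cycle itself, but since you explicitly inherit correctness from Theorem~\ref{thm:undirunwt}.\ref{thm:undirunwt:ub} this does not affect the argument.
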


\section{Weighted MWC}
\label{sec:wtmwcub}
In this section, we present algorithms to compute $(2+\epsilon)$-approximations of weighted MWC in $\tilde{O}(n^{2/3}+D)$ for undirected graphs and $\tilde{O}(n^{4/5}+D)$ for directed graphs. Our algorithms use the $k$-source approximate SSSP algorithm from Section~\ref{sec:multiplesssp} along with unweighted MWC approximation algorithms of Sections~\ref{sec:dirmwcub} and~\ref{sec:undirunwtmwcub} on scaled graphs to approximate weighted MWC. 

\subsection{Approximate Undirected Weighted MWC}
\label{sec:undirwtmwcub}
We present a method that computes $(2+\epsilon)$-approximation in $\tilde{O}(n^{2/3} + D)$ rounds, proving Theorem~\ref{thm:undirwt}.\ref{thm:undirwt:approxub}.
We set parameter $h=n^{2/3}$. For long cycles with $\ge h$ hops, we sample $\tilde{\Theta}(n^{1/3})$ vertices so that w.h.p. in $n$ there is at least one sampled vertex on the cycle. We compute $(1+\epsilon)$-approximate SSSP from each sampled vertex using the multiple source SSSP algorithm (result (2) of Theorem~\ref{thm:ksssp}.\ref{thm:ksssp:approx}) to compute $(1+\epsilon)$-approximate MWC among long cycles. 

For small cycles (hop length $< h$), we use scaling along with a hop-limited version of 2-approximate undirected unweighted MWC algorithm from Corollary~\ref{thm:hoplimapproxundirmwcub} of Section~\ref{sec:undirunwtmwcub}. We use the scaling technique in~\cite{nanongkai2014approx}, where it was used in the context of computing approximate shortest paths. We construct $O(\log n)$ scaled versions of the graph $G$, denoted $G^i$ for $1 \le i \le \log(hW)$ with weight $w$ edge scaled to have weight $\lceil \frac{2hw}{\epsilon2^i} \rceil$. Each $h$-hop limited shortest path $P$ in $G$ is approximated by a path of weight at most $h^* = \left(\left(1+\frac{2}{\epsilon}\right) \cdot h\right)$ in some $G^{i^*}$ --- this $i^*$ is in fact $\lceil \log w(P) \rceil$ where $w(P)$ is the weight of $P$ in $G$, as proven in~\cite{nanongkai2014approx}.

We run an $h^*$-hop limited version of the unweighted approximate MWC algorithm on the stretched scaled graph (see Section~\ref{sec:undirunwtmwcub}). The stretched graph may have large edge weights and such edges may not always be traversed in $h^*$ rounds, but a $(1+\epsilon)$-approximation of any $h$-hop shortest path is traversed within $h^*$ rounds in at least one of the $G^i$. We apply Corollary~\ref{thm:hoplimapproxundirmwcub} to compute $2$-approximation of $h^*$-hop limited MWC in each stretched $G^i$, and take the minimum to compute $(2+\epsilon)$-approximate $h$-hop limited MWC in $G$.

\begin{fact}[Theorem~3.3 of~\cite{nanongkai2014approx}]
    \label{fact:nanongkaiscaling}
    Given a (directed or undirected) weighted graph $G=(V,E)$ and parameter $h$, construct a scaled weighted graph $G^i=(V,E)$, for each integer $1 \le i \le \log_{(1+\epsilon)} (h W$) ($W$ is maximum edge weight in $G$), with weight of edge $e \in E$ changed from $w(e)$ to $\left\lceil \frac{2h w(e)}{\epsilon 2^i} \right\rceil$.  Let $d_h(x,y)$ denote $h$-hop limited shortest path distances in $G$, and $d^i(x,y)$ denote distances in $G^i$. Let distance estimate $\tilde{d}_h(x,y)$ be defined as follows
    \begin{align*}
        \tilde{d}_h(x,y) = \min \left\{ \frac{\epsilon2^i}{2h} d^i(x,y) \middle\vert i : d^i(x,y) \le \left(1+\frac{2}{\epsilon}\right) h \right\}
    \end{align*}
    Then, $\tilde{d}_h(x,y)$ is a $(1+\epsilon)$-approximation of $d_h(x,y)$, i.e.,  $d_h(x,y) \le \tilde{d}_h(x,y) \le (1+\epsilon) d_h(x,y)$
\end{fact}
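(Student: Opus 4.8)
The plan is to fix a single pair $x,y\in V$ and establish the two inequalities $d_h(x,y)\le \tilde d_h(x,y)$ and $\tilde d_h(x,y)\le (1+\epsilon)\,d_h(x,y)$ separately; since every quantity in the statement is defined per pair, this suffices. Throughout I abbreviate $\lambda_i=\frac{\epsilon 2^i}{2h}$, so that the scaled weight is $w^i(e)=\lceil w(e)/\lambda_i\rceil$ and $\tilde d_h(x,y)=\min\{\lambda_i\,d^i(x,y): d^i(x,y)\le h^\ast\}$ with $h^\ast=(1+\tfrac{2}{\epsilon})h$. The entire argument rests on one quantitative trade-off: rounding each edge weight up to an integer multiple of $\lambda_i$ can only inflate distances (this gives the lower bound), while choosing the scale $2^i$ close to the true path weight keeps the total additive rounding error — at most $\lambda_i$ per edge — below an $\epsilon$ fraction of that weight (this gives the upper bound).

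For the lower bound, note that $\lceil w(e)/\lambda_i\rceil\ge w(e)/\lambda_i$ for every edge, so any path $Q$ from $x$ to $y$ in $G^i$ satisfies $\lambda_i\,w^i(Q)=\lambda_i\sum_{e\in Q}\lceil w(e)/\lambda_i\rceil\ge\sum_{e\in Q}w(e)=w(Q)$. Applying this to the path attaining $d^i(x,y)$ shows that $\lambda_i\,d^i(x,y)$ is bounded below by the true $G$-weight of a genuine $x$–$y$ path, hence at least $d(x,y)$; taking the minimum over the admissible indices preserves this, so $\tilde d_h(x,y)$ never reports a value below a genuine shortest path. (Since each scaled edge weight is at least $1$, a scaled path of weight $\le h^\ast$ uses at most $h^\ast$ hops, which is what makes the hop-restricted minimum well defined, and what ties the estimate to the $h$-hop regime of interest.)

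For the upper bound, let $P$ be an optimal $h$-hop path, so $|P|\le h$ and $w(P)=d_h(x,y)$, and pick the scale $i^\ast=\lceil\log_2 w(P)\rceil$, which yields $w(P)\le 2^{i^\ast}<2\,w(P)$. First I would certify that $i^\ast$ is admissible by bounding the ceilings: $w^{i^\ast}(P)<\sum_{e\in P}\bigl(w(e)/\lambda_{i^\ast}+1\bigr)=w(P)/\lambda_{i^\ast}+|P|\le\frac{2h}{\epsilon}+h=h^\ast$, using $2^{i^\ast}\ge w(P)$ and $|P|\le h$; hence $d^{i^\ast}(x,y)\le w^{i^\ast}(P)\le h^\ast$ and $i^\ast$ enters the minimum. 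Rescaling the same estimate then gives $\tilde d_h(x,y)\le\lambda_{i^\ast}\,d^{i^\ast}(x,y)\le\lambda_{i^\ast}\,w^{i^\ast}(P)<w(P)+\lambda_{i^\ast}|P|\le w(P)+\tfrac{\epsilon}{2}2^{i^\ast}<w(P)+\epsilon\,w(P)$, where the last step uses $2^{i^\ast}<2\,w(P)$. This gives $\tilde d_h(x,y)<(1+\epsilon)\,w(P)=(1+\epsilon)\,d_h(x,y)$.

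The main obstacle — and the only delicate part — is the upper bound, specifically verifying that $i^\ast=\lceil\log_2 d_h(x,y)\rceil$ is simultaneously \emph{admissible} (the scaled distance stays within the threshold $h^\ast$ so that $i^\ast$ actually appears in the minimum) and \emph{tight} (the accumulated per-edge rounding error $\lambda_{i^\ast}|P|$ is at most $\epsilon\,d_h(x,y)$). These two competing requirements are exactly what force the specific threshold $h^\ast=(1+2/\epsilon)h$ and the specific scaling factor $2h/(\epsilon 2^i)$ in the statement, and the hop bound $|P|\le h$ is used crucially in both. The lower bound, by contrast, is essentially immediate from monotonicity of the ceiling.
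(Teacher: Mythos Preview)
The paper does not prove this statement; it is quoted verbatim as a fact from~\cite{nanongkai2014approx} and used as a black box. Your argument is exactly the standard proof of this scaling lemma: monotonicity of the ceiling for the lower bound, and choosing $i^\ast=\lceil\log_2 d_h(x,y)\rceil$ to control the total rounding error by $\epsilon\,d_h(x,y)$ for the upper bound. This is the approach in Nanongkai's original paper, so there is nothing to contrast.

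One small remark on the lower bound. What you actually establish is $\tilde d_h(x,y)\ge d(x,y)$, the \emph{unrestricted} shortest-path distance, not $\tilde d_h(x,y)\ge d_h(x,y)$ as the fact is phrased here: the path realizing $d^i(x,y)$ may have up to $h^\ast=(1+2/\epsilon)h$ hops, so after rescaling you only dominate $d_{h^\ast}(x,y)\ge d(x,y)$, not $d_h(x,y)$. This is not a flaw in your reasoning but a slight imprecision in how the paper transcribes the fact; the inequality $d(x,y)\le\tilde d_h(x,y)\le(1+\epsilon)d_h(x,y)$ is what actually holds and what every application in the paper (approximating cycle weights) needs. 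You were right to flag the hop count of the witnessing path in your parenthetical.
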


\begin{algorithm}[t]
    \caption{$(2+\epsilon)$-Approximation Algorithm for Undirected Weighted MWC}
    \begin{algorithmic}[1]
        \Require Undirected weighted graph $G=(V,E)$
        \Ensure $M$, $(2+\epsilon)$-approximation of MWC of $G$ 
        \State Let $h = n^{2/3}$, $h^* = \left(\left(1+\frac{2}{\epsilon}\right) \cdot h\right)$
        \State Construct set $S \subseteq V$ by sampling each vertex $v \in G$ with probability $\Theta(\frac{\log n}{h})$. W.h.p in $n$, $|S| = \tilde{\Theta}(n^{1/3})$.\label{alg:wtmwc:sample} 
        \State Perform $(1+\epsilon)$-approximate multiple source SSSP for each $w \in S$: each vertex $z \in V$ knows the distance $d(w,v)$. We also keep track of the vertex adjacent to source $w$ on each computed $w$-$v$ shortest path, denoted $f(w,v)$.\label{alg:wtmwc:long} 
        \State Each vertex $v$ shares its shortest path information $\{(d(w,v),f(w,v)) \mid w \in S\}$ with its neighbors.\label{alg:wtmwc:neighbsend}
        \ForAll{vertex $v \in V$} \mlComment{Local computation at $v$}
            \lineComment{Compute MWC among all long cycles}
            \State For each neighbor $x$ of $v$, for each sampled vertex $w \in S$, if $f(w,v) \ne f(w,x)$, then $v$ records a cycle through $w$,$v$ and $x$: $M_v \gets \min(M_v, d(w,v) + d(w,x) + w(v,x))$. \label{alg:wtmwc:longcyc} 
        \EndFor
        \For{$1 \le i \le \log_{(1+\epsilon)} hW$}\label{alg:wtmwc:scalestart}
            \lineComment{Construct scaled graphs and compute distance-limited MWC to compute short cycles}
            \State Replace every edge $e \in E$ of weight $w(e)$ with a path of length $w^{i}(e) = \left\lceil \frac{2h w(e)}{\epsilon 2^i} \right\rceil$ to get graph $G^i$.\label{alg:wtmwc:scale}
            \State Compute $h^*$-hop limited $2$-approximation of undirected unweighted MWC of $G^i$ using Algorithm~\ref{alg:undirunwmwcub} by simulating weighted edges as unweighted paths. Let $M^i$ denote this distance-limited MWC of $G^i$.\label{alg:wtmwc:scalemwc}
            \State $M \gets \min(M, \frac{\epsilon2^i}{2h} M^i)$. \label{alg:wtmwc:mwcrec} \rightComment{Reverse scaling to obtain hop-limited MWC in $G$}
        \EndFor 
        \State Return $M \gets \min(M, \min_{v \in V} M_v)$, by performing a convergecast operation~\cite{peleg2000distributed}. \label{alg:wtmwc:globalmin}
    \end{algorithmic}
    \label{alg:undirwtmwcub}
\end{algorithm}

\begin{lemma}
    Algorithm \ref{alg:undirwtmwcub} computes a $(2+\epsilon)$-approximation of MWC in $\tilde{O}(n^{2/3} + D)$ rounds.
    \label{lem:undirwtmwcub}
\end{lemma}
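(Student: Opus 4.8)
The plan is to prove Lemma~\ref{lem:undirwtmwcub} in two parts, establishing the approximation ratio and the round complexity separately, following the natural split of Algorithm~\ref{alg:undirwtmwcub} into a long-cycle phase (lines~\ref{alg:wtmwc:long}--\ref{alg:wtmwc:longcyc}) and a short-cycle scaling phase (lines~\ref{alg:wtmwc:scalestart}--\ref{alg:wtmwc:mwcrec}). I first fix a minimum weight cycle $C$ of weight $\mu$ in $G$ and argue that one of the two phases records a cycle of weight at most $(2+\epsilon)\mu$.

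\textbf{Correctness.} I would case-split on the hop length of $C$. If $C$ has $\ge h = n^{2/3}$ hops (the long case), then since we sample each vertex with probability $\Theta(\frac{\log n}{h})$, a standard Chernoff argument gives that w.h.p.\ some sampled vertex $w \in S$ lies on $C$. The $(1+\epsilon)$-approximate SSSP from $w$ in line~\ref{alg:wtmwc:long} yields distance estimates to the two neighbors $v,x$ of $w$ on $C$; the condition $f(w,v)\ne f(w,x)$ in line~\ref{alg:wtmwc:longcyc} ensures the two approximate shortest paths leave $w$ along different edges, so concatenating them with edge $(v,x)$ forms a genuine (simple) cycle. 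Its recorded weight $d(w,v)+d(w,x)+w(v,x)$ is a $(1+\epsilon)$-approximation of $w(C)$. If instead $C$ has $< h$ hops (the short case), I invoke Fact~\ref{fact:nanongkaiscaling}: the $h$-hop-limited shortest paths making up $C$ are each $(1+\epsilon)$-approximated by an $h^*$-hop path in some scaled graph $G^{i^*}$ with $i^*=\lceil\log w(P)\rceil$. Here the key point is that all hop-limited pieces of a single cycle of weight $\mu$ share a common good scale $i^*=\lceil\log\mu\rceil$, so the entire cycle is $(1+\epsilon)$-approximated as an $h^*$-hop-limited cycle in $G^{i^*}$. Applying Corollary~\ref{thm:hoplimapproxundirmwcub} to the stretched graph of $G^{i^*}$ gives a $2$-approximation of that hop-limited MWC, and the reverse scaling $\frac{\epsilon 2^i}{2h}M^i$ in line~\ref{alg:wtmwc:mwcrec} recovers a $2(1+\epsilon)\le(2+\epsilon')$-approximation of $\mu$ in $G$ (after reabsorbing constants into $\epsilon$). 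Taking the minimum over both phases in line~\ref{alg:wtmwc:globalmin} yields the claimed $(2+\epsilon)$-approximation.

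\textbf{Round complexity.} I would tally the phases. Sampling (line~\ref{alg:wtmwc:sample}) is local. The multiple-source approximate SSSP from $|S|=\tilde\Theta(n^{1/3})$ sources in line~\ref{alg:wtmwc:long} takes $\tilde O(\sqrt{n|S|}+D)=\tilde O(\sqrt{n\cdot n^{1/3}}+D)=\tilde O(n^{2/3}+D)$ rounds by Theorem~\ref{thm:ksssp}.\ref{thm:ksssp:approx}, since $|S|\ge n^{1/3}$. Line~\ref{alg:wtmwc:neighbsend} sends $O(|S|)=\tilde O(n^{1/3})$ words per edge, absorbed into $\tilde O(n^{2/3})$. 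The scaling loop runs $O(\log(hW))=\tilde O(1)$ iterations; in each, Corollary~\ref{thm:hoplimapproxundirmwcub} applied with hop bound $h^*=O(h)=O(n^{2/3})$ costs $\tilde O(\sqrt n + h^* + R_{\mathrm{cast}})=\tilde O(n^{2/3}+D)$ rounds (using $R_{\mathrm{cast}}=O(D)$ for the stretched graph). Summing over the $\tilde O(1)$ scales keeps the total at $\tilde O(n^{2/3}+D)$, and the final convergecast costs $O(D)$.

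\textbf{Main obstacle.} The subtle point I expect to require care is verifying that the short-cycle argument composes correctly across the scaling: I must confirm that a single cycle $C$ of weight $\mu$ and $<h$ hops is approximated as a closed $h^*$-hop-limited walk in one common scaled graph $G^{i^*}$, rather than each edge or sub-path landing in a different scale. The cleanest way is to observe that $h^*$ is chosen so that every $h$-hop path of weight at most $\mu$ survives within $h^*$ hops in the scale $i^*=\lceil\log\mu\rceil$, and that the whole cycle has weight $\mu$; I would argue directly that the image of $C$ in $G^{i^*}$ is a cycle of at most $h^*$ hops whose scaled length reverse-scales back to within $(1+\epsilon)\mu$. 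A secondary point is ensuring the long-cycle recorded quantity is a \emph{simple} cycle rather than a closed walk that backtracks through $w$; the test $f(w,v)\ne f(w,x)$ handles this, and I would note that even if the two approximate shortest paths share interior vertices the recorded weight still upper-bounds a valid cycle weight through $w$, so no underestimate of the true MWC can occur.
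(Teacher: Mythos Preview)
Your overall structure and the short-cycle scaling argument match the paper's proof, and the round-complexity tally is correct. The one place your write-up goes wrong is the long-cycle case: you identify $v,x$ as ``the two neighbors of $w$ on $C$'', but then the edge $(v,x)$ does not exist in general (it would force $C$ to be a triangle), and the quantity $d(w,v)+d(w,x)+w(v,x)$ has nothing to do with $w(C)$.

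The correct decomposition, which the paper uses (and which line~\ref{alg:wtmwc:longcyc} is designed around), is the \emph{critical-edge} decomposition: for a minimum weight cycle $C$ through $w$ there is an edge $(v,x)$ on $C$ such that $C$ equals a $w$--$v$ shortest path, the edge $(v,x)$, and an $x$--$w$ shortest path, and the first vertices after $w$ on these two shortest paths are distinct. That last condition is exactly $f(w,v)\neq f(w,x)$, and it is why the algorithm iterates over every vertex $v$, every neighbor $x$ of $v$, and every sampled $w$. With $v,x$ chosen this way, the recorded value is at most $(1+\epsilon)\,w(C)$ (since the approximate SSSP overestimates each of the two path lengths by at most a $(1+\epsilon)$ factor), and it is the weight of an actual simple cycle in $G$ because the two tree paths from $w$ with different first edges are vertex-disjoint except at $w$. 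Once you replace your description of $v,x$ with this, the long-cycle case goes through and the rest of your argument is essentially the paper's.
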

\begin{proof}
    \noindent
    \textbf{\textit{Correctness:}} Let $C$ be a minimum weight cycle of weight $w(C)$. We first address the case when $C$ is long, having more than $h=n^{2/3}$ hops.

    The sampled set $S$ has size $\tilde{O}(n/h)$ and w.h.p. in $n$, at least one sampled vertex is on any cycle of hop length $\ge h$. Let $w \in S$ be a sampled vertex on $C$. In line~\ref{alg:wtmwc:long}, the shortest path from $w$ to each $v \in V$ is computed. For any minimum weight cycle $C$ through vertex $w$, we can identify two vertices $v$ and $x$ such that $C$ is the concatenation of $w$-$x$ and $w$-$v$ shortest path along with edge $(w,v)$. This edge $(w,v)$ is called the critical edge of $C$~\cite{williams2010subcubic,agarwal2018finegrained}. The $w$-$x$ and $w$-$v$ shortest paths are such that the next vertex from $w$ on the two paths are distinct, as cycle $C$ is simple. This observation is used in line~\ref{alg:wtmwc:longcyc}, to correctly compute long cycles through each vertex using the first vertex information computed along with SSSP in line~\ref{alg:wtmwc:long} and the neighbor distance information sent in line~\ref{alg:wtmwc:neighbsend}.
    
    Lines~\ref{alg:wtmwc:scale}-\ref{alg:wtmwc:mwcrec} compute a $(2+2\epsilon)$-approximation of $w(C)$ if $C$ is short with hop length $< h$, and we can set $\epsilon' = 2\epsilon$ to get a $(2+\epsilon')$-approximation. To prove this, we use Fact~\ref{fact:nanongkaiscaling}, which implies the minimum of the $h^*$- hop limited unweighted MWC in $G^{i}$ over all $i$ is a $(1+\epsilon)$-approximation of the $h$-hop limited weighted MWC in $G$. Since the undirected unweighted MWC algorithm computes a $2$-approximation, we compute a $2(1+\epsilon)$-approximation of the weight of $C$ in line~\ref{alg:wtmwc:scalemwc}. Thus at some iteration $i$, $M$ is correctly updated with this weight in line~\ref{alg:wtmwc:mwcrec}. 
    
    Finally, the minimum weight cycle among the computed long and short cycles is computed by a global minimum operation in line~\ref{alg:wtmwc:globalmin}. 
    
    \noindent
    \textbf{\textit{Round Complexity:}}
    Computing the $(1+\epsilon)$-approximate SSSP from $\tilde{\Theta}(n/h) = \tilde{\Theta}(n^{1/3})$ sources in line~\ref{alg:wtmwc:long} takes $\tilde{O}(n^{2/3} + D)$ using Algorithm~\ref{alg:n3sssp}. The communication in line~\ref{alg:wtmwc:neighbsend} sends $O(|S|)$ words taking $\tilde{O}(n^{1/3})$ rounds.
    The loop in line \ref{alg:wtmwc:scalestart} runs for $O(\log n)$ iterations, and in each iteration, lines~\ref{alg:wtmwc:scale}-\ref{alg:wtmwc:mwcrec} take $\tilde{O}(\sqrt{n}+h+D)$ rounds by Corollary~\ref{thm:hoplimapproxundirmwcub} --- note that $R_{cast} = O(D)$ where $D$ is the undirected diameter of $G$ regardless of the diameter of scaled graph $G^i$.  The convergecast operation in line~\ref{alg:wtmwc:globalmin} takes $O(D)$ rounds~\cite{peleg2000distributed}. Thus, the total round complexity is $\tilde{O}(n^{2/3} + D)$.
\end{proof}

\subsection{Approximate Directed Weighted MWC}
\label{sec:dirwtmwcub}
We use the above framework for undirected graphs to compute $(2+\epsilon)$-approximation of directed weighted MWC, by replacing the hop-limited undirected unweighted MWC computation with a directed version. We can compute $h$-hop limited 2-approximation of MWC in stretched directed unweighted graphs in $\tilde{O}(n^{4/5}+h+R_{cast})$ rounds by applying the modifications in Corollary~\ref{thm:hoplimapproxundirmwcub} to our directed unweighted MWC algorithm in Section~\ref{sec:dirmwcub}. The overall algorithm runs in $\tilde{O}(n^{4/5}+D)$ rounds, dominated by the cost of the directed unweighted MWC subroutine.

\begin{proof}[Proof of Theorem \ref{thm:dirub}.\ref{thm:dirub:wt}]
    We prove that $(2+\epsilon)$-approximation of directed weighted MWC can be computed in $\tilde{O}(n^{4/5}+D)$ rounds.

    We make the following changes to Algorithm~\ref{alg:undirwtmwcub}. We modify line~\ref{alg:wtmwc:long} of Algorithm~\ref{alg:undirwtmwcub} to use the directed version of the approximate multiple source SSSP algorithm (Lemma~\ref{lem:n3ssspwt}) to compute long cycles. The round complexity of this line is unchanged, taking $\tilde{O}(n^{2/3}+D)$ rounds. We modify line~\ref{alg:wtmwc:scalemwc} to use a $h^*$-hop limited version of the directed unweighted MWC algorithm of Algorithm~\ref{alg:dirunwmwcub}. 
    
    We now modify Algorithm~\ref{alg:dirunwmwcub} to compute hop-limited MWC. We will restrict the BFS computations done by the unweighted algorithm to $h^*$ hops (in cases where BFS would have extended further), and the convergecast operations cost $O(D)$ where $D$ is the undirected diameter of the original weighted graph instead of the stretched scaled graph. 
    With these modifications, we can compute a 2-approximation of directed unweighted MWC of a stretched graph restricted to $h^*$-hops in $\tilde{O}(n^{4/5} + h^* + D)$ rounds. For more details, see the hop-limited modifications for the undirected version in the proof of Corollary~\ref{thm:hoplimapproxundirmwcub} in Appendix~\ref{app:undirunwmwcub}.

    Since scaling introduces an addition $(1+\epsilon)$-approximation factor, we compute a $(2+\epsilon)$-approximation of directed weighted MWC in $G$. The round complexity with these modifications is $\tilde{O}(n^{4/5}+D)$, dominated by the hop-limited directed unweighted MWC computation.
\end{proof}

\section{\texorpdfstring{Multiple Source SSSP from $k < n^{1/3}$ sources}{Multiple Source SSSP from k<n1/3 sources}}
\label{sec:ksssp}
We present our algorithm for computing directed BFS and SSSP from $k < n^{1/3}$ sources.
For a summary of our results, see Theorem~\ref{thm:ksssp}. Our algorithm smoothly interpolates between the current best round complexities for $k$-source approximate SSSP in directed graphs for $k=1$ (SSSP~\cite{cao2021approximatesssp}) and $k=n$ (APSP~\cite{bernsteinapsp}).  Our algorithm is sublinear whenever both $k$ and $D$ are sublinear. Our results may be of independent interest and have applications to other CONGEST algorithms.

For undirected weighted graphs, an algorithm in~\cite{elkin2019hopset} presents an $\tilde{O}(\sqrt{nk} \cdot n^{o(1)}+D)$ round CONGEST algorithm for $k$-source $(1+\epsilon)$-approximate shortest path for any $1 \le k \le n$. Their algorithm uses an undirected hopset construction on a sampled skeleton graph and uses broadcasts to communicate through hopset edges. In this section, we present a result for multiple source approximate SSSP in directed weighted graphs that utilizes a CONGEST algorithm for directed hopset construction given in \cite{cao2021approximatesssp} (which trivially applies to unweighted graphs as well). Our algorithm runs in $\tilde{O}(\sqrt{nk} + k^{2/5}n^{2/5+o(1)}D^{2/5} + D)$ rounds if $k<n^{1/3}$. For $k \ge n^{1/3}$, our previous algorithm in Section~\ref{sec:multiplesssp} is much more streamlined as it only involves broadcasts instead of the complicated directed hopset construction.

We now present the details of our algorithm, shown in Algorithm~\ref{alg:ksssp}. The algorithm computes approximate $k$-source shortest path distances in the skeleton graph, which gives us distances between sources and sampled vertices, using a directed hopset construction. For our skeleton graph shortest path algorithm, we first assume all edges in the skeleton graph are unweighted and directed, and show how to compute $k$-source directed BFS (Lemma~\ref{lem:skeleton}.A). Then, we use a scaling technique~(Fact~\ref{fact:approxhopsssp},\cite{nanongkai2014approx}) to extend the algorithm to $(1+\epsilon)$-approximate shortest path in the weighted skeleton graph (Lemma~\ref{lem:skeleton}.B). 

\begin{lemma}
    Let $G'=(S,E')$ be a directed weighted skeleton graph on directed $G=(V,E)$ ($S \subseteq V$ and $E' \subseteq S \times S$), where the underlying CONGEST network of $G$ has undirected diameter $D$ and $W$ is the maximum weight of an edge in $G$. Let $U \subseteq V$ be the set of sources, with $|U|=k$.
    \begin{enumerate}[label=\Alph*.]
        \item We can compute exact $k$-source $h$-hop directed BFS on the skeleton graph $G'$ in $O\left(k \cdot |S| + h\cdot D \right)$ rounds.
        \item Given an $(h,\frac{\epsilon}{2})$-hopset (for any constant $\epsilon > 0$) for the skeleton graph $G'$, we can compute $(1+\epsilon)$-approximate $k$-source weighted SSSP on the skeleton graph in $\tilde{O}\left((k|S| + hD) \cdot \log W \right)$ rounds.
    \end{enumerate}
    \label{lem:skeleton}
\end{lemma}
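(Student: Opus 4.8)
The plan is to simulate a hop-synchronized BFS (a layered Bellman--Ford) on the \emph{virtual} skeleton graph $G'$ using only the underlying CONGEST network of $G$, exploiting that the single primitive we need is global broadcast over a BFS tree of the network (diameter $D$), which pipelines $M$ messages in $O(M+D)$ rounds. For Part A I would first arrange that each skeleton vertex knows its incident edges of $E'$ in \emph{both} directions; this comes for free from whatever short-hop exploration produced $G'$, since the endpoint reached by such an exploration learns the edge exactly as its initiator does. I then run the BFS in $h$ hop-phases. Each $s\in S$ maintains, for every source $u\in U$, a tentative hop-distance initialized to $0$ at the sources. In phase $t$, every pair $(u,s)$ that was \emph{newly finalized} in phase $t-1$ is broadcast as a single $O(\log n)$-word message; upon receiving it, each out-neighbor $x$ of $s$ (which knows $s$ is an in-neighbor) finalizes $d(u,x)=t$ unless already finalized. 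Correctness is the standard BFS invariant that after phase $t$ every pair at skeleton-distance exactly $t\le h$ is finalized.

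For the round bound of Part A the key point --- and the only place amortization matters --- is that each pair $(u,s)\in U\times S$ is finalized at most once over the entire run, so if $F_t$ is the set of pairs finalized in phase $t$ then $\sum_{t=1}^{h}|F_t|\le k|S|$. Phase $t$ broadcasts $|F_t|$ messages and hence costs $O(|F_t|+D)$ rounds (the additive $D$ also covers detecting an empty frontier and synchronizing the phase), so $\sum_{t=1}^h O(|F_t|+D)=O(k|S|+hD)$, as claimed.

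For Part B I would reduce the weighted problem to $\tilde O(\log W)$ invocations of Part A via the scaling framework of Fact~\ref{fact:approxhopsssp}/Fact~\ref{fact:nanongkaiscaling}, run on the augmented graph $G''=G'\cup H$ where $H$ is the given $(h,\tfrac{\epsilon}{2})$-hopset (each vertex also learning its incident hopset edges). Concretely, for each of the $O(\tfrac{1}{\epsilon}\log(hW'))$ scales --- where $W'\le hW$ bounds skeleton-edge weights, so $\log W'=\tilde O(\log W)$ --- I build the scaled integer-weighted graph $G''^{\,i}$ with edge $e$ reweighted to $\lceil \tfrac{4h\,w(e)}{\epsilon 2^i}\rceil$ and compute $h^{\ast}$-hop ($h^{\ast}=(1+\tfrac{4}{\epsilon})h$) shortest paths in it. Because the relevant scaled weights are small integers, this is exactly an unweighted BFS once each weighted edge is viewed as a stretched unit-weight path: it is the same broadcast procedure as Part A, except a vertex finalized at accumulated scaled-weight $t$ propagates to an out-neighbor at $t+w^i(e)$. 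Since each $(u,s)$ is still finalized once, one scale costs $O(k|S|+h^{\ast}D)=\tilde O(k|S|+hD)$, and summing over the scales gives the stated $\tilde O((k|S|+hD)\log W)$.

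Correctness of Part B is the composition of two $(1+\tfrac{\epsilon}{2})$ factors. The $(h,\tfrac{\epsilon}{2})$-hopset guarantees that the $h$-hop-limited distance in $G''$ lies between $d_{G'}(u,v)$ and $(1+\tfrac{\epsilon}{2})d_{G'}(u,v)$, while Fact~\ref{fact:nanongkaiscaling} applied with parameter $\epsilon/2$ guarantees that the reverse-scaled minimum $\min_i \tfrac{(\epsilon/2)2^i}{2h}\,d^i(u,v)$ approximates this $h$-hop-limited distance to within $(1+\tfrac{\epsilon}{2})$; multiplying, $(1+\tfrac{\epsilon}{2})^2\le 1+\epsilon$. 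The main obstacle I anticipate is keeping the broadcast volume at the amortized $O(k|S|)$ total rather than $O(k|S|)$ per phase: this is what forces the ``finalize once, broadcast once'' discipline together with the in-neighbor-driven relaxation, and it is also what lets the integer-weighted (stretched) generalization in Part B go through without paying for the intermediate stretched vertices, since those carry no broadcasts.
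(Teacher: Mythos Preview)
Your proposal is correct and follows essentially the same approach as the paper: simulate $h$-hop BFS on the virtual skeleton graph via pipelined global broadcasts over a BFS tree of $G$, amortize the total broadcast volume by the ``finalize once, broadcast once'' argument to obtain $O(k|S|+hD)$, and for Part~B combine the $(h,\tfrac{\epsilon}{2})$-hopset with the scaling of Fact~\ref{fact:approxhopsssp} to reduce to $\tilde O(\log W)$ bounded-hop BFS instances. One trivial slip: $(1+\tfrac{\epsilon}{2})^2 = 1+\epsilon+\tfrac{\epsilon^2}{4} > 1+\epsilon$, so use $\epsilon/3$ (or simply absorb into $1+O(\epsilon)$) when composing the two approximation factors; the paper is equally loose on this point.
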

\begin{proof}
    \textbf{[A.]} Assuming the skeleton graph is undirected, we will compute distances $d(u,s)$ (will be known to node $s$) for each $u\in U,s \in S$. We assume that along with the skeleton graph, directed edges $(u,s)$ corresponding to $u$-$s$ directed paths in $G$ have been computed at node $s\in S$ if they exist. 
    
    Since we cannot communicate directly across edges of the skeleton graph (they correspond to hop-limited paths in the network), we will use broadcasts on the network $G$ to propagate BFS messages. Consider an $h$-hop BFS from one source $u \in U$. Each BFS message that an intermediate vertex $v$ in the skeleton graph sends will be of the form $(u, v, d)$, where $u$ is the source of the BFS and $d$ is the shortest $u$-$v$ distance. At the first step of the BFS, $u$ broadcasts the message $(u,u,0)$. All vertices in $S$ know their incoming edges in the skeleton graph $G'$, and the (unvisited) vertices $v$ with incoming edges $(u,v)$ will update their distance from $u$ to be 1. All such vertices will now broadcast the message $(u,v,1)$ and so on. If $N_{G'}(u,t)$ is the set of vertices at distance $t$ from $u$ in $G'$, all the vertices in $N_{G'}(u,t)$ will broadcast BFS messages at step $t$ of the BFS.

    Now, we consider all $k$ sources $u_1, \dots u_k \in U$. At a given step $t$, for each $i$  let $N_{G'}(u_i,t)$ be the set of vertices broadcasting BFS messages for source $u_i$. To communicate these BFS messages using a broadcast, assume that there is a broadcast tree that has been computed in the network $G$ with diameter $D$~\cite{peleg2000distributed}. The total number of messages to be broadcast in this step is $M_t = \sum_{i=1}^k |N_{G'}(u_i,t)|$, and we perform this broadcast by pipelining through the broadcast tree in $O(M_t + D)$ rounds. Now, we have at most $h$ steps in BFS (since we only compute BFS up to $h$ hops in $G'$), and for a single source $u_i$, $\sum_{t=1}^{h} |N_{G'}(u_i,t)|$ cannot exceed $|S|$, the number of vertices in $G'$. Using this, we bound $\sum_{t=1}^h (M_t + D)$ by $k|S| + hD$, and our round complexity for $k$-source $h$-hop BFS is $O(k |S| + hD)$ rounds.

    \textbf{[B.]} We use the scaling technique of~\cite{nanongkai2014approx} as stated in Fact~\ref{fact:approxhopsssp}, which allows us to compute bounded hop approximate shortest paths efficiently. We replace the $h$-hop BFS used in the unweighted algorithm with an $h$-hop $k$-source approximate SSSP algorithm on the skeleton graph, which uses $O(\log n)$ computations of $O(\frac{h}{\epsilon})$ hop BFS computations. This gives us a round complexity of $\tilde{O}\left((k|S| + hD) \cdot \log W \right)$ for computing $(1+\frac{\epsilon}{2})$-approximate SSSP in the skeleton graph augmented by the hopset. With the guarantee of the $(h,\frac{\epsilon}{2})$-hopset, the $h$-hop bounded approximate SSSP computations are  $(1+\epsilon)$-approximations of the shortest path distances in the skeleton graph.
\end{proof}

We now present our algorithm for computing $k$-source approximate SSSP in directed (unweighted or weighted) graphs and undirected weighted graphs. 

\begin{note}
    \label{note:ksssp}
    The proof of Theorem~\ref{thm:ksssp} establishes the following setting of parameters $|S|$ and $h$ for Algorithm~\ref{alg:ksssp} based on the input $k$ and $D$:

    \begin{minipage}{.45\linewidth}
        $$
    |S| = \begin{cases} 
        \sqrt{\frac{n}{k}} & ; k \ge n^{1/3} \text{ or } \\
        & \;\;\; k < n^{1/3}, D< n^{1/4}k^{3/4} \\
        \frac{n^{3/5}}{D^{2/5}} & ; k < n^{1/3}, n^{1/4}k^{3/4} < D < n^{2/3} \\
        n^{1/3} & ; k < n^{1/3},  n^{2/3} < D
    \end{cases}
    $$
    \end{minipage}%
    \hspace{0.06\linewidth}
    \begin{minipage}{.45\linewidth}
        $$
    h = \begin{cases} 
        1 & ; k \ge n^{1/3}  \\
        \frac{n^{1/4}}{k^{1/4}}& ; k < n^{1/3}, D< n^{1/4}k^{3/4} \\
        \frac{n^{2/5}}{D^{3/5}} & ; k < n^{1/3}, n^{1/4}k^{3/4} < D < n^{2/3} \\
        1 & ; k < n^{1/3},  n^{2/3} < D
    \end{cases}
    $$
    \end{minipage}
\end{note}

\begin{algorithm}[t]
    \caption{Approximate $k$-source Directed BFS algorithm}
    \begin{algorithmic}[1]
        \Require Directed unweighted graph $G=(V,E)$, set of sources $U \subseteq V$ with $|U| = k$.
        \Ensure Every vertex $v \in V$ computes $d(u,v)$ for each source $u \in U$.
        \State Let $h$, $|S|$ be parameters chosen based on $k$ and $D$ (undirected diameter of $G$) as in Note~\ref{note:ksssp}.
        \State Construct set $S \subseteq V$ by sampling each vertex $v \in V$ with probability $p = \Theta\left(\frac{|S| \cdot \log n}{n}\right)$ (parameter $p$ is chosen to achieve desired size $|S|$). \label{alg:ksssp:samp}
        \State  Perform directed BFS from each vertex in $S$ up to $\frac{n}{|S|}$ hops, which takes $O(|S| + \frac{n}{|S|})$ rounds. Repeat this computation in the reversed graph. \label{alg:ksssp:sampbfs}
        \State Perform $\frac{n}{|S|}$-hop directed BFS from each of the $k$ sources, which takes $O(k+\frac{n}{|S|})$ rounds. \\ \rightComment{Compute short hop distances.} \label{alg:ksssp:kbfs}
        \State Construct a skeleton graph on the vertices in $S$, with an edge for each directed shortest path with at most $\frac{n}{|S|}$ hops found during the BFS in line~\ref{alg:ksssp:sampbfs}. \rightComment{Local computation.} \label{alg:ksssp:skeleton}
        \State Construct an $(h,\epsilon)$-directed hopset on the skeleton graph (parameter $h$ to be set as in Note~\ref{note:ksssp}). This takes $\tilde{O}(\frac{|S|^2}{h^2} + h^{1+o(1)}D)$ rounds, for $h \le \sqrt{|S|}$ using the hopset algorithm of~\cite{cao2021approximatesssp}. \label{alg:ksssp:hopset}
        \State Compute a $k$-source $h$-hop approximate SSSP on the skeleton graph, with the set $U$ as sources, treating $h$-hop paths from $U$ to $S$ (computed in line~\ref{alg:ksssp:kbfs}) as a single edge: each sampled vertex $s$ now knows $(1+\epsilon)$-approximate distance $d(u,s)$ for each $u \in U$. \\ \rightComment{Uses Lemma~\ref{lem:skeleton}.B.} \label{alg:ksssp:skeletonapsp} 
        \State Each sampled vertex $s$ propagates the distance $d(u,s)$ for each $u \in U$ through the BFS tree rooted at $s$ computed in line~\ref{alg:ksssp:sampbfs}. Using randomized scheduling~\cite{ghaffarischeduling}, this takes $\tilde{O}(\frac{n}{|S|}+k|S|)$ rounds. All vertices $v$ reached by the BFS now compute their $(1+\epsilon)$-approximate distance from source $u$ as $d(u,v) \gets \min(d(u,v), d(u,s)+d(s,v))$.  \label{alg:ksssp:propbfs}
    \end{algorithmic}
    \label{alg:ksssp}
\end{algorithm}

\begin{proof}[Proof of Theorem~\ref{thm:ksssp}]
    We present our algorithm for approximate shortest paths in directed unweighted graphs in Algorithm~\ref{alg:ksssp}. The correctness follows from arguments similar to Lemma~\ref{lem:n3bfs}. The main difference is how we compute SSSP in the skeleton graph -- we use a hopset construction in this algorithm while we broadcast all skeleton graph edges in the $n^{1/3}$-source SSSP algorithm. We construct a $(h,\epsilon)$-directed hopset in line~\ref{alg:ksssp:hopset}, so in line~\ref{alg:ksssp:skeletonapsp} the $h$-hop BFS computes $(1+\epsilon)$-approximate shortest paths in the skeleton graph. As in Lemma~\ref{lem:n3bfs}, propagating distances in line~\ref{alg:ksssp:propbfs} is sufficient to compute approximate shortest path distances to all vertices by our sampling probability.
    
    We now analyze the number of rounds. The algorithm samples a vertex set $S$ (whose size is a parameter to be fixed later) and performs BFS computations restricted to $\frac{n}{|S|}$ hops. On the skeleton graph built on these sampled vertices, we construct a $(1+\epsilon)$-approximate hopset with hop bound $h$ (this parameter is fixed later). We use the approximate hopset algorithm of~\cite{cao2021approximatesssp}, which constructs a $\left(\frac{|S|^{1/2+o(1)}}{\rho}, \epsilon\right)$-hopset in $O(|S|\frac{\rho^2}{\epsilon^2}\log W + \frac{|S|^{1/2+o(1)}D}{\rho \epsilon}\log W)$ rounds. We denote the hop bound by $h=\frac{|S|^{1/2}}{\rho}$, and get a round complexity of $O\left(\left( |S| \cdot \frac{|S|}{h^2} +h^{1+o(1)}D \right) \cdot \frac{\log W}{\epsilon^2}\right)$. Computing $k$-source approximate SSSP after this requires $\tilde{O}(k|S|+h^{1+o(1)}D)$ rounds using Lemma~\ref{lem:skeleton}. The total round complexity of the algorithm is $\tilde{O}(\frac{n}{|S|} + \frac{|S|^2}{h^2} + h^{1+o(1)}D +k|S|)$. We set the parameters $|S|$ and $h$ based on the values of $D$ and $k$ as shown in Note~\ref{note:ksssp}.

    Note that for any setting of the parameter $|S|$, the quantity $\frac{n}{|S|}+k|S|$ is at least $\sqrt{nk}$. For $k \ge n^{1/3}$, the parameter setting $|S|=\sqrt{\frac{n}{k}} , h=1$ achieves round complexity $\tilde{O}(\sqrt{nk} + D + \frac{n}{k})$ which is $\tilde{O}(\sqrt{nk} + D)$ since $\frac{n}{k} \le \sqrt{nk}$ for $k \ge n^{1/3}$. We also note that $D$ is a lower bound for any $k$-source SSSP algorithm. Hence, this round complexity is the best we can achieve for any parameter choice for $k\ge n^{1/3}$. Note that $h=1$ essentially means all pairs shortest paths are computed at all vertices, recovering the algorithm in Section~\ref{sec:multiplesssp}.

    For $k<n^{1/3}$, we need to set parameters based on $D$ to minimize the round complexity $\tilde{O}(\frac{n}{|S|} + \frac{|S|^2}{h^2} + h^{1+o(1)}D +k|S|)$, with the constraints $1 \le |S| \le n, 1 \le h \le \sqrt{|S|}$.
    \begin{enumerate}
        \item When $D$ is small ($D$ is $o(n^{1/4}k^{3/4})$), the term $\frac{|S|^2}{h^2} + h^{1+o(1)}D$ is minimized by setting $h$ to its maximum value $\sqrt{S}$. With this choice of $h$, the round complexity is $\tilde{O}(\frac{n}{|S|} + |S| + \sqrt{|S|}D +k|S|)$, which is minimized by setting $|S|=\sqrt{\frac{n}{k}}$ assuming $\sqrt{|S|}D$ is small compared to the other terms. For $D < n^{1/4}k^{3/4}$, we have $ \frac{n^{1/4}}{k^{1/4}} D < \sqrt{nk}$, and the total round complexity is $\tilde{O}(\sqrt{nk})$.
        
        \item When $D$ is large($D$ is $\Omega(n^{2/3})$), we minimize the term $h^{1+o(1)}D$ by setting $h=1$ and the resulting round complexity expression is $\tilde{O}(\frac{n}{|S|} + |S|^2 + D +k|S|)$. Since we are only concerned with the case $k< n^{1/3}$, we minimize the expression $\frac{n}{|S|} + |S|^2$ by setting $|S|=n^{1/3}$ giving round complexity $\tilde{O}(n^{2/3}+D+kn^{1/3})$. For the case $k<n^{1/3}, D>n^{2/3}$, this is $\tilde{O}(D)$.
        
        \item For intermediate $D$, i.e , $n^{1/4}k^{3/4} < D < n^{2/3}$, we balance terms $\frac{|S|^2}{h^2}$, $\frac{n}{|S|}$ and $h^{1+o(1)}D$, giving the parameters $|S| = \frac{n^{3/5}}{D^{2/5}}, h = \frac{n^{2/5}}{D^{3/5}}$, and the round complexity expression is $\tilde{O}(n^{2/5+o(1)}D^{2/5} + k\frac{n^{3/5}}{D^{2/5}})$. For $D>n^{1/4}k^{3/4}$, we have $k\frac{n^{3/5}}{D^{2/5}} \le kn^{3/5}D^{2/5} \cdot \frac{1}{D^{4/5}} \le kn^{3/5}D^{2/5} \frac{1}{n^{1/5}k^{3/5}} = k^{2/5}n^{2/5}D^{2/5}$. So, we can rewrite our round complexity as $\tilde{O}(n^{2/5+o(1)}D^{2/5} + k^{2/5}n^{2/5}D^{2/5})$
    \end{enumerate}

    Our final round complexity is $\tilde{O}(\sqrt{nk} + D)$ if $k \ge n^{1/3}$ and $\tilde{O}(\sqrt{nk} + k^{2/5}n^{2/5+o(1)}D^{2/5} + D)$ if $k<n^{1/3}$. This round complexity is sublinear whenever $k$ and $D$ are sublinear, and beats the $\tilde{O}(n)$ round APSP algorithm for $k=o(n)$. For $k=1$, we recover the current best SSSP algorithm taking $\tilde{O}(\sqrt{n}+n^{2/5+o(1)}D^{2/5} + D)$ rounds~\cite{cao2023sssp}, hence our algorithm smoothly interpolates between SSSP and APSP algorithms for $k=1$ and $k=n$ respectively.

    \paragraph*{\textbf{Weighted Graphs}}
    We extend the multiple source directed BFS algorithm described in Algorithm~\ref{alg:n3sssp} to approximate multiple source SSSP by using scaling. We use the algorithm of~\cite{nanongkai2014approx} stated in Fact~\ref{fact:approxhopsssp} to compute $(1+\epsilon)$-approximate $h$-hop SSSP from $k$ sources in $\tilde{O}(h+k+D)$ rounds, and the approximate SSSP algorithm for the skeleton graph described in Lemma~\ref{lem:skeleton}.B. Thus, the BFS in lines~\ref{alg:ksssp:sampbfs},\ref{alg:ksssp:kbfs} takes $\tilde{O}(|S|+\frac{n}{|S|})$ and $\tilde{O}(k+\frac{n}{|S|})$ rounds respectively. The distances $d(s,v)$,$d(u,s)$ computed in lines~\ref{alg:ksssp:kbfs},\ref{alg:ksssp:skeletonapsp} are still $(1+\epsilon)$-approximations of the shortest path distance, and hence the final $d(u,v)$ distances computed in line~\ref{alg:ksssp:propbfs} are also $(1+\epsilon)$-approximations. Thus. we compute $k$-source weighted $(1+\epsilon)$-approximate SSSP with the same round complexity as in unweighted graphs up to $\log$ factors.
\end{proof}

\section{Conclusion and Open Problems}
\label{sec:conclusion}

We have presented several CONGEST upper and lower bounds for computing approximate MWC in directed and undirected graphs,  both weighted and unweighted. Here are some topics for further research. 

For $(2-\epsilon)$-approximation of girth, a $\tilde{\Omega}(\sqrt{n})$ lower bound is known~\cite{frischknecht2012} which we nearly match with a $\tilde{O}(\sqrt{n}+D)$ for $(2-(1/g))$-approximation. For larger approximations, we present lower bounds of $n^{1/3}$ for $(2.5-\epsilon)$-approximation and $n^{1/4}$ for arbitrarily large constant $\alpha$-approximation. The current best upper bound for girth is still $O(n)$~\cite{holzer2012apsp, censorhillel2020girth}, and the CONGEST complexity of exact girth remains an open problem.

We have studied larger constant $(\alpha \ge 2)$ approximation of MWC in directed graphs and weighted graphs, and presented sublinear algorithms for 2-approximation ($(2+\epsilon)$ for weighted graphs) beating the linear lower bounds for $(2-\epsilon)$-approximation. Our results include: for directed unweighted MWC, $\tilde{O}(n^{4/5}+D)$-round 2-approximation algorithm; for directed weighted MWC,  $(2+\epsilon)$-approximation algorithm with the same $\tilde{O}(n^{4/5}+D)$ complexity; for undirected weighted MWC, $\tilde{O}(n^{2/3}+D)$-round algorithm for $(2+\epsilon)$-approximation. For these three graph types, we showed lower bounds of $\tilde{\Omega}(\sqrt{n})$ for any $\alpha$-approximation of MWC, for arbitrarily large constant $\alpha$. Whether we can bridge these gaps between upper and lower bounds, or provide tradeoffs between round complexity and approximation quality is a topic for further research.
    
Our approximation algorithms for weighted MWC (directed and undirected) are based on scaling techniques, which introduce an additional multiplicative error causing our algorithms to give $(2+\epsilon)$-approximation instead of the 2-approximation obtained in the unweighted case. The main roadblock in obtaining a 2-approximation is an efficient method to compute exact SSSP from multiple sources, on which we elaborate below. 

When $k\ge n^{1/3}$, we have presented a fast and streamlined $\tilde{O}(\sqrt{nk}+D)$-round algorithm for $k$-source exact directed BFS, where the key to our speedup is sharing shortest path computations from different sources using skeleton graph constructions. Using scaling techniques, we extended this to an algorithm for $k$-source directed SSSP in weighted graphs, but only for $(1+\epsilon)$-approximation.
While there have been recent techniques for a single source to compute exact SSSP from approximate SSSP algorithms~\cite{cao2023sssp,rozhon22sssp} building on~\cite{klein1997randomized}, it is not clear how to extend them to multiple sources seems difficult. 
These techniques involve distance computations on graphs whose edge-weights depend on the source. As a result, we can no longer construct a single weighted graph where we can share shortest computations for $k$ sources. Providing an exact $k$-source SSSP algorithm that matches the round complexity of our $k$-source approximate weighted SSSP algorithm is a topic for further research.

For undirected weighted graphs, a $k$-source approximate SSSP that runs in $\tilde{O}((\sqrt{nk}+D)\cdot n^{o(1)})$ rounds (and $\tilde{O}(\sqrt{nk}+D)$ for $k=n^{\Omega(1)}$) is given in~\cite{elkin2019hopset}. For directed weighted graphs, we match this bound with a simpler algorithm when $k \ge n^{1/3}$, that works for both directed and undirected graphs. When $k < n^{1/3}$, our algorithm takes $\tilde{O}(\sqrt{nk} + k^{2/5}n^{2/5+o(1)}D^{2/5} + D)$. This difference is mainly because the directed hopset construction of~\cite{cao2021approximatesssp} has higher round complexity than the undirected case~\cite{elkin2019hopset}, and it is an open problem whether this gap can be closed.

\appendix

\section{Approximate Undirected Unweighted MWC}
\label{app:undirmwc}
We present detailed pseudocode and proofs for the algorithm described in Section~\ref{sec:undirunwtmwcub} for $(2-\frac{1}{g})$-approximation of undirected unweighted MWC (girth).

\label{app:undirunwmwcub}
\begin{algorithm}[t]
    \caption{$(2-\frac{1}{g})$-Approximation Algorithm for Undirected Unweighted MWC (Girth)}
    \begin{algorithmic}[1]
        \Require Undirected unweighted graph $G(V,E)$
        \Ensure $M$, $(2-\frac{1}{g})$-approximation of MWC of $G$ where $g$ is the length of MWC.
        \State $\forall v \in V, M_v \gets \infty$. \rightComment{$M_v$ denotes the best cycle found so far at vertex $v$.}
        \State Construct set $S \subseteq V$ by sampling each vertex $v \in V$ with probability $\Theta(\frac{\log n}{ \sqrt{n}})$. \label{alg:undirunwmwc:sample}
        \ForAll{vertex $w \in S$}
            \lineComment{Compute MWC through sampled vertices}
            \State Perform BFS starting from $w$: Every vertex $z \in V$ receives a message from $w$ containing distance value $d(w,z)$. \label{alg:undirunwmwc:bfs}
            \State For each $z \in V$ for which $z$ receives two messages for the BFS rooted at $w$ from two distinct neighbors $x$,$y$ with distance values $d_1 = d(w,x)+1$ and $d_2 = d(w,y)+1$, vertex $z$ locally records a cycle of weight $d_1 + d_2$, i.e. $M_z \gets \min (M_z, d_1+d_2)$. \label{alg:undirunwmwc:bfscyc}
        \EndFor
        \ForAll{vertex $v \in V$}
            \lineComment{Compute approximate MWC within neighborhood of $v$}
            \State Compute shortest path distance from $v$ to its $\sqrt{n}$ closest vertices ,i.e. the $\sqrt{n}$-neighborhood of $v$ denoted $Q(v)$, using a source-detection algorithm~\cite{lenzen2019distributed}: $v$ receives a message from each $z \in Q(v)$ containing distance value $d(v,z)$. Vertex $v$ also remembers its neighbor $u$ that forwarded this message from each such $z \in Q(v)$, and we denote $p(v,z) = u$. \label{alg:undirunwmwc:neighb}
            \State For each $z \in Q(v)$ for which $v$ receives two messages from two distinct neighbors $x$,$y$ with distance values $d_1 = d(x,z)+1$ and $d_2 = d(y,z)+1$, vertex $v$ records a cycle of weight $d_1+d_2$, i.e, $M_v \gets \min(M_v, d_1+d_2)$.  \label{alg:undirunwmwc:neighbcyc} \rightComment{Compute cycles through $v$ completely contained in $Q(v)$}
            \State Send $\sqrt{n}$-neighborhood information $\{(z, d(v,z), p(v,z)) \mid z \in Q(v)\}$ to neighbors $N(v)$. \label{alg:undirunwmwc:neighbsend} 
            \lineComment{Lines \ref{alg:undirunwmwc:neighbcyc2},\ref{alg:undirunwmwc:neighbcyc3} locally compute cycles through $v$ with exactly one vertex outside $Q(v)$, using distances sent to $v$ in line~\ref{alg:undirunwmwc:neighbsend}.}
            \State Receives $\sqrt{n}$-neighborhood information from each neighbor $u$, and locally checks if there is a vertex $z$ such that $z \in Q(v)$ and $z \in Q(u)$ and $p(v,z) \ne u, p(u,z) \ne v$. If so. $v$ records a cycle: $M_v \gets \min(M_v, d(v,z) + d(u,z) + 1)$. \label{alg:undirunwmwc:neighbcyc2}
            \State Locally checks if there are a pair of neighbors $x,y \in N(v)$ and a vertex $z$ such that $z \in Q(x)$, $z \in Q(y)$ and $p(x,z) \ne p(y,z), p(x,z) \ne y, p(y,z) \ne x$. If so, $v$ records a cycle: $M_v \gets \min(M_v, d(x,z) + d(y,z) + 2)$. \label{alg:undirunwmwc:neighbcyc3} 
        \EndFor
        \State Compute $M = \min_{z \in V} M_z$ by a convergecast operation. \label{alg:undirunwmwc:globalmin}
    \end{algorithmic}
    \label{alg:undirunwmwcub}
\end{algorithm}

\begin{lemma}
    Algorithm~\ref{alg:undirunwmwcub} computes a $(2-\frac{1}{g})$-approximation of MWC of $G$, where $g$ is the weight of the MWC, in $\tilde{O}(\sqrt{n}+D)$ rounds.
    \label{lem:undirunwmwcub}
\end{lemma}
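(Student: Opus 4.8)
The plan is to bound the returned value $M$ from both sides. Every update to some $M_v$ in Algorithm~\ref{alg:undirunwmwcub} (lines~\ref{alg:undirunwmwc:bfscyc}, \ref{alg:undirunwmwc:neighbcyc}, \ref{alg:undirunwmwc:neighbcyc2}, \ref{alg:undirunwmwc:neighbcyc3}) records the length of an actual closed walk through $v$ obtained by gluing two paths that leave $v$ (or its chosen endpoints) along distinct edges, so $M$ is at least the girth $g$; the whole difficulty is the upper bound $M \le (2-\tfrac1g)g = 2g-1$. Fix a minimum weight cycle $C$ of length $g$. For each $v$, recall $Q(v)$ is the set of the $\sqrt n$ closest vertices; since $|Q(v)| = \sqrt n$ and each vertex is sampled with probability $\Theta(\log n/\sqrt n)$, w.h.p.\ $Q(v)$ contains a sampled vertex of $S$. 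I would split on how much of $C$ escapes the neighborhoods of its own vertices: either (Case A) some $v\in C$ has \emph{at most one} vertex of $C$ outside $Q(v)$, or (Case B) \emph{every} $v\in C$ has \emph{at least two} vertices of $C$ outside $Q(v)$.

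For Case B, fix any $v\in C$ and a sampled $w\in Q(v)$, and write $\rho=\max_{z\in Q(v)}d(v,z)$ for the neighborhood radius, so $d(v,w)\le\rho$ while each of the two vertices of $C$ outside $Q(v)$ is at distance $\ge\rho$ from $v$. Since at most one vertex of $C$ can sit at the maximum cycle-distance $\lfloor g/2\rfloor$ from $v$ (one antipode when $g$ is even; two vertices at distance $(g-1)/2$ when $g$ is odd), having two vertices of $C$ outside $Q(v)$ forces $\rho\le\lfloor(g-1)/2\rfloor$, hence $d(v,w)\le\lfloor(g-1)/2\rfloor$. The BFS rooted at $w$ (lines~\ref{alg:undirunwmwc:bfs}--\ref{alg:undirunwmwc:bfscyc}) records, at the collision point of its two fronts on $C$, a cycle of length at most $g+2\,d(w,C)$; as the closest point $a$ of $C$ to $w$ satisfies $d(w,a)\le d(w,v)\le\lfloor(g-1)/2\rfloor$, the recorded length is at most $g+2\lfloor(g-1)/2\rfloor\le 2g-1$, as required.

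Case A is where the $\tfrac1g$ improvement is actually earned and is the main obstacle, and here I would show the relevant cycle is recovered exactly, giving $M\le g$. If all of $C$ lies in $Q(v)$, then because $C$ is a shortest cycle the arc of $C$ from $v$ to its antipode (or antipodal edge) is a shortest path, so the vertex where the two source-detection branches of line~\ref{alg:undirunwmwc:neighb} collide records length exactly $g$ in line~\ref{alg:undirunwmwc:neighbcyc}; the even/odd distinction (fronts meeting at a vertex versus across an edge) has to be handled carefully so that neither parity is missed. If exactly one vertex $z$ of $C$ is outside $Q(v)$, then $z$ is the antipode and its two $C$-neighbors lie in $Q(v)$ with $z$ belonging to their neighborhoods (distance $1$); the forwarding-pointer tests $p(\cdot)\ne\cdot$ in lines~\ref{alg:undirunwmwc:neighbcyc2},\ref{alg:undirunwmwc:neighbcyc3} are designed precisely to certify that the two half-paths depart along distinct edges, so that the cycle closing through $z$ is recorded with its exact length. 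Checking that these pointer conditions never miss $C$ and never record a degenerate walk of underestimated length is the delicate part, and I would assign the two auxiliary lines respectively to the case where the escaping vertex is adjacent to $v$ and to the case where it is the far antipode reached through two neighbors of $v$.

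For the round complexity, the sampling (line~\ref{alg:undirunwmwc:sample}) is local; the BFS from the $\tilde O(\sqrt n)$ sampled sources (lines~\ref{alg:undirunwmwc:bfs}--\ref{alg:undirunwmwc:bfscyc}) is a multi-source BFS costing $O(|S|+D)=\tilde O(\sqrt n+D)$ rounds; the $\sqrt n$-neighborhood computation (line~\ref{alg:undirunwmwc:neighb}) is a source-detection call in $O(\sqrt n+D)$ rounds; the exchange with neighbors (line~\ref{alg:undirunwmwc:neighbsend}) pipelines $O(\sqrt n)$ words across each edge in $\tilde O(\sqrt n)$ rounds; all cycle bookkeeping (lines~\ref{alg:undirunwmwc:neighbcyc}--\ref{alg:undirunwmwc:neighbcyc3}) is local; and the final minimum (line~\ref{alg:undirunwmwc:globalmin}) is a convergecast in $O(D)$ rounds. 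Summing yields the claimed $\tilde O(\sqrt n+D)$ bound.
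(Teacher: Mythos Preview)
Your Case~B argument and the running-time analysis are essentially the paper's. The problem is Case~A: your quantifiers are inverted relative to what the algorithm actually needs, and this breaks the exact-recovery claim.

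You split into ``\emph{some} $v\in C$ has $|C\setminus Q(v)|\le 1$'' (A) versus ``\emph{every} $v$ has $|C\setminus Q(v)|\ge 2$'' (B), and try to show the cycle is found \emph{exactly} in Case~A. The paper does the opposite: it assumes ``\emph{every} $v$ has $|C\setminus Q(v)|\le 1$'' for the exact case, and ``\emph{some} $v$ has $|C\setminus Q(v)|\ge 2$'' (for even $g$; $\ge 1$ for odd $g$) for the approximate case. This matters because lines~\ref{alg:undirunwmwc:neighbcyc2} and~\ref{alg:undirunwmwc:neighbcyc3} test membership of the far vertex $z$ in $Q(x),Q(y)$ where $x,y$ are \emph{neighbours of $v$}, not neighbours of $z$. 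Knowing $C\subseteq Q(v)$ for a single $v$ does \emph{not} give $z\in Q(y)$ when $y$ is not on a shortest $v$--$z$ path (which happens for odd $g$ and for the one-vertex-escapes subcase); the paper obtains $z\in Q(y)$ precisely from the stronger ``for every $v$'' hypothesis, arguing that otherwise $y\in C$ would have two vertices outside $Q(y)$. Your remark that $z$'s $C$-neighbours have $z$ in their neighbourhoods ``(distance~1)'' is true but irrelevant: the check is run at $v$, not at a neighbour of $z$. Likewise, your assignment of line~\ref{alg:undirunwmwc:neighbcyc2} to ``the escaping vertex is adjacent to $v$'' does not match the algorithm; the paper uses line~\ref{alg:undirunwmwc:neighbcyc2} for the odd-$g$ case and line~\ref{alg:undirunwmwc:neighbcyc3} for the even-$g$ one-vertex-escapes case, and treats even and odd $g$ separately throughout.

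The fix is to adopt the paper's quantifiers. For even $g=2r$: if every $v\in C$ has $|C\setminus Q(v)|\le 1$ you get $g$ exactly (lines~\ref{alg:undirunwmwc:neighbcyc},\ref{alg:undirunwmwc:neighbcyc3}); otherwise some $v$ has two escapees, the sampled $w\in Q(v)$ satisfies $d(v,w)\le r-1$, and the BFS from $w$ records $\le 2g-2$. For odd $g=2r+1$: if every $v\in C$ has $C\subseteq Q(v)$ you get $g$ exactly (line~\ref{alg:undirunwmwc:neighbcyc2}); otherwise some $v$ has an escapee, $d(v,w)\le r$, and the BFS records $\le 2g-1$. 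The scenario your Case~A covers but the paper's exact case does not (some $v$ good, some $v'$ bad) is handled by the paper's approximate case via $v'$, not by exact recovery via $v$.
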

\begin{proof}
    \noindent
    \textbf{{Correctness:}}
    Let $C$ be a minimum weight cycle of $G$ with weight $w(C)=g$. For vertex $v$, $Q(v)$ denotes the $\sqrt{n}$-neighborhood of $v$, i.e., set of $\sqrt{n}$ closest vertices with ties broken by a fixed global ordering. We consider the cases where $C$ has even or odd weight separately.
    
    \vspace{0.2cm}
    \noindent {\boldmath\textbf{\textit{Case 1: $C$ has even weight}}}, $w(C) = g = 2r$. We have three cases:
    \vspace{0.2cm}
    
    {\boldmath\textbf{\textit{Case 1.a: For every $v \in C$, $C$ is contained entirely within $Q(v)$ :}}} Choose a vertex $v \in C$, and consider the vertex $z$ in $C$ that is furthest from $v$, we know $z \in Q(v)$ and $d(v,z)=r$. Let vertices $x,y$ denote the neighbors of $v$ in $C$. Given that $x$ is on a shortest path from $z$ to $v$, $z$ must be in the $\sqrt{n}$-neighborhood $Q(x)$ of $x$. Otherwise, we have $\sqrt{n}$ vertices other than $z$ at distance $\le d(v,z)-1$ from $x$ which means these $\sqrt{n}$ vertices are at distance $\le d(v,z)$ from $v$ contradicting $z \in Q(v)$ (these other vertices are before $z$ in global ordering in case of tied distance). Thus, we have $z \in Q(x)$ and $z \in Q(y)$ and in line~\ref{alg:undirunwmwc:neighbcyc}, $v$ would receive messages from $x$ and $y$ for source $z$ with distance value $r$, and $v$ records $M_v = 2r = g$ .
    
    {\boldmath\textbf{\textit{Case 1.b: For every $v \in C$, $C$ has at most one vertex outside $Q(v)$:}}} Consider a vertex $v \in C$ with vertex $z \in C$ outside $Q(v)$. Note that this $z$ must necessarily be the furthest vertex from $v$, at a distance $r-1$ from it. Consider the neighbors $x$ and $y$ of $v$ in $C$. We must have $z \in Q(x)$ as otherwise both $z$ and the furthest vertex from $x$ (that is distinct from $z$ as $C$ is an even cycle) are outside $Q(x)$. Similarly $z \in Q(y)$, satisfying the condition in line~\ref{alg:undirunwmwc:neighbcyc3} where $v$ receives distances $d(x,z)=(r-1)$ and $d(y,z)=(r-1)$ and records $M_v = 2r = g$.
    
    {\boldmath\textbf{\textit{Case 1.c: For some $v \in C$, $C$ has more than one vertex outside $Q(v)$:}}} In this case, $Q(v)$ contains $\sqrt{n}$ vertices, as otherwise it would contain all vertices connected to $v$. Since we sample $S$ with probability $\Theta(\frac{\log n}{ \sqrt{n}})$, w.h.p. in $n$, $Q(v)$ contains a sampled vertex $w \in S$. By our assumption, let $z,z'$ be two vertices on $C$ such that $z,z' \not\in Q(v)$ with $d(v,z) < d(v,z') \le r$ (WLOG). Since $w$ is one of the $\sqrt{n}$ closest vertices to $v$ but $z$ is not, we have $d(w,v) \le d(v,z) \le r-1$. 
    
    Now, consider the BFS tree rooted at $w$ computed in line~\ref{alg:undirunwmwc:bfs}. The cycle $C$ contains some non-tree edge $(x',y')$ of this BFS tree. Assuming WLOG $d(x',w) \ge d(y',w)$, $x'$ receives two messages from $w$ from distinct neighbors with distance values $d(w,x')$ and $d(w,y')+1$, and $x'$ records a cycle $M_{x'} \le  d(w,x')+d(w,y')+1$ in line~\ref{alg:undirunwmwc:bfscyc}. We have $d(w,x') \le d(w,v) + d(v,x') \le r+(v,x')$ and similarly $d(w,y') \le r+(v,x')$. Since $v,x',y'$ are all on the cycle $C$ of length $2r$, we have $d(v,x') + d(v,y') + 1 \le 2r$. Substituting, we get $M_{x'} \le d(v,x') + d(v,y') + 1 + 2(r-1) \le 4r-2 \le 2g-2$. So, a cycle of length at most $2g-2$ is recorded at $x'$.

    \vspace{0.2cm}
    \noindent {\boldmath\textbf{\textit{Case 2. $C$ has odd weight}}}, $w(C) = g = 2r+1$. We have two cases:
    \vspace{0.2cm}
    
    {\boldmath\textbf{\textit{Case 2.a: For every $v \in C$, $C$ is contained entirely within $Q(v)$:}}} Let vertex $z$ be one of the vertices in $C$ furthest from $v$, we have $z \in Q(v)$ and $d(v,z)=r$. Let $u$ be the neighbor of $v$ in $C$ that is further from $z$, and we have $d(u,z)=r$. With our assumption $z \in Q(u)$, and we satisfy the condition of line~\ref{alg:undirunwmwc:neighbcyc2}, and $v$ records a cycle $M_v = d(v,z) +d(u,z) + 1 = 2r+1 = g$. 
    
    {\boldmath\textbf{\textit{Case 2.b: For some $v \in C$, $C$ has a vertex outside $Q(v)$:}}} With the same reasoning as case \textbf{1.c}, $Q(v)$ contains at least $\sqrt{n}$ vertices and hence $Q(v)$ contains a sampled vertex $w \in S$. Since we have a vertex $z \not\in Q(v)$, we have $d(w,v) \le d(z,v) \le r$. Consider the edge $(x',y')$ in $C$ that is a non-tree edge of the BFS tree rooted at $w$ computed in line~\ref{alg:undirunwmwc:bfs}, and $x'$ records a cycle of length $M_x' \le d(w,x')+d(w,y')+1 \le 2r+1 + 2d(w,v) \le 4r + 1 = 2g-1$ in line~\ref{alg:undirunwmwc:bfscyc}.

    \vspace{0.3cm}
    We also prove that any update we make to $M_z$ corresponds to at least the weight of some simple cycle in $G$. In line~\ref{alg:undirunwmwc:neighbcyc}, we have a cycle made of two shortest paths $x$-$z$ and $y$-$z$ along with edges $(x,v)$ and $(y,v)$ with $v \ne z$ and $x \ne y$, which means there is a simple cycle of weight at most $d(x,z) + d(y,z) + 2$. In line~\ref{alg:undirunwmwc:neighbcyc2}, we have a cycle made of two shortest paths $v$-$z$ and $u$-$z$ along with edge $(v,u)$ with $v \ne u$ and neither $v$ nor $u$ are on the shortest path from $z$ to $u$ or $v$ respectively. So, there is a simple cycle of weight at most  $d(v,z) + d(u,z) + 1$. A similar argument holds for cycles computed in line~\ref{alg:undirunwmwc:neighbcyc3}. In line~\ref{alg:undirunwmwc:bfscyc}, we have two shortest paths $x$-$w$ and $y$-$w$ along with edges $(x,z),(y,z)$ with $x \ne y$. Note that we allow $x=w$ or $y=w$, and we still have a cycle of weight at most $d(w,x)+d(w,y)+2$. Thus, the returned value $M$ is at least the weight of some simple cycle in $G$. Combining the even and odd cycle cases, we compute a cycle of weight at most $2g-1$, which is a $(2-\frac{1}{g})$-approximation of MWC.

    \noindent
    \textbf{{Running Time:}}
    Now, we show that Algorithm~\ref{alg:undirunwmwcub} runs in  $\tilde{O}(\sqrt{n}+D)$ rounds. The set $S$ computed in line~\ref{alg:undirunwmwc:sample} has size $\tilde{O}(\sqrt{n})$ w.h.p. in $n$  and the computation in line~\ref{alg:undirunwmwc:bfs} is an $|S|$-source BFS of at most $D$ hops, which takes $O(|S|+D) = \tilde{O}(\sqrt{n}+D)$ rounds~\cite{lenzen2019distributed}. Line~\ref{alg:undirunwmwc:bfscyc} only involves local computations.
    
    We implement line~\ref{alg:undirunwmwc:neighb} using the source detection algorithm in~\cite{lenzen2019distributed}, which computes the $R$ closest sources within $h$-hops for each vertex in $O(R+h)$ rounds. We choose our set of sources to be the entire graph, and choose $R=\sqrt{n}$, $h=D$. We also note that the message containing distance $d(v,z)$ is passed to $v$ by the neighbor of $v$ on a $v$-$z$ shortest path, so $v$ can keep track of $p(v,z)$. Thus, we implement line~\ref{alg:undirunwmwc:neighb} in $O(\sqrt{n}+D)$ rounds. Line~\ref{alg:undirunwmwc:neighbcyc} is just a local computation, and line~\ref{alg:undirunwmwc:neighbsend} sends $O(\sqrt{n})$ information along each edge taking $O(\sqrt{n})$ rounds. Lines~\ref{alg:undirunwmwc:neighbcyc2},\ref{alg:undirunwmwc:neighbcyc3} are local computations after information sent by neighbors in line~\ref{alg:undirunwmwc:neighbsend} has been received.

    Finally, for the global minimum computation in line~\ref{alg:undirunwmwc:globalmin}, we use a convergecast operation taking $O(D)$ rounds~\cite{peleg2000distributed}. Thus, the total round complexity of our algorithm is $\tilde{O}(\sqrt{n}+D)$.
\end{proof}

We now prove Corollary~\ref{thm:hoplimapproxundirmwcub} (repeated below) for computing $h$-hop limited MWC in stretched unweighted graphs. Recall that the stretched unweighted graph $G^s$ of a weighted graph $G$ is obtained by replacing each weight-$w$ edge in $G$ with an unweighted path of $w$ edges.

\begin{corollary}
    Given a network $G=(V,E)$ with edge weights, we can compute a $(2-1/g)$-approximation of $h$-hop limited MWC of $G^s$ (stretched unweighted graph of $G$) in $\tilde{O}(\sqrt{n}+h+R_{cast})$ rounds, where $g$ is the $h$-hop limited MWC value and $R_{cast}$ is the round complexity of a convergecast operation in $G$.
\end{corollary}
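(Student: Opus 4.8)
The plan is to run the girth algorithm of Lemma~\ref{lem:undirunwmwcub} (Algorithm~\ref{alg:undirunwmwcub}) on the stretched graph $G^s$, but simulated on the network $G$ and with every search capped at $O(h)$ hops. First I would fix the simulation: for each weighted edge $\{a,b\}$ of weight $w$ in $G$, the endpoint of smaller identifier locally holds the $w-1$ internal vertices of the corresponding $G^s$-path, so that only the last edge of each path is a genuine network edge. Every BFS or source-detection wave of Algorithm~\ref{alg:undirunwmwcub} is then executed over $G^s$ by advancing the wave locally through the simulated internal vertices (incrementing the virtual hop-distance by one per internal vertex) and transmitting only across genuine edges; pushing a wave through a simulated length-$w$ path costs $w$ units of virtual time but no communication until the genuine edge at the path's far end.

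Next I would cap each BFS and source detection at $O(h)$ virtual hops and take the $\sqrt n$-neighborhood $Q(v)$ over the \emph{original} vertices of $G$ (those within $h$ hops), sampling each original vertex with probability $\Theta(\log n/\sqrt n)$ so that only $\tilde O(\sqrt n)$ sources are created. Correctness for $h$-hop limited MWC then follows by re-reading the case analysis of Lemma~\ref{lem:undirunwmwcub} with ``$C$ escapes $Q(v)$'' reinterpreted as ``some original vertex of the underlying $G$-cycle lies outside $Q(v)$''. Whenever the target cycle $C$ has at most $h$ hops in $G^s$, every distance appearing in that analysis is at most $g\le h$: the apex of $C$ lies within $g/2$ hops of any vertex of $C$, and a sampled original vertex found inside a full $Q(v)$ sits within $O(h)$ hops of the non-tree edge its BFS must reach. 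Hence capping the waves at $O(h)$ hops discards nothing needed to detect, or to $2$-approximate, a cycle of at most $h$ hops, and the $(2-1/g)$ ratio of Lemma~\ref{lem:undirunwmwcub} is preserved.

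For the round complexity, the $\tilde O(\sqrt n)$ sampled sources drive a multi-source BFS and the source detection for $Q(v)$; each runs for $O(h)$ virtual hops and pipelines $\tilde O(\sqrt n)$ distance messages, so by the source-detection bound of~\cite{lenzen2019distributed} each costs $\tilde O(\sqrt n+h)$ genuine rounds, once one notes that internal vertices are advanced with zero communication and every genuine edge carries at most $\tilde O(\sqrt n)$ messages. The one step whose cost could otherwise scale with the (possibly huge) diameter of $G^s$ is the final minimization over all recorded cycle lengths; but the value $M_z$ of each internal $G^s$-vertex is held by its simulating original endpoint, so this reduces to a minimum over the original vertices of $G$ and is computed by a convergecast on a tree of $G$ in $R_{cast}=O(D)$ rounds. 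Summing the three contributions gives $\tilde O(\sqrt n+h+R_{cast})$.

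The main obstacle is reconciling the blow-up of $G^s$ (which may have $\mathrm{poly}(n)$ vertices and diameter far exceeding $D$) with the two budgets $\sqrt n$ and $R_{cast}$ that must not be exceeded. The crux is to verify that sampling and neighborhoods may be restricted to the original vertices of $G$ without breaking any case of the Lemma~\ref{lem:undirunwmwcub} analysis: because every internal $G^s$-vertex has degree two, it can never branch a search or need to own a neighborhood, so each case can be recast in terms of the original vertices together with the shared-neighborhood information of line~\ref{alg:undirunwmwc:neighbsend}. Carefully checking that a sampled original vertex still lands in $Q(v)$ exactly when the underlying $G$-cycle escapes the neighborhood, and that cycles whose apex falls in the interior of a high-weight edge are still recorded, is the delicate part of the argument.
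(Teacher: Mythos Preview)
Your proposal is correct and follows the same high-level approach as the paper: simulate $G^s$ on the network $G$ (with one endpoint holding the internal path vertices), cap the BFS and source-detection of Algorithm~\ref{alg:undirunwmwcub} at $h$ hops, and replace the final global minimum by a convergecast on $G$ costing $R_{cast}$. The paper's own proof is considerably terser --- it simply states that each step of Algorithm~\ref{alg:undirunwmwcub} is restricted to $h$ hops and that the case analysis of Lemma~\ref{lem:undirunwmwcub} can be re-verified under this restriction, deferring the simulation mechanics to the paragraph preceding the corollary.

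Where you go further than the paper is in explicitly confronting the vertex blow-up of $G^s$: you restrict both the sampling and the $\sqrt{n}$-neighborhoods $Q(v)$ to \emph{original} vertices of $G$, and justify this via the observation that internal path vertices have degree two and therefore can neither branch a BFS nor require their own neighborhood. The paper does not spell this out; it implicitly treats $n$ as $|V(G)|$ throughout and leaves the reader to check that the simulation preserves the $\tilde O(\sqrt n)$ source and message bounds. Your treatment is thus a more careful elaboration of the same argument rather than a different route, and the delicate points you flag (the apex possibly landing on an internal vertex, a sampled original vertex still lying in $Q(v)$ exactly when the cycle escapes) are precisely the checks the paper subsumes under ``we can verify that for each case \dots\ $C$ is correctly detected.''
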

\begin{proof}
    We restrict the computation performed in Algorithm~\ref{alg:undirunwmwcub} to $h$ hops. Specifically, the BFS in line~\ref{alg:undirunwmwc:bfs} is restricted to $h$ hops, which takes $\tilde{O}(\sqrt{n}+h)$ rounds since we have $\tilde{O}(\sqrt{n})$ sources. We restrict the source detection computation in line~\ref{alg:undirunwmwc:neighb} to $h$ hops, and the algorithm of~\cite{lenzen2019distributed} computes the required distances in $O(\sqrt{n}+h)$ rounds. We communicate $O(\sqrt{n})$ amount of neighborhood information in line~\ref{alg:undirunwmwc:neighbsend}, and line~\ref{alg:undirunwmwc:globalmin} takes $R_{cast}$ rounds to perform the global minimum. If $C$ is the minimum weight cycle of at most $h$ hops, we can verify that for each case in the proof of Lemma~\ref{lem:undirunwmwcub}, $C$ is correctly detected when the operations are replaced by the $h$-hop limited versions. Thus, we compute a $(2-1/g)$-approximation of $C$ in $\tilde{O}(\sqrt{n}+h+R_{cast})$ rounds.
\end{proof}

\bibliographystyle{plainurl}
\bibliography{main-mwc} 

\end{document}